\newcommand{\je}[2][]{\todo[color=pink, #1]{JE: {\small #2}}}
\newenvironment{smalltt}%
{\footnotesize\begin{alltt}{}}{\end{alltt}}
\newcommand{\smtt}[1]{{\footnotesize\texttt{#1}}}
\newcommand{\uriref}[2]{{\footnotesize
   \texttt{#2}}}
\newcommand{\blank}[1]{{\tt \small {?#1}}}
\newcommand{\litteral}[1] {\texttt{\small "#1"}}
\newcommand{\mparagraph}[1] {\noindent {\bf #1.}}
\newcommand{\furiref}[2]{#2}
\newdimen\argwidth 
\def\[[#1\]]{%
\setbox0=\hbox{$#1$}\argwidth=\wd0 
\setbox0=\hbox{$\left[\box0\right]$}\advance\argwidth by -\wd0 
\left[\kern.3\argwidth\box0\kern.3\argwidth\right]}
\newtheorem{theorem}{Theorem}
\newtheorem{corollary}{Corollary}
\newtheorem{lemma}{Lemma}
\newtheorem{definition}{Definition} 
\newtheorem{proposition}{Proposition} 
\newtheorem{example}{Example}
\thanks{Computer Science Department, Yarmouk University, Jordan}
\thanks{INRIA \& LIG, Grenoble, France}
\begin{document}

\makeRR   

\section{Introduction}

RDF (Resource Description Framework \cite{hayes2004a}) is a standard knowledge representation language dedicated to the description of documents and more generally of resources within the semantic web. 

SPARQL \cite{prudhommeaux2008a} is the standard language for querying RDF data. 
It has been well-designed for that purpose, but very often, RDF data is expressed in the framework of a schema or an ontology in RDF Schema or OWL. 

RDF Schema (or RDFS) \cite{rdfvocabulary} together with OWL \cite{owl} are two ontology languages recommended by the W3C for defining the vocabulary used in RDF graphs. 
Extending SPARQL for dealing with RDFS and OWL semantics when answering queries is thus a major issue. 
Recently, SPARQL 1.1 entailment regimes have been introduced  to incorporate RDFS and OWL semantics \cite{glimm2010b}.
We consider here the case of RDF Schema (RDFS) or rather a large fragment of RDF Schema \cite{munoz2009a}.

Query answering with regard to RDFS semantics can be specified by indicating the inference regime of the SPARQL evaluator, but this does not tell how to implement it.
It is possible to implement a specific query evaluator, embedding an inference engine for a regime or to take advantage of existing evaluators.
For that purpose, one very often transforms the data or the query to standard languages.
Two main approaches may be developed for answering a SPARQL query $Q$ modulo a schema $S$ against an RDF graph $G$: 
the eager approach transforms the data so that the evaluation of the SPARQL query $Q$ against the transformed RDF graph $\tau(G)$ returns the answer, while 
the lazy approach transforms the query so that the transformed query $\tau(Q)$ against the RDF graph $G$ returns the answers. 
These approaches are not exclusive, as shown by \cite{pan2009a}, though no hybrid approach has been developed so far for SPARQL.

There already have been proposals along the second approach. 
For instance, \cite{perez2010a} provides a query language, called nSPARQL, allowing for navigating graphs in the style of XPath. 
Then queries are rewritten so that query evaluation navigates the data graph for taking the RDF Schema into account.
One problem with this approach is that it does not preserve the whole SPARQL: not all SPARQL queries are nSPARQL queries.

Other attempts, such as SPARQ2L \cite{sparq2L} and SPARQLeR \cite{kochut2007a} are not known to address queries with respect to RDF Schema.
SPARQL-DL \cite{sirin2007a} addresses OWL but is restricted with respect to SPARQL.
\cite{kollia2013a} provides a sound and complete algorithm for implementatiing the OWL 2 DL Direct Semantics entailment regime of SPARQL 1.1.
This regime does not consider projection (SELECT) within queries and answering queries is reduced to OWL entailment.

On our side, we have independently developed an extension of SPARQL, called PSPARQL \cite{alkhateeb2009a}, which adds path expressions to SPARQL. 
Answering SPARQL queries modulo RDF Schema could be achieved by transforming them into PSPARQL queries \cite{alkhateeb2008a}. 
PSPARQL fully preserves SPARQL, i.e., any SPARQL query is a valid PSPARQL query. 
The complexity of PSPARQL is the same as that of SPARQL \cite{alkhateeb2008b}. 

Nonetheless, the transformation cannot be generally applied to PSPARQL and thus it is not generally sufficient for answering PSPARQL queries modulo RDFS \cite{alkhateeb2008a}.  
To overcome this limitation, we use an extension of PSPARQL, called CPSPARQL \cite{alkhateeb2007b,alkhateeb2008a}, that uses constrained regular expressions instead of regular expressions.

This report mainly contributes in two different ways, of different nature, to the understanding of SPARQL query answering modulo ontologies.

First, it introduces a framework in which SPARQL query answering modulo a logical theory can be expressed. 
This allows for comparing different approaches that can be used on a semantic basis. 
For that purpose, we reformulate previous work and definitions and show their equivalence. 
This also provides a unified strategy for proving properties that we use in the proof section.




Second, we show that cpSPARQL, a restriction of CPSPARQL, can express all nSPARQL queries with the same complexity. The advantage of using CPSPARQL is that, contrary to nSPARQL, it is a strict extension of SPARQL and cpSPARQL graph patterns are a strict extension of SPARQL graph patterns as well as a strict extension of PSPARQL graph patterns. 
Hence, using a proper extension of SPARQL like CPSPARQL is preferable to a restricted path-based language.
In particular, this allows for implementing the SPARQL RDFS entailment regime.

In order to compare cpSPARQL and nSPARQL, we adopt in this paper a notation similar to nSPARQL, i.e., adding XPath axes, which is slightly different from the original CPSPARQL syntax presented in \cite{alkhateeb2007b,alkhateeb2008a}. After presenting the syntax and semantics of both nSPARQL and CPSPARQL, we show that:
\begin{itemize*}
\item CPSPARQL can answer full SPARQL (and CPSPARQL) queries modulo RDFS (Section~\ref{sec:cpsparqlrdfs});
\item cpSPARQL has the same complexity as nSPARQL and there exist an efficient algorithm for answering cpSPARQL queries (Section~\ref{sec:complexityresult});
\item Any nSPARQL triple pattern can be expressed as a cpSPARQL triple pattern, but not vice versa (Section~\ref{sec:compare}).
\end{itemize*}

\textbf{Outline}. 
The remainder of the report is organized as follows. 
In Section~\ref{sec:preliminaries}, we introduce RDF and the SPARQL language. 
Then we present RDF Schema and the existing attempts at answering queries modulo RDF Schemas (Section~\ref{sec:modulo}).
The PSPARQL language is presented with its main results in Section~\ref{sec:psparql}, and we show how to use them for answering SPARQL queries modulo RDF Schemas. 
Section~\ref{sec:nsparql} is dedicated to the presentation of the nSPARQL query language. 
The cpSPARQL and CPSPARQL languages are presented in detail with their main results in Section~\ref{sec:cpsparql} and we show how to use them for answering SPARQL and CPSPARQL queries modulo RDF Schemas.
Complexity results as well as a comparison between the expressiveness of cpSPARQL and nSPARQL are presented in Section~\ref{sec:comparison}. 
We report on preliminary implementations (Section~\ref{sec:impl}) and discuss more precisely other related work (Section~\ref{sec:relwork}). 
Finally, we conclude in Section~\ref{sec:conclusion}.

\section{Querying RDF with SPARQL}\label{sec:preliminaries}

The Resource Description Framework (RDF \cite{hayes2004a}) is a W3C recommended language for expressing data on the web. 
We introduce below the syntax and the semantics of the language as well as that  of its recommended query language SPARQL.

\subsection{RDF}\label{sec:grdf}\label{sec:rdf} 

This section introduces Simple RDF (RDF with simple semantics, i.e., without considering RDFS semantics of the language \cite{hayes2004a}).

\subsubsection{RDF syntax}\label{sec:grdfsyntax}

RDF graphs are constructed over sets of URI references (or urirefs), blanks, and literals \cite{carroll2004a}. 
Because we want to stress the compatibility of the RDF structure with classical logic, we will use the term variable 
instead of that of ``blank'' which is a vocabulary specific to RDF.
The specificity of blanks with regard to variables is their quantification. Indeed, a blank in RDF is a variable existentially quantified over a particular graph. 
We prefer to retain this classical interpretation which is useful when an RDF graph is placed in a different context. 
To simplify notations, and without loss of generality, we do not distinguish here between simple and typed literals. 

In the following, we treats blank nodes in RDF simply as constants (as if they were URIs) as done in the official specification of SPARQL without considering their existential semantics. 
However, if the existential semantics of blank nodes is considered when querying RDF, the results of this paper indirectly apply by using the graph homomorphism technique \cite{baget2005a}. 

\mparagraph{Terminology} 
An \emph{RDF terminology}, noted $\mathcal T$, is the union of 3 pairwise disjoint infinite sets of \emph{terms}: the set $\mathcal U$ of \emph{urirefs}, the set $\mathcal L$ of \emph{literals} and the set $\mathcal B$ of \emph{variables}. 
The \emph{vocabulary} $\mathcal V$ denotes the set of \emph{names}, i.e., $\mathcal V = \mathcal U \cup \mathcal L$. 
We use the following notations for the elements of these sets: a variable will be prefixed by \blank{} (like \blank{x1}), a literal will be expressed between quotation marks (like \litteral{27}), remaining elements will be urirefs (like \uriref{ex}{price} or simply \smtt{gene}).

\begin{figure}
\begin{center}
\begin{tikzpicture}[scale=0.64]

\draw (-1,5) node[draw,fill=white,rounded corners,x=45pt,y=45pt] (gene) {gene};
\draw (2,4) node[draw,fill=white,rounded corners,x=45pt,y=45pt] (bcd) {bcd};
\draw (8,4) node[draw,fill=white,rounded corners,x=45pt,y=45pt] (cad) {cad};
\draw (0,2) node[draw,fill=white,rounded corners,x=45pt,y=45pt] (tll) {tll};
\draw (5,2) node[draw,fill=white,rounded corners,x=45pt,y=45pt] (hb) {hb};
\draw (10,1.5) node[draw,fill=white,rounded corners,x=45pt,y=45pt] (kni) {kni};
\draw (2,0) node[draw,fill=white,rounded corners,x=45pt,y=45pt] (Kr) {Kr};

\draw[->] (bcd) -- node[above] {\scriptsize{inhibits\_translation}} (cad);
\draw[->] (bcd) -- node[above,sloped] {\scriptsize{inhibits}} (tll);
\draw[->] (bcd) -- node[below,sloped] {\scriptsize{promotes}} (hb);
\draw[->] (bcd) -- node[above,sloped] {\scriptsize{promotes}} (kni);
\draw[->] (bcd) -- node[above,sloped] {\scriptsize{promotes}} (Kr);
\draw[->] (bcd) -- node[above,sloped] {\scriptsize{rdf:type}} (gene);
\draw[->] (tll) -- node[above,sloped] {\scriptsize{regulates}} (Kr);
\draw[->] (tll) -- node[above,sloped] {\scriptsize{rdf:type}} (gene);
\draw[->] (cad) -- node[above,sloped] {\scriptsize{promotes}} (kni);
\draw[->] (hb) -- node[above,sloped,near start] {\scriptsize{inhibits}} (kni);
\draw[->] (hb) -- node[above,sloped] {\scriptsize{promotes}} (Kr);
\draw[->] (kni) -- node[above,sloped] {\scriptsize{inhibits}} (Kr);

\end{tikzpicture}
\end{center}
  
\caption{An RDF graph representing some interactions between genes within the early development of drosophila embryo.}\label{fig:genegraph}
\end{figure}
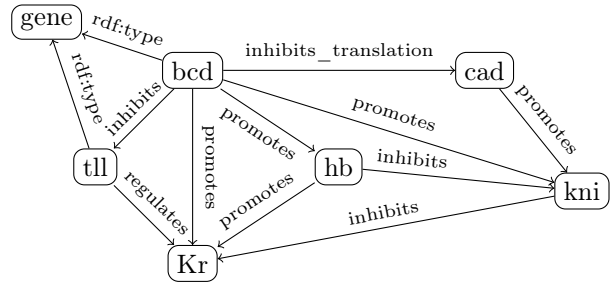



This terminology grounds the definition of RDF graphs and GRDF graphs.
Basically, an RDF graph is a set of triples of the form $\langle subject, predicate, object \rangle$ whose domain is defined in the following definition.

\begin{definition}[RDF graph, GRDF graph] 
An \emph{RDF triple} is an element of $(\mathcal U \cup \mathcal B) \times \mathcal U \times \mathcal T$. An \emph{RDF graph} is a set of RDF triples. 
A \emph{GRDF graph} (for generalized RDF) is a set of triples of $(\mathcal U \cup \mathcal B)  \times (\mathcal U \cup \mathcal B) \times \mathcal T$. 
\end{definition}

So, every RDF graph is a GRDF graph. 
If $G$ is an RDF graph, we use $\mathcal T(G),$ $\mathcal U(G),$ $\mathcal L(G),$ $\mathcal B(G),$ $\mathcal V(G)$ or $voc(G)$ to denote the set of terms, urirefs, literals, variables or names appearing in at least one triple of $G$.

In a triple $\langle s, p, o\rangle$, $s$ is called the subject, $p$ the predicate and $o$ the object. It is possible to associate to a set of triples $G$ a labeled directed multi-graph, 
such that the set of nodes 
is the set of terms appearing as a subject or object at least in a triple of $G$, the set of arcs 
is the set of triples of $G$, 
i.e., if $\langle s, p, o\rangle$ is a triple, then $s\stackrel{p}{\longrightarrow}o$ is an arc (see Figure~\ref{fig:genegraph} and \ref{fig:rdfgraph}).
By drawing these graphs, the nodes resulting from literals are represented by rectangles while the others are represented  by rectangles with rounded corners. In what follows, we do not distinguish the two views of the RDF syntax (as sets of triples or labeled directed graphs). We will then mention interchangeably its nodes, its arcs, or the triples it is made of.

\begin{example}[RDF]\label{ex:rdf}
RDF can be used for exposing a large variety of data.
For instance, Figure~\ref{fig:genegraph} shows the RDF graph representing part of the gene regulation network acting in the fruitfly (\textit{Drosophila melanogater}) embryo.
Nodes represent genes and properties express regulation links, i.e., the fact that the expression of the source gene has an influence on the expression of the target gene. The triples of this graph are the following:

\begin{tabular}{ll}
\begin{minipage}{0.5\linewidth}
\begin{smalltt}
dm:bcd rdf:type rn:gene.
dm:bcd rn:inhibits\_translation dm:cad.
dm:bcd rn:promotes dm:hb.
dm:bcd rn:promotes dm:kni.
dm:bcd rn:promotes dm:Kr.
dm:bcd rn:inhibits dm:tll.
\end{smalltt}
\end{minipage}
&
\begin{minipage}{0.5\linewidth}
\begin{smalltt}
dm:cad rn:promotes dm:kni.
dm:hb rn:inhibits dm:kni.
dm:hb rn:promotes dm:Kr.
dm:kni rn:inhibits dm:Kr.
dm:tll rn:regulates dm:Kr.
dm:tll rdf:type rn:gene.
\end{smalltt}
\end{minipage}
\end{tabular}
This example uses only urirefs.
\end{example}

\begin{example}[RDF Graph]\label{ex:rdfgraph}
RDF can be used for representing information about cities, transportation means between cities, and relationships between the transportation means. The following triples are part of the RDF graph of  Figure~\ref{fig:rdfgraph}: 
\begin{smalltt}
Grenoble TGV Paris .
Paris plane Amman .
TGV subPropertyOf transport .
...
\end{smalltt}
For instance, a triple $\langle$\uriref{}{Paris}, \uriref{}{plane}, \uriref{}{Amman}$\rangle$ means that there exists a transportation mean \uriref{}{plane} from \uriref{}{Paris} to \uriref{}{Amman}.

\begin{figure}
\begin{center}
\begin{tikzpicture}[scale=0.64]
	\begin{scope}
	\tikzstyle{every node}=[draw,fill=white,rounded corners,x=55pt,y=50pt]
		\node (grenoble) at (1.5,0) {\scriptsize{\furiref{ex}{Grenoble}}};
							\node (france) at (0,1) {\scriptsize{\furiref{ex}{France}}};
							\node (paris) at (1.5,1) {\scriptsize{\furiref{ex}{Paris}}};
							\node (madrid) at (3,1) {\scriptsize{\furiref{ex}{Madrid}}};
							\node (spain) at (4.5,1) {\scriptsize{\furiref{ex}{Spain}}};
							\node (italy) at (0,2) {\scriptsize{\furiref{ex}{Italy}}};
							\node (roma) at (1.5,2) {\scriptsize{\furiref{ex}{Roma}}};
							\node (amman) at (3,2) {\scriptsize{\furiref{ex}{Amman}}};
							\node (jordan) at (4.5,2) {\scriptsize{\furiref{ex}{Jordan}}};
							
							\node (tgv) at (0,3.5) {\scriptsize{\furiref{ex}{TGV}}};
							\node (train) at (1.5,3.5) {\scriptsize{\furiref{ex}{train}}};
							\node (plane) at (1.5,4.5) {\scriptsize{\furiref{ex}{plane}}};
							\node (transport) at (3,4) {\scriptsize{\furiref{ex}{transport}}};
								  
	\end{scope}
					
	\draw[->] (grenoble) -- node[left] {\scriptsize{\furiref{ex}{TGV}}}(paris);
	\draw[->] (grenoble) -- node[left] {\scriptsize{\furiref{ex}{cityIn}}}(france);
	\draw[->] (madrid) -- node[right] {\scriptsize{\furiref{ex}{TGV}}}(grenoble);
	\draw[->] (madrid) -- node[above] {\scriptsize{\furiref{ex}{plane}}}(paris);
	\draw[->] (paris) -- node[above] {\scriptsize{\furiref{ex}{cityIn}}}(france);
	\draw[->] (madrid) -- node[above] {\scriptsize{\furiref{ex}{cityIn}}}(spain);
	\draw[->] (amman) -- node[above] {\scriptsize{\furiref{ex}{cityIn}}}(jordan);
	\draw[->] (roma) -- node[above] {\scriptsize{\furiref{ex}{cityIn}}}(italy);
	\draw[->] (paris) -- node[right] {\scriptsize{\furiref{ex}{plane}}}(amman);
	\draw[->] (paris) -- node[left] {\scriptsize{\furiref{ex}{plane}}}(roma);
	\draw[->] (roma) -- node[above] {\scriptsize{\furiref{ex}{plane}}}(amman);
					
	\draw[->] (tgv) -- node[above] {\scriptsize{sp}}(train);
	\draw[->] (plane) -- node[above] {\scriptsize{sp}}(transport);
	\draw[->] (train) -- node[above] {\scriptsize{sp}}(transport);
					
	\draw (9,7) node {{\bf M}};
	\draw (9,0) node {{\bf G}};				
\end{tikzpicture}
\end{center} 
\caption{An RDF graph ($G$) with its schema ($M$) representing information about transportation means between several cities.} \label{fig:rdfgraph}
\end{figure}
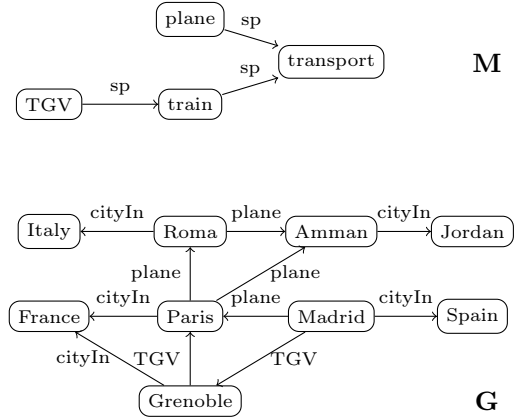
\end{example}

\subsubsection{RDF semantics}

The formal semantics of RDF expresses the conditions under which an RDF graph describes a particular world, i.e., an interpretation is a model for the graph \cite{hayes2004a}. 
The usual notions of validity, satisfiability and consequence are entirely determined by these conditions.

\begin{definition}[RDF Interpretation] Let $V \subseteq (\mathcal U \cup \mathcal L)$ be a vocabulary. An \emph{RDF interpretation} of $V$ is a tuple $I =\langle I_R, I_P, I_{EXT},\iota\rangle$ such that:
\begin{itemize*}
    \item $I_R$ is a set of \emph{resources} that contains $V \cap \mathcal L$;
    \item $I_P \subseteq I_R$ is a set of properties;
    \item $I_{EXT}: I_P \rightarrow 2^{I_R \times I_R}$ associates to each property a set of pairs of resources called  the \emph{extension} of the property;
    \item the interpretation function $\iota: V \rightarrow I_R$ associates to each name in $V$ a resource of $I_R$, such that if $v \in \mathcal L$, then $\iota(v) = v$.
\end{itemize*}
\end{definition}

\begin{definition}[RDF model] Let $V \subseteq \mathcal V$ be a vocabulary, and $G$ be an RDF graph such that $voc(G) \subseteq V$. An RDF interpretation $I = \langle I_R, I_P, I_{EXT}, \iota\rangle$ of $V$ is an \emph{RDF model} of $G$ iff there exists a mapping $\iota': \mathcal T(G) \rightarrow I_R$ that extends $\iota$, i.e., $t \in V \cap \mathcal T(G) \Rightarrow \iota'(t) = \iota(t)$, such that for each triple $\langle s, p, o\rangle\in G$, $\iota'(p) \in I_P$ and $\langle\iota'(s), \iota'(o)\rangle \in I_{EXT}(\iota'(p))$. The mapping $\iota'$ is called a \emph{proof} of $G$ in $I$.
\end{definition}

Consequence (or entailment) is defined in the standard way:

\begin{definition}[RDF entailment]\label{def:rdfentail}
A graph $G$ \emph{RDF-entails} a graph $P$ (denoted by $G \models_{RDF} P$) if and only if each RDF model of $G$ is also an RDF model of $P$.
\end{definition}

The entailment of RDF graphs (respectively, GRDF graphs) can be characterized in terms of subset for the case of graphs without variables (respectively, in terms of homomorphism when the graphs have variables).

\begin{proposition}[RDF, GRDF entailment \cite{hayes2004a}]\label{def:rdfentail}
$G \models_{RDF} P$ iff $\forall\langle s,p,o \rangle \in P$, then $\langle s,p,o \rangle \in G$. 
An RDF graph $G$ \emph{RDF-entails} a GRDF graph $P$ iff there exists a mapping $\sigma: \mathcal{T}(P) \rightarrow \mathcal{T}(G)$ such that $\forall\langle s,p,o \rangle \in P$, $\langle \sigma(s),\sigma(p),\sigma(o) \rangle \in G$.
\end{proposition}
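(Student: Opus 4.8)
The plan is to treat both equivalences uniformly, since the only real ingredient is the \emph{canonical (Herbrand) model} of $G$. Write $V = voc(G)\cup voc(P)$ and define an RDF interpretation $I^{G}=\langle I_R,I_P,I_{EXT},\iota\rangle$ of $V$ by $I_R=\mathcal T(G)\cup V$, $\iota=\mathrm{id}$, $I_P=\{p : \exists s,o,\ \langle s,p,o\rangle\in G\}$, and $I_{EXT}(p)=\{\langle s,o\rangle : \langle s,p,o\rangle\in G\}$. One checks directly that $I^{G}$ is an RDF model of $G$: the identity on $\mathcal T(G)$ extends $\iota$ and is a proof of $G$ in $I^{G}$, because for every $\langle s,p,o\rangle\in G$ we have $p\in I_P$ and $\langle s,o\rangle\in I_{EXT}(p)$ by construction.

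First I would dispatch the two ``if'' directions, which do not need the canonical model. If $P\subseteq G$ (variable-free case), then any RDF model $I$ of $G$ with proof $\iota'$ is already a model of $P$, since every triple of $P$ is a triple of $G$; hence $G\models_{RDF}P$. For the homomorphism case, suppose $\sigma:\mathcal T(P)\to\mathcal T(G)$ is the identity on names (the standard notion of RDF instance, implicit in the statement) and sends every triple $\langle s,p,o\rangle\in P$ to a triple $\langle\sigma(s),\sigma(p),\sigma(o)\rangle\in G$. Given any model $I$ of $G$ with proof $\iota'$, the composite $\iota'\circ\sigma$ is defined on $\mathcal T(P)$, extends $\iota$ on $voc(P)$ (because $\sigma$ fixes names and $\iota'$ extends $\iota$), and is a proof of $P$ in $I$: applying $\iota'$ to the $G$-triple $\langle\sigma(s),\sigma(p),\sigma(o)\rangle$ yields $\iota'(\sigma(p))\in I_P$ and $\langle\iota'(\sigma(s)),\iota'(\sigma(o))\rangle\in I_{EXT}(\iota'(\sigma(p)))$. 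So $I\models P$, and $G\models_{RDF}P$.

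For the two ``only if'' directions I would invoke $I^{G}$. Assume $G\models_{RDF}P$; then $I^{G}$, being a model of $G$, is a model of $P$, so there is a proof $\iota'$ of $P$ in $I^{G}$ extending the identity on $voc(P)$. For each $\langle s,p,o\rangle\in P$ we get $\iota'(p)\in I_P$ and $\langle\iota'(s),\iota'(o)\rangle\in I_{EXT}(\iota'(p))$; unfolding the definition of $I_{EXT}$, this says exactly that $\langle\iota'(s),\iota'(p),\iota'(o)\rangle\in G$. Taking $\sigma=\iota'$ gives the homomorphism characterisation: this $\sigma$ is automatically the identity on $voc(P)$, and since every term of $P$ occurs in some triple of $P$ it indeed maps $\mathcal T(P)$ into $\mathcal T(G)$. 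When $P$ has no variables, $\iota'$ must coincide with $\iota=\mathrm{id}$ on $\mathcal T(P)$, so the displayed membership collapses to $\langle s,p,o\rangle\in G$, i.e.\ $P\subseteq G$.

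The one delicate point is the interaction between the vocabularies of $P$ and $G$: a priori $I^{G}$ must interpret names (and predicate symbols) of $P$ that do not occur in $G$, which is why $I_R$ and $\iota$ are defined on $V\supseteq voc(P)$ rather than on $\mathcal T(G)$ alone. The argument above shows such ``foreign'' symbols cannot in fact occur in any triple of $P$: if $p\in voc(P)$ is a predicate of some $P$-triple, then being a proof forces $\iota'(p)=p\in I_P$, hence $p$ is a predicate of $G$; likewise any subject or object of a $P$-triple is forced to be a term of $G$. So no separate case analysis is needed, and the single canonical model settles both statements at once.
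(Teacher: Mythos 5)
Your proof is correct; note that the paper does not prove this proposition at all but imports it from the cited RDF Semantics reference, and your canonical (Herbrand) interpretation argument is essentially the standard interpolation-lemma proof given there, including the careful handling of the vocabulary of $P$ versus $G$. Your explicit stipulation that $\sigma$ must fix names (i.e., be a map in the paper's sense) correctly supplies a hypothesis the statement leaves implicit, without which the ``if'' direction would fail.
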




\subsection{SPARQL}

SPARQL is the RDF query language developed by the W3C \cite{prudhommeaux2008a}.
SPARQL query answering is characterized by defining a mapping (shortened here as ``map'') from the query to the RDF graph to be queried.

We define in the following subsections the syntax and the semantics of SPARQL.
For a complete description of SPARQL, the reader is referred to the SPARQL specification \cite{prudhommeaux2008a} or to \cite{perez2009a,polleres2007a} for its formal semantics.
Unless stated otherwise, the report concentrates on SPARQL 1.0, but some features considered in the presented languages are planned to be integrated in SPARQL 1.1.

\subsubsection{SPARQL syntax}\label{sec:sparqlquerysyntax}

The basic building blocks of SPARQL queries are \emph{graph patterns} which are shared by all SPARQL query forms. 
Informally, a \emph{graph pattern} can be a triple pattern, i.e., a GRDF triple, a basic graph pattern, i.e., a set of triple patterns such as a GRDF graph, the union (UNION) of graph patterns, an optional (OPT) graph pattern, or a constraint (FILTER). 

\begin{definition}[SPARQL graph pattern] A \emph{SPARQL graph pattern} is defined inductively in the following way:
\begin{itemize*}
    \item every GRDF graph is a SPARQL graph pattern;
    \item if $P$, $P'$ are SPARQL graph patterns and $K$ is a SPARQL constraint, then ($P$ {\tt AND} $P'$), ($P$ {\tt UNION} $P'$), ($P$ {\tt OPT} $P'$), and ($P$ {\tt FILTER} $K$) are SPARQL graph patterns.
\end{itemize*}
\end{definition}

A SPARQL constraint $K$ is a boolean expression involving terms from ($\mathcal{V} \cup \mathcal{B}$), e.g., a numeric test. 
We do not specify these expressions further (see \cite{munoz2009a} for a more complete treatment).

A SPARQL SELECT query is of the form {\tt SELECT} $\vec B$ {\tt FROM} $u$ {\tt WHERE} $P$ where $u$ is the URI of an RDF graph $G$, $P$ is a SPARQL graph pattern and $\vec B$ is a tuple of variables appearing in $P$. 
Intuitively, such a query asks for the assignments of the variables in $\vec B$ such that, under these assignments, $P$ is entailed by the graph identified by $u$.

\begin{example}[Query]\label{ex:query}
The following query searches, in the regulatory network of Figure~\ref{fig:genegraph}, a gene $?x$ which inhibits a product that regulates a product that $?x$ promotes, and returns these three entities:
\begin{smalltt}
SELECT ?x, ?y, ?z
FROM ...
WHERE \{
  ?x rn:inhibits ?y
  ?x rn:promotes ?z
  ?y rn:regulates ?z
  ?x rdf:type rn:gene.
  \}
\end{smalltt}
\end{example}

\begin{example}[Query]\label{ex:query}
The following query searches in the RDF graph of Figure~\ref{fig:rdfgraph} if there exists a direct plane between a city in France and a city in Jordan:
\begin{smalltt}
SELECT ?city1 ?city2
FROM ...
WHERE
  ?city1 plane ?city2 .
  ?city1 cityIn France .
  ?city2 cityIn Jordan .
\end{smalltt}
\end{example}

SPARQL provides several other query forms that can be used for formatting query results: 
CONSTRUCT which can be used for building an RDF graph from the set of answers, 
ASK which returns {\sc true} if there is an answer to a given query and {\sc false} otherwise, 
and DESCRIBE which can be used for describing a resource RDF graph.
We concentrate on the SELECT query form and modify SPARQL basic graph patterns, leaving the remainder of the query forms unchanged.

\subsubsection{SPARQL semantics}\label{sec:sparqlquerysemantics}

In the following, we characterize query answering with SPARQL as done in \cite{perez2009a}. 
The approach relies upon the correspondence between GRDF entailment and maps from RDF graph of the query graph patterns to the RDF knowledge base.
SPARQL query constructs are defined through algebraic operations on maps: assignments from a set of variables to terms that preserve names.

\begin{definition}[Map]\label{def:map}
Let $V_1 \subseteq \mathcal{T}$, and $V_2 \subseteq \mathcal{T}$ be two sets of terms. 
A map from $V_1$ to $V_2$ is a function $\sigma: V_1 \rightarrow V_2$ such that $\forall x\in ( V_1 \cap \mathcal{V})$, $\sigma(x)=x$.
\end{definition}

\mparagraph{Operations on maps}
If $\sigma$ is a map, then the domain of $\sigma$, denoted by $dom(\sigma)$, is the subset of $\mathcal{T}$ on which $\sigma$ is defined. 
The restriction of $\sigma$ to a set of terms $X$ is defined by $\sigma |_{X}=\{\langle x, y\rangle\in\sigma |~x\in X\}$ and 
the completion of $\sigma$ to a set of terms $X$ is defined by $\sigma |^{X}=\sigma\cup\{\langle x, {\tt null}\rangle |~x\in X\text{ and }x\notin dom(\sigma)\}$\footnote{The {\tt null} symbol is used for denoting the NULL values introduced by the OPTIONAL clause.}.

If $P$ is a graph pattern, then $\mathcal{B}(P)$ is the set of variables occurring in $P$ and $\sigma(P)$ is the graph pattern obtained by the substitution of $\sigma(b)$ to each variable $b \in \mathcal{B}(P)$. 
Two maps $\sigma_1$ and $ \sigma_2$ are \emph{compatible} when $\forall x \in dom(\sigma_1) \cap dom(\sigma_2)$, $\sigma_1(x)=\sigma_2(x)$. 
Otherwise, they are said to be incompatible and this is denoted by $\sigma_1 \bot \sigma_2$. If $\sigma_1$ and $\sigma_2$ are two compatible maps, then we denote by $\sigma = \sigma_1 \oplus \sigma_2: T_1 \cup T_2 \rightarrow \mathcal T$ the map defined by $\forall x \in T_1, \sigma(x) = \sigma_1(x)$ and $\forall x \in T_2, \sigma(x) = \sigma_2(x)$. 
The \emph{join} and \emph{difference} of two sets of maps $\Omega_1$ and $ \Omega_2$ are defined as follows \cite{perez2009a}: 
\begin{itemize*}
\item \emph{(join)} $\Omega_1 \Join \Omega_2= \{\sigma_1 \oplus \sigma_2  \ | \  \sigma_1 \in \Omega_1, \sigma_2 \in\Omega_2$ are compatible$\}$;
\item \emph{(difference)} $\Omega_1 \setminus \Omega_2 = \{\sigma_1 \in \Omega_1 \ | \ \forall \sigma_2 \in \Omega_2, \sigma_1$ and $\sigma_2$ are not compatible$\}$. 
\end{itemize*}

The answers to a basic graph pattern query are those maps which warrant the entailment of the graph pattern by the queried graph. 
In the case of SPARQL, this entailment relation is RDF-entailment. 
Answers to compound graph patterns are obtained through the operations on maps.


\begin{definition}[Answers to compound graph patterns]\label{def:sparqlsynanswer}
Let $\models_{RDF}$ be the RDF entailment relation on basic graph patterns, $P$, $P'$ be SPARQL graph patterns, $K$ be a SPARQL constraint, and $G$ be an RDF graph. The set $\mathcal S(P, G)$ of \emph{answers} to $P$ in $G$ is the set of maps from $\mathcal{B}(P)$ to $\mathcal{T}(G)$ defined inductively in the following way:
\begin{align*}
\begin{split}
\mathcal{S}(P,G) &= \{\sigma|_{\mathcal{B}(P)}|~G \models_{RDF} \sigma(P)\}  \text{~~~~if $P$ is a basic graph pattern}
\end{split}\\
\mathcal S((P~\texttt{AND}~P'), G) &= \mathcal S(P, G) \Join \mathcal S(P',G)\\
\mathcal S(P~\texttt{UNION}~P', G) &= \mathcal S(P, G) \cup \mathcal S(P',G)\\
\begin{split}
\mathcal S(P~\texttt{OPT}~P', G) &= (\mathcal S(P, G) \Join \mathcal S(P',G)) \cup (\mathcal S(P, G) \setminus \mathcal{S}(P', G))
\end{split}\\
\mathcal S(P~\texttt{FILTER}~K, G) &= \{\sigma \in \mathcal S(P, G) \ | \ \sigma(K) = \top\}
\end{align*}
\end{definition}
The symbol $\models_{RDF}$ is used to denote the RDF entailment relation on basic graph patterns and we will use simply $\models$ when it is clear from the context. 
Moreover, the conditions $K$ are interpreted as boolean functions from the terms they involve. 
Hence, $\sigma(K) = \top$ means that this function is evaluated to true once the variables in $K$ are substituted by $\sigma$. 
If not all variables of $K$ are bound, then $\sigma(K)\neq\top$. 
One particular operator that can be used in SPARQL filter conditions is ``$bound(?x)$''. 
This operator returns true if the variable $?x$ is bound and in this case $\sigma(K)$ is not true whenever a variable is not bound.




As usual for this kind of query language, an answer to a query is an assignment of the distinguished variables (those variables in the \texttt{SELECT} part of the query). 
Such an assignment is a map from variables in the query to nodes of the graph. 
The defined answers may assign only one part of the variables, those sufficient to prove entailment. 
The answers are these assignments extended to all distinguished variables.


\begin{definition}[Answers to a SPARQL query]\label{def:sparqlanswer}
Let $\texttt{SELECT~} \vec{B} \texttt{~FROM~} u \texttt{~WHERE~} P$ be a SPARQL query, $G$ be the RDF graph identified by the URI $u$, and $\mathcal{S}(P,G)$ be the set of answers to $P$ in $G$, 
then the \emph{answers} $\mathcal{A}(\vec{B}, G, P)$ to the query are the restriction and completion to $\vec{B}$ of answers to $P$ in $G$, i.e.,
$$\mathcal{A}(\vec{B}, G, P)=\{ \sigma|_{\vec{B}}^{\vec{B}}|~\sigma\in \mathcal{S}(P,G)\}.$$
\end{definition}

The completion to null does not prevent that blank nodes remain in answers: null values only replace unmatched variables. 
\cite{polleres2007a} defines a different semantics for the join operation when the maps contain null values. This semantics could be adopted instead without changing the remainder of this paper.

\begin{example}[Query evaluation]\label{ex:eval}
The evaluation of the query of Example~\ref{ex:query} against the RDF graph of Example~\ref{ex:rdf} returns only one answer:
$$\langle \smtt{dm:bcd}, \smtt{dm:tll}, \smtt{dm:Kr}\rangle$$
\end{example}

In the following, we use an alternate characterization of SPARQL query answering that relies upon the correspondence between GRDF entailment and maps from the query graph patterns to the RDF graph \cite{perez2009a}. 
The previous definition can be rewritten in a more semantic style by extending entailment to compound graph patterns modulo a map $\sigma$ \cite{alkhateeb2009a}.

\begin{definition}[Compound graph pattern entailment]\label{def:sparqlsemanswer}
Let $\models$ be an entailment relation on basic graph patterns, $P$, $P'$ be SPARQL graph patterns, $K$ be a SPARQL constraint, and $G$ be an RDF graph, 
\emph{graph pattern entailment} by an RDF graph modulo a map $\sigma$ is defined inductively by:
\begin{align*}
G\models \sigma(P~\texttt{AND}~P') &\text{ iff } G\models \sigma(P)\text{ and }G\models \sigma(P')\\
G\models \sigma(P~\texttt{UNION}~P') &\text{ iff } G\models \sigma(P)\text{ or }G\models \sigma(P')\\
\begin{split}
G\models \sigma(P~\texttt{OPT}~P') &\text{ iff } G\models \sigma(P)\text{ and } [G\models \sigma(P') \text{ or } \forall \sigma'; G\models \sigma'(P'), \sigma\bot \sigma']
\end{split}\\
G\models \sigma(P~\texttt{FILTER}~K) &\text{ iff } G\models \sigma(P)\text{ and }\sigma(K)=\top
\end{align*}
\end{definition}

The following proposition is an equivalent characterization of SPARQL answers, closer to a semantic definition.

\begin{proposition}[Answer to a SPARQL query \cite{alkhateeb2009a}]\label{prop:sparql}
Let $\texttt{SELECT~}\vec{B}$ $\texttt{FROM~}u$ $\texttt{WHERE~}P$ be a SPARQL query with $P$ a SPARQL graph pattern and $G$ be the (G)RDF graph identified by the URI $u$, 
then the set of answers to this query is 
$$\mathcal{A}(\vec{B}, G, P)=\{\sigma|_{\vec{B}}^{\vec{B}}|~G\models_{(G)RDF} \sigma(P) \}.$$
\end{proposition}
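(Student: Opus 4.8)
The plan is to derive the Proposition from the two descriptions of the answer set of a graph pattern that are already on the table: the operational one of Definition~\ref{def:sparqlsynanswer} and the semantic one of Definition~\ref{def:sparqlsemanswer}. Since Definition~\ref{def:sparqlanswer} already gives $\mathcal A(\vec B,G,P)=\{\sigma|_{\vec B}^{\vec B}\mid\sigma\in\mathcal S(P,G)\}$, it suffices to prove, by structural induction on the graph pattern $P$,
\[
\mathcal S(P,G)=\{\,\sigma\mid dom(\sigma)\subseteq\mathcal B(P),\ G\models\sigma(P)\,\},
\]
where maps are understood to range into $\mathcal T(G)$ and, for a compound $P$ and a possibly partial $\sigma$, ``$G\models\sigma(P)$'' is the relation of Definition~\ref{def:sparqlsemanswer} with partial substitution (variables outside $dom(\sigma)$ are left in place, so $\sigma(P)$ is still a GRDF graph and $\models$ bottoms out at $\models_{(G)RDF}$ on basic patterns; this is also the reading of Definition~\ref{def:sparqlsynanswer} intended by its \texttt{OPT} clause). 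Applying $\cdot|_{\vec B}^{\vec B}$ to both sides then yields the Proposition.

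The base case, $P$ a GRDF graph, is exactly the first clause of Definition~\ref{def:sparqlsynanswer}, read through the characterisation of (G)RDF entailment (Proposition~\ref{def:rdfentail}), so that $G\models_{(G)RDF}\sigma(P)$ is the existence of a homomorphism $\sigma(P)\to G$. Before the inductive steps I would record a \emph{coincidence lemma}: since $P$ mentions only variables of $\mathcal B(P)$, the truth of $G\models\sigma(P)$ depends only on $\sigma|_{\mathcal B(P)}$, whence $(\sigma_1\oplus\sigma_2)(P)=\sigma_1(P)$ whenever $\mathcal B(P)\subseteq dom(\sigma_1)$; this is an easy induction and is what lets one pass freely between a map and its restrictions. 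With it, the \textsc{AND} case follows from the definition of $\Join$: a map lies in $\mathcal S(P,G)\Join\mathcal S(P',G)$ iff it is $\sigma_1\oplus\sigma_2$ for compatible $\sigma_1\in\mathcal S(P,G)$ and $\sigma_2\in\mathcal S(P',G)$, which by the induction hypothesis and the lemma is equivalent to $G\models\sigma(P)$ and $G\models\sigma(P')$. The \textsc{UNION} and \textsc{FILTER} cases are immediate, since $\cup$ and the side condition $\sigma(K)=\top$ occur verbatim on both sides and $\mathcal B(P\ \texttt{FILTER}\ K)=\mathcal B(P)$.

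The \textsc{OPT} case is the real obstacle. One must match $(\mathcal S(P,G)\Join\mathcal S(P',G))\cup(\mathcal S(P,G)\setminus\mathcal S(P',G))$ with ``$G\models\sigma(P)$ and [$G\models\sigma(P')$ or $\forall\sigma';\ G\models\sigma'(P')\Rightarrow\sigma\bot\sigma'$]''. The join summand behaves as in the \textsc{AND} case and produces the $G\models\sigma(P')$ branch. The difference summand is delicate: the operator $\setminus$ carries a negation that must be aligned with the negative, universally quantified disjunct, and the maps it returns are defined only on $\mathcal B(P)$, in general a proper subset of $\mathcal B(P\ \texttt{OPT}\ P')$. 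The crux is that the induction hypothesis turns ``$\forall\sigma'$ with $G\models\sigma'(P')$'' into ``$\forall\sigma'\in\mathcal S(P',G)$'': by the coincidence lemma, $G\models\sigma'(P')$ iff $\sigma'|_{\mathcal B(P')}\in\mathcal S(P',G)$, and $\sigma$ is incompatible with $\sigma'$ as soon as it is incompatible with $\sigma'|_{\mathcal B(P')}$; dually, a $\sigma$ with $dom(\sigma)=\mathcal B(P)$ satisfies $G\models\sigma(P')$ iff it extends to some $\tau\in\mathcal S(P',G)$, i.e. iff it is compatible with a member of $\mathcal S(P',G)$. Hence $\sigma\in\mathcal S(P,G)\setminus\mathcal S(P',G)$ iff $G\models\sigma(P)$, $G\not\models\sigma(P')$, and every $\sigma'$ with $G\models\sigma'(P')$ is incompatible with $\sigma$; together with the join summand this is precisely $G\models\sigma(P\ \texttt{OPT}\ P')$.

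Finally, applying $\cdot|_{\vec B}^{\vec B}$ reconciles the remaining domain mismatch: any distinguished variable in $\vec B$ left unbound by $\sigma$ — possible only through the difference summand of an \textsc{OPT} — is mapped to \texttt{null} on both sides, so the two answer sets agree. I expect this \textsc{OPT} step, in particular keeping the quantifier alternation and the domains of the maps straight, to be essentially all of the difficulty; the remaining cases reduce to the coincidence lemma plus unfolding of the definitions.
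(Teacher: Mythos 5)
First, note what the paper actually does with this statement: it offers no proof at all. Proposition~\ref{prop:sparql} is imported from \cite{alkhateeb2009a}, and the only justification given in the text is the remark that the cited proof starts from a slightly different, RDF-entailment-based definition which agrees with Definition~\ref{def:sparqlsynanswer} on ground triples by the interpolation lemma. So your plan --- a structural induction relating the algebraic semantics of Definition~\ref{def:sparqlsynanswer} to the entailment-modulo-a-map semantics of Definition~\ref{def:sparqlsemanswer}, in the spirit of the paper's own Lemma~\ref{lem:induction} --- is the right kind of argument to supply, and your treatment of the \texttt{OPT} quantifier alternation is the correct one. The proof must therefore stand on its own, and there it has a genuine hole.

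The hole is the inductive invariant $\mathcal S(P,G)=\{\sigma\mid dom(\sigma)\subseteq\mathcal B(P),\ G\models\sigma(P)\}$, with unbound variables of basic patterns read existentially under GRDF entailment. This is not preserved by \texttt{AND}. Take $P=\{\langle ?x,p,?y\rangle\}$, $P'=\{\langle ?y,q,c\rangle\}$ and $G=\{\langle a,p,b\rangle,\langle d,q,c\rangle\}$. Your base case puts $\{?x\mapsto a\}$ in $\mathcal S(P,G)$ (the witness $?y\mapsto b$ is existential) and $\{?y\mapsto d\}$ in $\mathcal S(P',G)$; these maps are compatible, so their join $\{?x\mapsto a,?y\mapsto d\}$ lies in $\mathcal S(P~\texttt{AND}~P',G)$, yet $G\not\models\langle a,p,d\rangle$, so the semantic side excludes it. The failure is located exactly at the hypothesis of your coincidence lemma: $(\sigma_1\oplus\sigma_2)(P)=\sigma_1(P)$ requires $\mathcal B(P)\subseteq dom(\sigma_1)$, which your invariant ($dom(\sigma_1)$ merely contained in $\mathcal B(P)$) does not supply, and entailment is not monotone under binding a previously existential variable; the same defect infects the join summand of \texttt{OPT}. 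If instead you read basic-pattern answers as total on $\mathcal B(P)$ (the reading of \cite{alkhateeb2009a} and of the P\'erez et al.\ semantics, and the reading under which the paper's interpolation-lemma remark about ground triples does its work), the join step is saved, but then your base case no longer matches your invariant, and the right-hand side of the Proposition --- which quantifies over arbitrary maps $\sigma$ with $G\models\sigma(P)$ --- acquires null-completed answers that the algebra never produces. Either way the induction as written does not close: the missing content is precise bookkeeping of the domains of the maps (all of $\mathcal B(P)$ at the base, with unbound variables arising only from the optional side of an \texttt{OPT}), together with an argument, in the \texttt{AND} and \texttt{OPT} cases, that re-binding such variables cannot invalidate entailment of the other operand. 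That bookkeeping, not the quantifier juggling in \texttt{OPT}, is where the real difficulty of the statement lies.
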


The proof given in \cite{alkhateeb2009a} starts from a slightly different definition than Definition~\ref{def:sparqlsynanswer} based on RDF-entailment.
However, in the case of RDF entailment of a ground triple, these are equivalent thanks to the interpolation lemma \cite{hayes2004a}.

In order to evaluate the complexity of query answering, we use the following problem, usually named query evaluation but better named \textsc{answer checking}:

\noindent \textbf{Problem:} \textsc{$\mathcal{A}$-Answer checking}\\
\textbf{Input:} an RDF graph $G$, a SPARQL graph pattern $P$, a tuple of variables $\vec{B}$, and a map $\sigma$.\\
\textbf{Question:} Does $\sigma\in\mathcal{A}(\vec{B}, G, P)$?\\

This problem has usually the same complexity as checking if an answer exists. 
For SPARQL, the problem has been shown to be PSPACE-complete.

\begin{proposition}[Complexity of \textsc{$\mathcal{A}$-Answer checking} \cite{perez2009a}]\label{prop:SPARQLcmpl}
\textsc{$\mathcal{A}$-Answer checking} is 
PSPACE-complete.
\end{proposition}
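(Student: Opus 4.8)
The plan is to establish the two directions separately: membership in \textsc{PSPACE} by a recursive evaluation procedure that reuses space, and \textsc{PSPACE}-hardness by a reduction from the validity problem for quantified boolean formulas (\textsc{QBF}).

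For membership, I would work with the semantic characterization of answers given by Proposition~\ref{prop:sparql} together with the inductive clauses of Definition~\ref{def:sparqlsemanswer}. Given $G$, $P$, $\vec B$ and $\sigma$, deciding $\sigma\in\mathcal A(\vec B,G,P)$ amounts to deciding whether there is a map $\mu$ with domain contained in $\mathcal B(P)$, taking values in $\mathcal T(G)\cup\{\texttt{null}\}$ and agreeing with $\sigma$ on $\vec B$, such that $G\models\mu(P)$. I would enumerate such candidate maps one at a time --- there are exponentially many but each has size polynomial in $|G|+|P|$ --- and, for each, test $G\models\mu(P)$ by recursion on the structure of $P$. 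The base case (a basic graph pattern) is the polynomial test ``$\mu(t)\in G$ for every triple $t\in P$''. The cases \texttt{AND}, \texttt{UNION} and \texttt{FILTER} call the procedure on strict subpatterns with a map determined by $\mu$. The only delicate clause is \texttt{OPT}, whose second disjunct (``$\forall\sigma';\ G\models\sigma'(P'),\ \sigma\bot\sigma'$'') hides a universal quantifier; I would handle it by enumerating all candidate maps $\sigma'$ over $\mathcal B(P')$, reusing the same workspace, and recursively checking $G\models\sigma'(P')$ for each. The recursion depth is bounded by the size of $P$, and every stack frame stores only a candidate map plus a counter, so the total space is polynomial; hence the problem is in \textsc{PSPACE}.

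For hardness, I would reduce from \textsc{QBF}: given $\Phi=Q_1x_1\cdots Q_nx_n\,\psi$ with $\psi$ a propositional formula, build in polynomial time a fixed-size RDF graph $G$, a graph pattern $P_\Phi$, a tuple $\vec B$ and a map $\sigma$ such that $\sigma\in\mathcal A(\vec B,G,P_\Phi)$ iff $\Phi$ is valid. The graph $G$ carries the two truth values (e.g.\ triples letting a variable range over $\{0,1\}$) together with a marker used to test the value of $\psi$. Conjunction and disjunction in $\psi$ are rendered by \texttt{AND} and \texttt{UNION}; an existential quantifier is rendered by introducing a fresh value-ranging variable with \texttt{AND}; the crucial point is that a universal quantifier $\forall x$ is simulated by a double application of the ``negation'' obtained from \texttt{OPT} (its difference disjunct), together with a \texttt{FILTER}\ $\neg\mathrm{bound}(?w)$ on a fresh witness variable $?w$, exactly as in the construction of \cite{perez2009a}. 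I would then prove, by induction on the quantifier prefix, that the constructed pattern has the intended answer iff the corresponding subformula is valid under the current partial assignment.

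The main obstacle is the hardness direction: getting the \texttt{OPT}/\texttt{FILTER} gadget to faithfully implement negation, and hence alternation of quantifiers, and then proving the inductive invariant linking answers of the pattern to validity of the \textsc{QBF}. Membership in \textsc{PSPACE} is comparatively routine, once one is careful to enumerate candidate maps lazily and to reuse space across the universal test hidden in the \texttt{OPT} clause.
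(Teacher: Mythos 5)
Your outline is essentially the standard argument, and it matches the source this paper relies on: note that the paper itself gives no proof of Proposition~\ref{prop:SPARQLcmpl} --- it is imported verbatim from \cite{perez2009a} --- so the right comparison is with that cited proof. Your membership direction (enumerate candidate maps of polynomial size, evaluate recursively on the structure of the pattern, re-using space across the universal test hidden in the \texttt{OPT} clause) is exactly how the bound is obtained there, and your hardness direction is, as there, a reduction from \textsc{QBF} in which the propositional part is encoded with \texttt{AND}/\texttt{UNION} over a small fixed graph carrying the truth values. The one point where you genuinely deviate is the quantifier gadget: \cite{perez2009a} simulates the alternation of quantifiers by nested \texttt{OPT} (in fact the hardness already holds for the \texttt{AND}/\texttt{OPT} fragment, without any \texttt{FILTER}), whereas you propose the \texttt{OPT} plus \texttt{FILTER}~$\neg\mathrm{bound}(?w)$ double-negation encoding of ``$\forall=\neg\exists\neg$''. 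That variant is workable in the fragment considered here (which includes \texttt{FILTER} with \texttt{bound}), but it proves a slightly weaker hardness statement and still leaves you the non-trivial inductive invariant to verify, so if you write it out in full you may as well follow the leaner nested-\texttt{OPT} construction of the cited work. One small care point in the membership direction: for a basic sub-pattern the test $G\models\mu(P)$ is GRDF entailment, hence itself an NP check rather than a plain triple-membership test when $\mu$ leaves variables unbound, but this is of course absorbed into \textsc{PSPACE}.
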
 

The complexity of checking RDF-entailment and GDRF-entailment is NP-complete \cite{gutierrez2004a}. 
This means that \textsc{$\mathcal{A}$-Answer checking} when queries are reduced to basic graph patterns is NP-complete. 
In fact, the addition of AND, FILTER, and UNION does not increase complexity which remains NP-complete.
This complexity comes from the addition of the OPT construction \cite{perez2009a}.

Hence, for every language in which entailment is NP-complete, used as a basic graph pattern language, the problem for this language will have the same complexity since Definition~\ref{def:sparqlsemanswer} shows the independence of subquery evaluation.

\section{Querying RDF modulo RDF Schema}\label{sec:modulo}

RDF Schema (or RDFS) \cite{rdfvocabulary} together with OWL \cite{owl} are formal logics recommended by the W3C for defining the vocabulary used in RDF graphs. 

RDF Schema further constrains RDF interpretations. Thus RDF graphs when interpreted under a schema may have less models, and thus more consequences. This provides more answers to a query relying on RDFS semantics. 
However, SPARQL query answers (under RDF entailment) ignore this semantics.

\begin{example}[RDF and RDFS entailment]\label{ex:rdfent}
For instance, in RDF, it is possible to deduce $\langle$\smtt{rn:inhibits} \smtt{rdf:type} \smtt{rdf:Property}$\rangle$ from $\langle$\smtt{dm:hb} \smtt{rn:inhibits} \smtt{dm:kni}$\rangle$; in RDFS, one can deduce $\langle$\smtt{dm:hb} \smtt{rn:regulates} \smtt{dm:kni}$\rangle$ from \{$\langle$\smtt{dm:hb} \smtt{rn:inhibits} \smtt{dm:kni}$\rangle$, $\langle$\smtt{rn:inhibits} \smtt{rdfs:subPropertyOf} \smtt{rn:regulates}$\rangle$\}.
\end{example}

\subsection{RDF Schema}\label{sec:rdfs}

This section focusses on RDF and RDFS as extensions of the Simple RDF language presented in Section~\ref{sec:grdf}. 
Both extensions are defined in the same way:
\begin{itemize*}
	\item They consider a particular set of urirefs of the vocabulary prefixed by \smtt{rdf:} and \smtt{rdfs:}, respectively.
	\item They add additional constraints on the resources associated to these terms in interpretations.
\end{itemize*}
We present them together below.

\subsubsection{RDF Schema vocabulary}

There exists a set of reserved words, the RDF Schema vocabulary \cite{rdfvocabulary}, designed to describe relationships between resources like classes, e.g., \smtt{dm:gap rdfs:subClassOf rn:gene}, and relationships between properties, e.g., \smtt{rn:inhibits rdfs:subPropertyOf rn:regulates}. The RDF Schema vocabulary is given in Table~\ref{tab:rdfsvocabulary} as it appears in \cite{hayes2004a}. The shortcuts that we will use for each of the terms are given in brackets.
We use $rdfsV$ to denote the RDF Schema vocabulary.

\begin{table}
\begin{center}
\begin{tabular}{lll}
rdfs:domain[$\smtt{dom}$]   & rdfs:Container[$\smtt{cont}$]  & rdf:Bag[$\smtt{bag}$]  \\
rdfs:range[$\smtt{range}$]  & rdfs:isDefinedBy[$\smtt{isDefined}$] & rdf:Seq[$\smtt{seq}$] \\
rdfs:Class[$\smtt{class}$]  & rdfs:Literal[$\smtt{literal}$]  & rdf:List[$\smtt{list}$] \\
rdf:value[$\smtt{value}$]   & rdfs:subClassOf[$\smtt{sc}$]  & rdf:Alt[$\smtt{alt}$] \\
rdfs:label[$\smtt{label}$]  & rdfs:subPropertyOf[$\smtt{sp}$]  & rdf: 1[$\smtt{1}$]\\
rdf:nil[$\smtt{nil}$]       & rdfs:comment[$\smtt{comment}$]  & \ldots\\
rdf:type[$\smtt{type}$]     &  rdf:predicate[$\smtt{pred}$] & rdf: i[$\smtt{2}$]\\
rdf:object[$\smtt{obj}$]    & rdf:Statement[$\smtt{stat}$] & rdf:first[$\smtt{first}$]\\
rdf:subject[$\smtt{subj}$]      & rdfs:seeAlso[$\smtt{seeAlso}$]  &  rdf:rest[$\smtt{rest}$]\\
rdf:Property[$\smtt{prop}$]       & rdfs:Datatype[$\smtt{datatype}$] & rdfs:member[$\smtt{member}$]\\
rdfs:Resource[$\smtt{res}$]     & rdf:XMLLiteral[$\smtt{xmlLit}$] \\
\multicolumn{3}{c}{rdfs:ContainerMembershipProperty[$\smtt{contMP}$]} \\
\end{tabular}		
\end{center}
\caption{The RDF Schema vocabulary.}	\label{tab:rdfsvocabulary}
\end{table}

We will consider here a core subset of RDFS, $\rho$df \cite{munoz2009a}, also called the description logic fragment of RDFS \cite{cali2009a}. It contains the following vocabulary:
\begin{center}
$\rho$df = $\{\smtt{sc}, \smtt{sp}, \smtt{type}, \smtt{dom}, \smtt{range}\}$
\end{center}

\begin{example}[RDFS]\label{ex:rdfs}
The RDF graph of Example~\ref{ex:rdf} can be expressed in the context of an RDF Schema which provides more information about the vocabulary that it  uses. It specifies subtypes of genes and subtypes of regulation relations.\\

\noindent\begin{tabular}{ll}
\begin{minipage}{.65\textwidth}
\begin{smalltt}
dm:maternal rdfs:subClassOf rn:gene.
dm:gap rdfs:subClassOf rn:gene.
rn:regulates rdfs:domain rn:gene.
rn:regulates rdfs:range rn:gene.
rn:inhibits rdfs:subPropertyOf rn:regulates.
rn:promotes rdfs:subPropertyOf rn:regulates.
rn:inhibits\_translation rdfs:subPropertyOf rn:inhibits.
rn:inhibits\_transcription rdfs:subPropertyOf rn:inhibits.
\end{smalltt}
\end{minipage}
&
\begin{minipage}{.3\textwidth}
\begin{smalltt}
dm:kni rdf:type dm:gap.
dm:hb rdf:type dm:gap.
dm:Kr rdf:type dm:gap.
dm:tll rdf:type dm:gap.
dm:bcd rdf:type dm:maternal.
dm:cad rdf:type dm:maternal.
\end{smalltt}
\end{minipage}
\end{tabular}
\end{example}

\subsubsection{RDF Schema semantics}

In addition to the usual interpretation mapping, a special mapping is used in RDFS interpretations for interpreting the set of classes which is a subset of $I_R$.

\begin{definition}[RDFS interpretation]
 An \emph{RDFS interpretation} of a vocabulary $V$ is a tuple $\langle I_R,$ $I_P,$ $Class,$ $I_{EXT},$ $I_{CEXT},$ $Lit,$ $\iota\rangle$ such that: 
\begin{itemize*}
\item $\langle I_R, I_P, I_{EXT}, \iota\rangle$ is an RDF interpretation;
\item $Class \subseteq I_R$ is a distinguished subset of $I_R$ identifying if a resource denotes a class of resources; 
\item $I_{CEXT}: Class \rightarrow 2^{I_R}$ is a mapping that assigns a set of resources to every resource denoting a class; 
\item $Lit \subseteq I_R$ is the set of literal values, $Lit$ contains all plain literals in $\mathcal {L} \cap V$.	
\end{itemize*}
\end{definition}

Specific conditions are added to the resources associated to terms of RDFS vocabularies in an RDFS interpretation to be an RDFS model of an RDFS graph. These conditions include the satisfaction of the RDF Schema axiomatic triples as appearing in the normative semantics of RDF \cite{hayes2004a}.

\begin{definition}[RDFS Model]\label{def:rdfsmodel}
Let $G$ be an RDFS graph, and $I=\langle I_R,$ $I_P,$ $Class,$ $I_{EXT},$ $I_{CEXT},$ $Lit,$ $\iota \rangle$ be an RDFS interpretation of a vocabulary $V \subseteq rdfsV \cup \mathcal{V}$ such that $\mathcal{V}(G) \subseteq V$.    Then $I$ is an \emph{RDFS model} of $G$ if and only if $I$ satisfies the following conditions:
\begin{enumerate*}
	\item Simple semantics:
				\begin{enumerate}[a)]
					\item there exists an extension $\iota'$ of $\iota$ to $\mathcal B(G)$ such that for each triple $\langle s, p, o\rangle$ of $G$, $\iota'(p) \in I_P$ and $\langle \iota'(s), \iota'(o)\rangle \in I_{EXT}(\iota'(p))$.
				\end{enumerate}

   \item RDF semantics:
				\begin{enumerate}[a)]
					\item $x \in I_P \Leftrightarrow \langle x, \iota'(\smtt{prop})\rangle \in  I_{EXT}(\iota'(\smtt{type}))$.
          \item If $\ell \in term(G)$ is a typed XML literal with lexical form $w$, then $\iota'(\ell)$ is the XML literal value of $w$, $\iota'(\ell) \in Lit$, and $\langle \iota'(\ell), \iota'(\smtt{xmlLit})\rangle \in I_{EXT}(\iota'(\smtt{type}))$.
				\end{enumerate}
   \item RDFS Classes:
   			\begin{enumerate}[a)]
					\item $x \in I_R$, $x \in I_{CEXT}(\iota'(\smtt{res}))$.
					\item $x \in Class$, $x \in I_{CEXT}(\iota'(\smtt{class}))$.
					\item $x \in Lit$, $x \in I_{CEXT}(\iota'(\smtt{literal}))$.
				\end{enumerate}

   \item RDFS Subproperty:
        \begin{enumerate}[a)]
					\item $I_{EXT}(\iota'(\smtt{sp}))$ is transitive and reflexive over $I_P$.
					\item if $\langle x, y\rangle \in I_{EXT}(\iota'(\smtt{sp}))$ then $x, y \in I_P$ and $I_{EXT}(x) \subseteq I_{EXT}(y)$.
				\end{enumerate}
   \item  RDFS Subclass:
        \begin{enumerate}[a)]
					\item $I_{EXT}(\iota'(\smtt{sc}))$ is transitive and reflexive over $Class$.
					\item $\langle x, y\rangle \in I_{EXT}(\iota'(\smtt{sc}))$, then $x, y \in Class$ and $I_{CEXT}(x)$ $\subseteq I_{CEXT}(y)$.
				\end{enumerate}
   \item RDFS Typing:
        \begin{enumerate}[a)]
					\item $x \in I_{CEXT}(y)$, $(x,y) \in I_{EXT}(\iota'(\smtt{type}))$.
					\item if $\langle x, y\rangle  \in I_{EXT}(\iota'(\smtt{dom}))$ and $\langle u, v\rangle \in I_{EXT}(x)$ then $u \in I_{CEXT}(y)$.
					\item if $\langle x, y\rangle  \in I_{EXT}(\iota'(\smtt{range}))$ and $\langle u, v\rangle \in I_{EXT}(x)$ then $v \in I_{CEXT}(y)$.
				\end{enumerate}
   \item RDFS Additionals:
        \begin{enumerate}[a)]
					\item if $x \in Class$ then $\langle x, \iota'(\smtt{res})\rangle \in I_{EXT}(\iota'(\smtt{sc}))$.
          \item if $x \in I_{CEXT}(\iota'(\smtt{datatype}))$ then $\langle x, \iota'(\smtt{literal})\rangle \in I_{EXT}(\iota'(\smtt{sc}))$.
          \item if $x \in I_{CEXT}(\iota'(\smtt{contMP}))$ then $\langle x, \iota'(\smtt{member})\rangle \in I_{EXT}(\iota'(\smtt{sp}))$.
				\end{enumerate}
\end{enumerate*}
\end{definition}

\begin{definition}[RDFS entailment]\label{def:rdfsentai}
Let $G$ and $P$ be two RDFS graphs, then $G$ \emph{RDFS-entails} $P$ (denoted by $G \models_{\text{RDFS}} P$) if and only if every RDFS model of $G$ is also an RDFS model of $P$.
\end{definition}

\begin{table*}
\centering
\begin{tabular}{ccc}
$\dfrac{\langle p,\smtt{sp},q\rangle ~~\langle q,\smtt{sp},r\rangle}{\langle p,\smtt{sp},r\rangle}$ & 
$\dfrac{\langle A,\smtt{sc},B\rangle~~\langle B,\smtt{sc},C\rangle}{\langle A,\smtt{sc},C\rangle}$ & 
$\dfrac{\langle p,\smtt{dom},A\rangle~~\langle x,p,y\rangle} {\langle x,\smtt{type},A\rangle}$ \vspace{2mm}\\ 
$\dfrac{\langle p,\smtt{sp},q\rangle~~\langle x,p,y\rangle} {\langle x,q,y\rangle}$  &
$\dfrac{\langle A,\smtt{sc},B\rangle~~\langle x,\smtt{type},A\rangle} {\langle x,\smtt{type},B\rangle}$ & 
$\dfrac{\langle p,\smtt{range},A\rangle~~\langle x,p,y\rangle} {\langle y,\smtt{type},A\rangle}$ 
\end{tabular}
\caption{RDFS inference rules (from \protect\cite{munoz2009a}).}\label{tab:rdfs-rules}
\end{table*}

\cite{munoz2009a} has introduced the reflexive relaxed semantics for RDFS in which \smtt{rdfs:subClassOf} and \smtt{rdfs:subPropertyOf} do not have to be reflexive. The entailment relation with this semantics is noted $\models_{\text{RDFS}}^{\text{nrx}}$.

The reflexive relaxed semantics does not change much RDFS. 
Indeed, from the standard (reflexive) semantics, we can deduce that any class (respectively, property) is a subclass (respectively, subproperty) of itself. 
For instance, $\langle\smtt{dm:hb}~\smtt{rn:inhibits}~\smtt{dm:kni}\rangle$ only entails $\langle\smtt{rn:inhibits}~\smtt{sp}~\smtt{rn:inhibits}\rangle$ and variations of this triple in which occurrences of \smtt{rn:inhibits} are replaced by variables.
The reflexivity requirement only entails reflectivity assertions which do not interact with other triples unless constraints are added to the \smtt{rdfs:subPropertyOf} or \smtt{rdfs:subClassOf} predicates. 
Therefore, it is assumed that elements of the RDFS vocabulary appear only in the predicate position.
We will call \emph{genuine}, RDFS graphs which do not constrain the elements of the $\rho$df vocabulary (and thus these two predicates), and restrict us to querying genuine RDFS graphs.

However, when issuing queries involving these relations, e.g., with a graph pattern like $\langle$\smtt{?x}~\smtt{sp}~\smtt{?y}$\rangle$, all properties in the graph will be answers. 
Since this would clutter results, we assume, as done in \cite{munoz2009a}, that queries use the reflexive relaxed semantics. 
It is easy to recover the standard semantics by providing the additional triples when \smtt{sp} or \smtt{sc} are queried. 

In the following, we use the closure graph of an RDF graph $G$, denoted by $closure(G)$, which is defined by the graph obtained by saturating $G$ with all triples that can be deduced using rules of Table~\ref{tab:rdfs-rules}.

The SPARQL specification \cite{prudhommeaux2008a} introduces the notion of entailment regimes.
These regimes contain several components (query language, graph language, inconsistency handling) \cite{glimm2010a}. 
We concentrate here on the definition of answers which, in particular, replace simple RDF entailment for answering queries. 
It is possible to define answers to SPARQL queries modulo RDF Schema, by using RDFS entailment as the entailment regime.

\begin{definition}[Answers to a SPARQL query modulo RDF Schema]\label{def:answersparqlmodulo}
Let $\texttt{SELECT~}\vec{B}$ $\texttt{FROM~}u$ $\texttt{WHERE~}P$ be a SPARQL query with $P$ a GRDF graph and $G$ be the RDFS graph identified by the URI $u$, 
then the set of \emph{answers} to this query \emph{modulo RDF Schema} is:
$$\mathcal{A}^{\#}(\vec{B}, G, P)=\{\sigma|_{\vec{B}}^{\vec{B}}| G\models_{RDFS}^{\text{nrx}} \sigma(P) \}$$
\end{definition}

This definition is justified by the analogy between RDF entailment and RDFS entailment in the definition of answers to queries (see \cite{alkhateeb2009a}). It does not fully correspond to the RDFS entailment regime defined in \cite{glimm2010a} since we do not restrict the answer set to be finite. This is not strictly necessary, however, the same restrictions as in \cite{glimm2010a} can be applied.

The problem is to specify a query engine that can find such answers.

\subsection{Querying against ter Horst closure}\label{sec:rdfscloure}

One possible approach for querying an RDFS graph $G$ in a sound and complete way is by computing the closure graph of $G$, i.e., the graph obtained by saturating $G$ with all information that can be deduced using a set of predefined rules called RDFS rules, before evaluating the query over the closure graph. 

\begin{definition}[RDFS closure]\label{def:rdfsclosure}
Let $G$ be an RDFS graph on an RDFS vocabulary $V$. 
The \emph{RDFS closure} of $G$, denoted $\hat{G}$, is the smallest set of triple containing $G$ and satisfying the following constraints:

\begin{tabular}{rp{12cm}}
[RDF1] & all RDF axiomatic triples \cite{hayes2004a} are in $\hat{G}$;\\

[RDF2] & if $\langle s,p,o \rangle\in\hat{G}$, 
then $\langle p,\smtt{type},\smtt{prop}\rangle\in\hat{G}$;\\

[RDF3] & if $\langle s,p,\ell \rangle\in\hat{G}$, where $\ell$ is an $\smtt{xmlLit}$ typed literal and the lexical representation $s$ is a well-formed  XML literal,\\
& then $\langle s,p,xml(s) \rangle\in\hat{G}$ and $\langle xml(s), \smtt{type},$ $\smtt{xmlLit} \rangle\in\hat{G}$;\\

[RDFS 1] & all RDFS axiomatic triples \cite{hayes2004a} are in $\hat{G}$;\\

[RDFS 6] & if $\langle  a, \smtt{dom}, x  \rangle\in\hat{G}$  and $\langle u, a ,y  \rangle\in\hat{G}$, 
then  $\langle u,$ $\smtt{type}, x  \rangle\in\hat{G}$;\\	

[RDFS 7] & if $\langle  a ,\smtt{range}, x  \rangle\in\hat{G}$ and $\langle u, a ,v  \rangle\in\hat{G}$, 
then  $\langle v,$ $ \smtt{type},x  \rangle\in\hat{G}$;\\

[RDFS 8a] & if $\langle x, \smtt{type}, \smtt{prop}  \rangle\in\hat{G}$, 
then $\langle x ,\smtt{sp}, x  \rangle\in\hat{G}$;\\

[RDFS 8b] & if $\langle  x, \smtt{sp}, y  \rangle\in\hat{G}$ and $\langle y ,\smtt{sp}, z  \rangle\in\hat{G}$, 
then $\langle x,$ $ \smtt{sp},z\rangle\in\hat{G}$;\\

[RDFS 9] & if $\langle  a, \smtt{sp}, b  \rangle\in\hat{G}$ and $\langle x, a, y  \rangle\in\hat{G}$, 
then $\langle x,$ $b, y  \rangle\in\hat{G}$;\\

[RDFS 10] & if $\langle x, \smtt{type} ,\smtt{class} \rangle\in\hat{G}$, 
then $\langle x, \smtt{sc}, \smtt{res} \rangle\in\hat{G}$;\\

[RDFS 11] & if $\langle u, \smtt{sc}, x  \rangle\in\hat{G}$ and $ \langle y, \smtt{type}, u \rangle\in\hat{G}$, 
then $\langle y, $ $\smtt{type},x  \rangle\in\hat{G}$;\\


[RDFS 12a] & if $\langle  x, \smtt{type}, \smtt{class} \rangle\in\hat{G}$, then $\langle x, \smtt{sc}, x  \rangle\in\hat{G}$;\\

[RDFS 12b] & if $\langle  x, \smtt{sc}, y  \rangle\in\hat{G}$ and $ \langle y ,\smtt{sc}, z  \rangle\in\hat{G}$, 
then $\langle x , $ $\smtt{sc},z \rangle\in\hat{G}$;\\

[RDFS 13]  & if $\langle x, \smtt{type}, \smtt{contMP} \rangle\in\hat{G}$,
 then $\langle x, \smtt{prop},$ $\smtt{member}\rangle\in\hat{G}$;\\

[RDFS 14] & if $\langle x ,\smtt{type}, \smtt{datatype} \rangle\in\hat{G}$,
then $\langle x, \smtt{sc},$ $ \smtt{literal} \rangle\in\hat{G}$.\\
\end{tabular}
\end{definition}

It is easy to show that this closure always exists and can be obtained by turning the constraints into rules, thus defining a closure operation.

\begin{example}[RDFS Closure]\label{ex:closure}
The RDFS closure of the RDF graph of Example~\ref{ex:rdf} augmented by the RDFS triples of Example~\ref{ex:rdfs} contains, in particular, the following assertions:
\begin{smalltt}
dm:bcd rn:inhibits dm:cad. // [RDFS 9]
dm:hb rn:regulates dm:kni. // [RDFS 9]
dm:hb type rn:gene.        // [RDFS 6]
\end{smalltt}
\end{example}

Because of axiomatic triples, this closure may be infinite, but
a finite and polynomial closure, called \emph{partial closure}, has been proposed independently in \cite{baget2003a} and \cite{terhorst2005a}. 

\begin{definition}[Partial RDFS closure]\label{def:partialrdfsclosure}
Let $G$ and $H$ be two RDFS graphs on an RDFS vocabulary $V$,
the \emph{partial RDFS closure} of $G$ given $H$, denoted $\hat{G}\backslash H$, is obtained in the following way:
\begin{enumerate*}
\item  let $k$ be the maximum of $i$'s such that \smtt{rdf\!:\!\_i} is a term of $G$ or of $H$;
\item replace the rule [RDF 1] by the rule \\
\begin{tabular}{rp{12cm}}
[RDF 1P] & add all RDF axiomatic triples \cite{hayes2004a} except those that use \smtt{rdf\!:\!\_i} with $i > k$;\\
\end{tabular}\\
\noindent In the same way, replace the rule [RDFS 1] by the rule \\
\begin{tabular}{rp{12cm}}
[RDFS 1P] & add all RDFS axiomatic triples except those that use \smtt{rdf\!:\!\_i} with $i > k$;\\
\end{tabular}
\item apply the modified rules.
\end{enumerate*}
\end{definition}

Applying the partial closure to an RDFS graph permits to reduce RDFS entailment to simple RDF entailment. 

\begin{proposition}[Completeness of partial RDFS closure \cite{hayes2004a}]\label{thm:rdfsentailment} 
Let $G$ be a satisfiable RDFS graph and $H$ an RDFS graph, then $G \models_{RDFS} H$ if and only if $(\hat{G}\backslash H) \models_{RDF} H$. 
\end{proposition}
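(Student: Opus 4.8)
The plan is to establish the biconditional of Proposition~\ref{thm:rdfsentailment} in two directions, using the completeness of the (full) RDFS closure $\hat G$ together with the observation that, when the target graph $H$ is fixed in advance, only finitely many axiomatic triples are ever relevant. First I would recall the classical fact (the RDFS completeness lemma in \cite{hayes2004a}) that for a satisfiable RDFS graph $G$ and any RDFS graph $H$, one has $G \models_{RDFS} H$ iff $\hat G \models_{RDF} H$, where $\models_{RDF}$ is simple entailment, characterised by the homomorphism criterion of Proposition~\ref{def:rdfentail}. The task is then to replace the possibly infinite $\hat G$ by the finite partial closure $\hat G\backslash H$ without losing any needed entailment.

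For the ``if'' direction, suppose $(\hat G\backslash H)\models_{RDF} H$. Since $\hat G\backslash H$ is obtained from $G$ by applying a subset of the closure rules (the modified [RDF~1P]/[RDFS~1P] add only a subset of the axiomatic triples, all other rules are unchanged), every triple of $\hat G\backslash H$ is a triple of $\hat G$; hence $\hat G\backslash H\subseteq\hat G$. By monotonicity of simple entailment (adding triples only adds models-constraints, equivalently a homomorphism into a subgraph composes with the inclusion), $\hat G\models_{RDF} H$, and then by the completeness lemma $G\models_{RDFS} H$. This direction is essentially immediate.

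The ``only if'' direction is the substantive one. Assume $G\models_{RDFS} H$, so by the completeness lemma $\hat G\models_{RDF} H$, i.e.\ there is a map $\sigma:\mathcal T(H)\to\mathcal T(\hat G)$ with $\sigma(t)\in\hat G$ for every triple $t\in H$. I must show the image of $\sigma$ already lies in $\hat G\backslash H$. The key point is the choice of $k$ in Definition~\ref{def:partialrdfsclosure} as the largest index $i$ with $\smtt{rdf:\_i}$ occurring in $G$ or $H$: the only triples of $\hat G$ that fail to appear in $\hat G\backslash H$ are axiomatic triples mentioning some $\smtt{rdf:\_i}$ with $i>k$ (one needs to check the rule-by-rule argument that no non-axiomatic rule can (re)introduce such a term once it is absent from $G$ and $H$ — the container-membership rules [RDFS~13]/[RDFS~14] only propagate existing $\smtt{rdf:\_i}$ terms, they never create a fresh index). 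Now each triple $\sigma(\langle s,p,o\rangle)$ has its subject and object among $\sigma(\mathcal T(H))\subseteq \mathcal T(\hat G)$, but since $\sigma$ fixes names and $H$ contains no $\smtt{rdf:\_i}$ with $i>k$, and since $s,o$ range over $\mathcal T(H)$, the matched triple cannot be one of the excluded axiomatic triples; hence $\sigma(t)\in\hat G\backslash H$ for all $t\in H$, giving $(\hat G\backslash H)\models_{RDF} H$.

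The main obstacle is precisely this last bookkeeping step: one must argue carefully that discarding the axiomatic triples with large container indices cannot destroy any homomorphism witnessing $\hat G\models_{RDF} H$, which requires (i) the syntactic observation that variables of $H$ can always be mapped to terms already needed, so the ``large-index'' axiomatic triples are never the target of a match when $H$ itself is index-bounded, and (ii) a closure-stability claim that the rule set does not regenerate the removed triples. Both are routine but must be spelled out; everything else follows from Definition~\ref{def:rdfentail}, monotonicity, and the cited completeness lemma.
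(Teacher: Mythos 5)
The paper does not actually prove this proposition: it is imported from the literature (\cite{hayes2004a}, with the partial closure from ter Horst and Baget et al.), so there is no in-paper argument to compare yours with; I can only assess your proof on its own merits. Your reduction to the classical lemma ($G\models_{RDFS}H$ iff $\hat{G}\models_{RDF}H$ for satisfiable $G$) and your ``if'' direction (monotonicity: $\hat{G}\backslash H\subseteq\hat{G}$, so a homomorphism into the partial closure is one into the full closure) are fine, as is your observation that every triple of $\hat{G}$ missing from $\hat{G}\backslash H$ mentions some \smtt{rdf:\_i} with $i>k$.

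The gap is in the ``only if'' direction, and it is exactly the step you wave off as routine bookkeeping. You argue that because $\sigma$ fixes names and $H$ contains no \smtt{rdf:\_i} with $i>k$, the image of $\sigma$ cannot touch an excluded axiomatic triple. That does not follow: $\sigma$ is only required to be the identity on names, and a \emph{variable} of $H$ may be mapped to \smtt{rdf:\_i} with $i>k$, since these terms do occur in the full closure $\hat{G}$ (e.g.\ $H=\{\langle ?x,\smtt{type},\smtt{contMP}\rangle\}$ can be matched in $\hat{G}$ only against axiomatic triples $\langle\smtt{rdf:\_i},\smtt{type},\smtt{contMP}\rangle$). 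So the witnessing homomorphism need not land inside $\hat{G}\backslash H$; the real content of the proposition is that it can be \emph{modified} to do so. The missing idea is a remapping argument: every triple of $\hat{G}$ involving a fresh index $i>k$ is (a consequence of) an axiomatic triple and is replicated verbatim at every other index, so any variable sent to such a term can be re-sent to a retained index, say \smtt{rdf:\_1}. Note that this only works if at least one container-membership index survives in the partial closure; with Definition~\ref{def:partialrdfsclosure} taken literally and $k=0$ (no \smtt{rdf:\_i} in $G$ or $H$), the example above even yields $G\models_{RDFS}H$ while $(\hat{G}\backslash H)\not\models_{RDF}H$, which is precisely why ter Horst's formulation keeps at least one such index. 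Spelling out this remapping (and the edge case) is the substance of the proof, not a routine afterthought.
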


The completeness does not hold if $G$ is not satisfiable because in such a case, any graph $H$ is a consequence of $G$ and $\models_{RDF}$ does not reflect this (no RDF graph can be inconsistent).
In case $G$ is unsatisfiable, the RDFS entailment regime allows for raising an error \cite{glimm2010a}.
An RDFS graph can be unsatisfiable only if it contains datatype conflicts \cite{terhorst2005a} which can be found in polynomial time.

Since queries must adopt the reflexive relaxed semantics, we have to further restrict this closure. It can be obtained by suppressing constraints RDFS8a and RDFS12a from the closure operation. We denote the partial non reflexive closure  $\hat{G}\backslash\backslash H$.

\begin{proposition}[Completeness of partial non reflexive RDFS closure]\label{thm:rdfnrxsentailment} 
Let $G$ be a satisfiable genuine RDFS graph and $H$ an RDFS graph, then $G \models_{RDFS}^{\text{nrx}}  H$ if and only if $(\hat{G}\backslash\backslash H) \models_{RDF} H$. 
\end{proposition}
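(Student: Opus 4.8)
The statement to prove is Proposition~\ref{thm:rdfnrxsentailment}: for a satisfiable genuine RDFS graph $G$ and an RDFS graph $H$, $G \models_{RDFS}^{\text{nrx}} H$ iff $(\hat{G}\backslash\backslash H) \models_{RDF} H$. The plan is to derive this from the already-established reflexive version (Proposition~\ref{thm:rdfsentailment}) by carefully tracking exactly what the two suppressed closure rules RDFS8a and RDFS12a contribute, and matching that against what the relaxed semantics $\models_{RDFS}^{\text{nrx}}$ drops relative to $\models_{RDFS}$ (namely the reflexivity of $\smtt{sp}$ over $I_P$ and of $\smtt{sc}$ over $Class$).

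First I would set up the soundness direction ($\Leftarrow$). Since $\hat{G}\backslash\backslash H \subseteq \hat{G}\backslash H$, and since every RDF triple derivable by the partial closure rules (minus RDFS8a, RDFS12a) is also RDFS-entailed by $G$ under the relaxed semantics — this is a routine rule-by-rule check that each remaining closure rule in Definition~\ref{def:partialrdfsclosure} corresponds to a valid $\models_{RDFS}^{\text{nrx}}$ inference — we get $G \models_{RDFS}^{\text{nrx}} (\hat{G}\backslash\backslash H)$. Then from $(\hat{G}\backslash\backslash H) \models_{RDF} H$ and the fact that RDF entailment implies RDFS (relaxed) entailment, transitivity of $\models_{RDFS}^{\text{nrx}}$ gives $G \models_{RDFS}^{\text{nrx}} H$.

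The completeness direction ($\Rightarrow$) is the substantive part. The standard argument (as in \cite{hayes2004a} for the reflexive case) builds a canonical Herbrand-style model $I_{\hat{G}\backslash\backslash H}$ from the partial closure and shows it is an RDFS model of $G$; then $G \models_{RDFS}^{\text{nrx}} H$ forces $I_{\hat{G}\backslash\backslash H}$ to be a relaxed RDFS model of $H$, which by construction of the canonical model yields an RDF homomorphism from $H$ into $\hat{G}\backslash\backslash H$, i.e. $(\hat{G}\backslash\backslash H) \models_{RDF} H$. The key point I would emphasize is that because $G$ is \emph{genuine}, the elements of the $\rho$df vocabulary (in particular $\smtt{sp}$ and $\smtt{sc}$) occur only in predicate position, so the only role reflexivity of $I_{EXT}(\iota'(\smtt{sp}))$ and $I_{EXT}(\iota'(\smtt{sc}))$ could play — namely generating triples $\langle x,\smtt{sp},x\rangle$ / $\langle x,\smtt{sc},x\rangle$ that then feed into rules RDFS9 / RDFS11 — is vacuous, since no such triple can match a premise once we are in the genuine setting (the added reflexive triples only interact via those very rules, and they never produce anything new beyond what the non-reflexive closure already has, except the reflexive triples themselves). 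Hence the canonical model built from $\hat{G}\backslash\backslash H$ can be repaired into a genuine relaxed RDFS model of $G$ without needing RDFS8a/RDFS12a, and conversely $H$ being a relaxed consequence cannot require any reflexive $\smtt{sp}$/$\smtt{sc}$ triple in the witnessing graph.

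The main obstacle I anticipate is making the last claim airtight: one must show that suppressing RDFS8a and RDFS12a removes \emph{exactly} the triples that the relaxed semantics no longer forces, with no collateral loss of other entailed triples. This requires a small confluence/independence lemma — that in the closure computation the reflexive triples $\langle x,\smtt{sp},x\rangle$ and $\langle x,\smtt{sc},x\rangle$ are derivation "sinks" in the genuine case (they trigger RDFS8b/RDFS12b trivially and RDFS9/RDFS11 only with $x$ itself, producing triples already present) — plus care with the satisfiability hypothesis, which (exactly as in the reflexive case, via \cite{terhorst2005a}) is needed only to rule out datatype conflicts so that the canonical model is well-defined. I would handle this by contrasting $\hat{G}\backslash H$ and $\hat{G}\backslash\backslash H$ directly: show $\hat{G}\backslash H = (\hat{G}\backslash\backslash H) \cup R$ where $R$ is the set of reflexivity triples, and that $R$ is "inert" over genuine graphs, then invoke Proposition~\ref{thm:rdfsentailment} together with the observation that $H$, queried under the relaxed semantics, is entailed iff it maps into the non-reflexive part.
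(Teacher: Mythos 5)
Your overall route coincides with the paper's: both arguments reduce the statement to Proposition~\ref{thm:rdfsentailment} by showing that $\hat{G}\backslash H$ and $\hat{G}\backslash\backslash H$ differ only by the reflexivity triples introduced by [RDFS8a] and [RDFS12a], that these triples are inert in the closure computation over genuine graphs, and that the relaxed semantics no longer forces the corresponding reflexive pairs in models. So there is no methodological divergence; the issue is whether your inertness claim is actually established.

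As stated, your ``derivation sink'' lemma has a concrete hole. You check that a reflexive triple $\langle p,\smtt{sp},p\rangle$ or $\langle c,\smtt{sc},c\rangle$ feeds [RDFS8b]/[RDFS12b] and [RDFS9]/[RDFS11] only trivially, but you omit the one interaction that can produce something genuinely new: such a triple can be consumed as the second premise of [RDFS6] or [RDFS7], so that $\langle p,\smtt{sp},p\rangle$ together with a graph-supplied $\langle \smtt{sp},\smtt{dom},A\rangle$ (or a \smtt{range} statement on \smtt{sc}) would yield a fresh typing triple $\langle p,\smtt{type},A\rangle$ absent from the non-reflexive closure. This is exactly where the paper's proof invokes genuineness --- a genuine graph cannot constrain \smtt{sp} or \smtt{sc} by \smtt{dom}/\smtt{range} statements --- whereas you attach the genuineness hypothesis to the [RDFS9]/[RDFS11] case, where it is not the real issue (there the derived triples simply repeat a premise, or the required premise would need \smtt{sp}/\smtt{sc} in subject position). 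Relatedly, your converse direction (``$H$ being a relaxed consequence cannot require any reflexive \smtt{sp}/\smtt{sc} triple'') is asserted rather than argued; the paper discharges it by enumerating the conditions of Definition~\ref{def:rdfsmodel} that can force $\langle p,p\rangle\in I_{EXT}(\iota'(\smtt{sp}))$ or $\langle c,c\rangle\in I_{EXT}(\iota'(\smtt{sc}))$ in a model: an explicit triple of $G$ (still present in $\hat{G}\backslash\backslash H$), the reflexivity clauses (4a)/(5a) (exactly what $\models_{RDFS}^{\text{nrx}}$ drops), or clauses (4b)/(6b)/(6c), which again presuppose constraints on \smtt{sc}/\smtt{sp} and are excluded for genuine graphs. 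With the [RDFS6]/[RDFS7] case added and this model-side check made explicit, your argument becomes the paper's proof.
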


This has the following corollary: 

\begin{corollary}\label{thm:answersparqlmodulo}
$$\mathcal{A}^{\#}(\vec{B}, G, P)=\mathcal{A}(\vec{B}, \hat{G}\backslash\backslash P, P)$$
\end{corollary}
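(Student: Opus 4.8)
The plan is to chain together the semantic characterization of SPARQL answers (Proposition~\ref{prop:sparql}), the definition of answers modulo RDF Schema (Definition~\ref{def:answersparqlmodulo}), and the completeness of the partial non reflexive closure (Proposition~\ref{thm:rdfnrxsentailment}). Concretely, fix a tuple of variables $\vec{B}$, a genuine RDFS graph $G$ identified by some URI, and a GRDF graph pattern $P$, and we wish to show the two answer sets coincide as sets of maps.

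First I would unfold the left-hand side: by Definition~\ref{def:answersparqlmodulo}, $\mathcal{A}^{\#}(\vec{B}, G, P)=\{\sigma|_{\vec{B}}^{\vec{B}}\mid G\models_{RDFS}^{\text{nrx}}\sigma(P)\}$. Next I would unfold the right-hand side: since $P$ is a GRDF graph (a basic graph pattern), Proposition~\ref{prop:sparql} applied to the graph $\hat{G}\backslash\backslash P$ gives $\mathcal{A}(\vec{B}, \hat{G}\backslash\backslash P, P)=\{\sigma|_{\vec{B}}^{\vec{B}}\mid (\hat{G}\backslash\backslash P)\models_{RDF}\sigma(P)\}$. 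So it suffices to show, for every map $\sigma$ with domain $\mathcal{B}(P)$, the equivalence $G\models_{RDFS}^{\text{nrx}}\sigma(P)\iff (\hat{G}\backslash\backslash P)\models_{RDF}\sigma(P)$. This is almost exactly the statement of Proposition~\ref{thm:rdfnrxsentailment}, but with $\sigma(P)$ in place of the arbitrary RDFS graph $H$, and with the closure taken relative to $P$ rather than relative to $\sigma(P)$.

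The two small gaps to close are: (i) $\sigma(P)$ is a legitimate RDFS graph to which Proposition~\ref{thm:rdfnrxsentailment} applies — this holds because $\sigma$ substitutes terms for variables and $P$ is genuine, so $\sigma(P)$ introduces no new constraints on the $\rho$df vocabulary; and (ii) the closure $\hat{G}\backslash\backslash P$ can replace $\hat{G}\backslash\backslash\sigma(P)$. For (ii), I would note that the partial closure only depends on $H$ through the maximum index $k$ of container-membership terms $\smtt{rdf\!:\!\_i}$ occurring in it, and since $\sigma$ maps variables to terms of $G$ (we restrict attention to maps into $\mathcal{T}(G)$, as the answer definitions do), every such term in $\sigma(P)$ already occurs in $G$, hence does not increase $k$ beyond what $G$ alone forces; thus $\hat{G}\backslash\backslash P$ and $\hat{G}\backslash\backslash\sigma(P)$ have the same relevant axiomatic triples and the completeness statement transfers. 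Alternatively, and more cleanly, one observes that $G$ is satisfiable and genuine, so its closure is already closed under all applicable rules, and the dependence on $H$ is immaterial for entailment of $H$ once $k$ is large enough; one may simply invoke Proposition~\ref{thm:rdfnrxsentailment} with $H=\sigma(P)$ and then argue $\hat{G}\backslash\backslash\sigma(P)\models_{RDF}\sigma(P)$ iff $\hat{G}\backslash\backslash P\models_{RDF}\sigma(P)$.

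The main obstacle I anticipate is the bookkeeping around the index $k$ in Definition~\ref{def:partialrdfsclosure}: making precise that substituting via a map into $\mathcal{T}(G)$ cannot manufacture container-membership terms absent from $G$, and therefore that conditioning the partial closure on $P$ versus on $\sigma(P)$ yields graphs that are RDF-equivalent for the purpose of entailing $\sigma(P)$. Once that is dispatched, the corollary is a purely mechanical substitution of one characterization into another. I would also remark, as the paper does after Proposition~\ref{thm:rdfnrxsentailment}, that satisfiability of $G$ is needed for the $\Leftarrow$-free direction of the underlying completeness result, and that this hypothesis is inherited here; if $G$ is unsatisfiable the RDFS entailment regime raises an error and the identity is vacuous or out of scope.
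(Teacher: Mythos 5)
Your proposal is correct and follows essentially the same route as the paper: unfold both sides via Definition~\ref{def:answersparqlmodulo} and Proposition~\ref{prop:sparql}, then reduce the resulting set equality to Proposition~\ref{thm:rdfnrxsentailment} under the satisfiability and genuineness hypotheses. The only differences are cosmetic: the paper additionally invokes Lemma~\ref{lem:induction} so the argument covers compound graph patterns, while you stay with basic (GRDF) patterns as in Definition~\ref{def:answersparqlmodulo} and instead make explicit the bookkeeping about conditioning the partial closure on $P$ versus $\sigma(P)$, a point the paper passes over silently.
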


\begin{example}[SPARQL evaluation modulo RDFS]\label{ex:rdfseval}
If the query of Example~\ref{ex:query} is evaluated against the RDFS closure of Example~\ref{ex:closure},
it will return the three expected answers:
\begin{align*}
\{ & \langle \smtt{dm:hb}, \smtt{dm:kni}, \smtt{dm:Kr}\rangle\\
& \langle \smtt{dm:bcd}, \smtt{dm:tll}, \smtt{dm:Kr}\rangle\\
& \langle \smtt{dm:bcd}, \smtt{dm:cad}, \smtt{dm:kni}\rangle \}
\end{align*}
This can be obtained easily from the simple graph of Example~\ref{ex:rdf} augmented by the triples of Example~\ref{ex:closure}.
\end{example}

This shows the correctness and completeness of the closure approach.
This approach has several drawbacks which limit its use: 
It still tends to generate a very large graph which makes it not very convenient, especially if the transformation has to be made on the fly, i.e., when the query is evaluated; 
It takes time proportional to $|H| \times |G|^2$ in the worst case \cite{munoz2009a}; 
Moreover, it is not applicable if one has no access to the graph to be queried but only to a SPARQL endpoint. 
In this case, it is not possible to compute the closure graph.

Since the complexity of the partial closure has been shown to be polynomial \cite{terhorst2005a}, \textsc{$\mathcal{A}^{\#}$-Answer checking} remains PSPACE-complete.
 
\begin{proposition}[Complexity of \textsc{$\mathcal{A}^{\#}$-Answer checking}]\label{prop:terHorstcmpl}
\textsc{$\mathcal{A}^{\#}$-Answer checking} is PSPACE-complete.
\end{proposition}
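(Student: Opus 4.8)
The plan is to establish matching upper and lower bounds. For the upper bound, I would use Corollary~\ref{thm:answersparqlmodulo}, which reduces checking $\sigma \in \mathcal{A}^{\#}(\vec{B}, G, P)$ to checking $\sigma \in \mathcal{A}(\vec{B}, \hat{G}\backslash\backslash P, P)$, i.e., to ordinary SPARQL \textsc{$\mathcal{A}$-Answer checking} against the partial non-reflexive closure $\hat{G}\backslash\backslash P$. The key quantitative input is that the partial closure is computable in polynomial time and has polynomial size in $|G|+|P|$ (invoking \cite{terhorst2005a}, or the bound $|H|\times|G|^2$ mentioned in the text; suppressing rules RDFS8a/RDFS12a does not affect this). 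So first I would compute $\hat{G}\backslash\backslash P$ in polynomial time, producing a graph $G'$ whose size is polynomially bounded. Then I would invoke Proposition~\ref{prop:SPARQLcmpl}, which says \textsc{$\mathcal{A}$-Answer checking} is in PSPACE, on the instance $(G', P, \vec{B}, \sigma)$. Since PSPACE is closed under polynomial-time many-one reductions (indeed under polynomial-time Turing reductions), the composition — polynomial-time closure computation followed by a PSPACE decision on a polynomially larger input — stays in PSPACE. This gives membership.

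For the lower bound, I would argue that \textsc{$\mathcal{A}^{\#}$-Answer checking} is PSPACE-hard by reducing from \textsc{$\mathcal{A}$-Answer checking} for plain SPARQL, which is PSPACE-hard by Proposition~\ref{prop:SPARQLcmpl}. The observation is that any RDF graph $G$ can be viewed as a genuine RDFS graph that uses no $\rho$df vocabulary in a constraining position; for such a graph the RDFS closure adds only triples that do not interact with $G$'s own triples when $P$ is a suitably chosen pattern (one avoiding \smtt{type}, \smtt{sc}, \smtt{sp}, \smtt{dom}, \smtt{range}, and avoiding matches against axiomatic triples). Concretely, given an instance $(G,P,\vec B,\sigma)$ of SPARQL \textsc{$\mathcal{A}$-Answer checking}, I would rename the predicates and class/property terms occurring in $G$ and $P$ to fresh URIs outside $rdfsV$, so that $G\models_{RDF}\sigma(P)$ iff the renamed query has answer $\sigma$ on the renamed $G$ modulo RDFS — because the extra closure triples are over disjoint vocabulary and cannot help satisfy $\sigma(P)$. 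Hence $\mathcal{A}^{\#}$ coincides with $\mathcal{A}$ on these instances and PSPACE-hardness transfers.

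Combining the two directions yields PSPACE-completeness. The main obstacle I anticipate is not the complexity-theoretic bookkeeping but making the hardness reduction airtight: one must verify that on the renamed instance the partial non-reflexive closure genuinely introduces nothing that a graph pattern over the fresh vocabulary could match (including being careful about the RDF/RDFS axiomatic triples, which mention \smtt{rdf:type}, \smtt{rdf:Property}, etc., and about rule RDF2 which fires \smtt{type} on every predicate). A clean way around this is to observe that the distinguished query $P$ and the map $\sigma$ in the source instance can be taken to contain no $\rho$df or $rdfsV$ terms at all, so even though the closure $\hat{G}\backslash\backslash P$ is larger than $G$, the set of maps $\sigma|_{\vec B}^{\vec B}$ with $\hat{G}\backslash\backslash P \models_{RDF} \sigma(P)$ is exactly the set with $G\models_{RDF}\sigma(P)$. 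With that lemma in hand the reduction is immediate and the proposition follows.
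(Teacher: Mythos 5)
Your plan matches the paper's own (one-line) justification: by Corollary~\ref{thm:answersparqlmodulo} and the polynomial-time, polynomial-size computability of the partial non-reflexive closure \cite{terhorst2005a}, answer checking modulo RDFS reduces to plain SPARQL \textsc{$\mathcal{A}$-Answer checking} (Proposition~\ref{prop:SPARQLcmpl}), giving PSPACE membership, while the hardness direction, left implicit in the paper, is the inheritance from SPARQL that you spell out. One caveat on your hardness lemma: excluding $rdfsV$ terms from $P$ is not by itself sufficient, because GRDF triple patterns may carry variables in predicate position which could match closure triples such as $\langle p,\smtt{type},\smtt{prop}\rangle$ or axiomatic triples (and, via \texttt{OPT}, such extra matches can also remove answers), so you should additionally restrict the source instances to patterns with constant, non-$rdfsV$ predicates — which the standard PSPACE-hardness instances for SPARQL (nested \texttt{OPT} over a fixed vocabulary) already satisfy.
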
 

\section{The PSPARQL query language}\label{sec:psparql}

In order to address the problems raised by querying RDF graphs modulo RDF Schemas, we first present the PSPARQL query language for RDF which we introduced in \cite{alkhateeb2009a}. 
Unlike SPARQL, PSPARQL can express queries with regular path expressions instead of relations and variables on the edges. 
For instance, it allows for finding all maternal genes regulated by the \smtt{dm:bcd} gene through an arbitrary sequence of inhibitions.

We will, in the next section, show how the additional expressive power provided by PSPARQL can be used for answering queries modulo RDF Schema.

The added expressiveness of PSPARQL is achieved by extending SPARQL graph patterns, hence any SPARQL query is a PSPARQL query. SPARQL graph patterns, based on GRDF graphs, are replaced by PRDF graphs that are introduced below through their syntax (\S\ref{sec:crdf-syntax}) and semantics (\S\ref{sec:crdf-semantics}).
They are then naturally replaced within the PSPARQL context (\S\ref{sec:psparqlc}).

\subsection{PRDF syntax} \label{sec:crdf-syntax}

Let $\Sigma$ be an alphabet. A \emph{language} over $\Sigma$ is a subset of $\Sigma^*$: its elements are sequences of elements of $\Sigma$ called \emph{words}. A (non empty) word $\langle a_1, \ldots,$ $a_k\rangle$ is denoted $a_1 \cdot \ldots \cdot a_k$. If $A = a_1 \cdot \ldots \cdot a_k$ and $B = b_1 \cdot \ldots \cdot b_q$ are two words over $\Sigma$, then $A \cdot B$ is the word over $\Sigma$ defined by $A \cdot B = a_1 \cdot \ldots \cdot a_k \cdot b_1 \cdot \ldots \cdot b_q$.

\begin{definition}[Regular expression pattern] \label{def:rep}
   Let $\Sigma$ be an alphabet, $X$ be a set of variables, the set $\mathcal{R}(\Sigma, X)$ of \emph{regular expression patterns} is inductively defined by: 
\begin{itemize*} 
  \item $\forall a \in \Sigma$, $a\in\mathcal{R}(\Sigma, X)$ and $!a\in\mathcal{R}(\Sigma, X)$; 
  \item $\forall x \in X$, $x\in\mathcal{R}(\Sigma, X)$; 
  \item $\epsilon \in \mathcal{R}(\Sigma, X)$;
  \item If  $A \in \mathcal{R}(\Sigma, X)$ and $B \in \mathcal{R}(\Sigma, X)$ then $A|B$, $A \cdot B$, $A^*$, $A^+ \in \mathcal{R}(\Sigma, X)$.
\end{itemize*} 
\end{definition} 

A regular expression over ($\mathcal{U}, \mathcal{B}$) can be used to define a language over the alphabet made of $\mathcal{U} \cup \mathcal{B}$.
PRDF graphs are GRDF graphs where predicates in the triples are regular expression patterns constructed over the set of URI references and the set of variables. 

\begin{definition}[PRDF graph] A \emph{PRDF triple} is an element of $\mathcal{U}\cup\mathcal{B}$ $\times$ $\mathcal{R}(\mathcal{U}, \mathcal{B})$ $\times$ $\mathcal{T}$. A PRDF graph is a set of PRDF triples.
\end{definition}

Hence all GRDF graphs are PRDF graphs.

\begin{example}[PRDF Graph]\label{ex:prdf}
PRDF graphs can express interesting features of regulatory networks. For instance, one may observe that \smtt{dm:bcd} promotes \smtt{dm:Kr} without knowing if this action is direct or indirect. Hence, this can be expressed by \smtt{dm:bcd rn:promotes+ dm:Kr}.

A generalized version of the graph pattern of the query of Example~\ref{ex:query} can be expressed by:
\begin{smalltt}
  ?x rn:inhibits.rn:regulates* ?z.
  ?x rn:promotes+ ?z.
  ?x rdf:type rn:gene.
\end{smalltt}
\end{example}

\subsection{PRDF semantics}\label{sec:crdf-semantics}

To be able to define models of PRDF graphs, we have first to express path semantics within RDF semantics to support regular expressions. 

\begin{definition}[Support of a regular expression]\label{def:support}
 Let   $I=\langle I_R, I_P, I_{EXT}, \iota\rangle$ be an  interpretation of a vocabulary $V=\mathcal{U}\cup \mathcal{L}$,  $\iota'$ be an extension of $\iota$ to $B\subseteq \mathcal{B}$, and $R\in \mathcal{R}(\mathcal{U}, B)$, a pair $\langle x,y \rangle$ of ($I_R \times I_R$) \emph{supports}  $R$ in $\iota'$ if and only if one of the two following conditions are satisfied: 
\begin{enumerate} [(i)]
        \item  the empty word $\epsilon \in L^*(R)$ and $x=y$; 
        \item   there exists a word of length $n\geq 1$ $w=w_1 \cdot \ldots \cdot w_n$ where $w  \in L^*(R)$ and 
a sequence of resources of $I_R$ $x=r_0,\ldots ,r_n=y$  such that $\langle r_{i-1},r_{i}\rangle \in I_{EXT}(\iota'(w_i))$, $1 \leq i \leq n$. 
\end{enumerate} 
\end{definition} 

Instead of considering paths in RDF graphs, this definition considers paths in the interpretations of PRDF graphs, i.e., paths are now relating resources. This is used in the following definition of PRDF models in which it replaces the direct correspondences that exist in RDF between a relation and its interpretation, by a correspondence between a regular expression and a sequence of relation interpretations. This allows for matching regular expressions, e.g., $r^{+}$, with variable length paths.

\begin{definition}[Model of a PRDF graph]\label{def:modelprdf} 
 Let $G$ be a PRDF graph, and $I=\langle I_R, I_P, I_{EXT}, \iota\rangle$ be an interpretation of a vocabulary $V \supseteq\mathcal{V}(G)$, $I$ is a \emph{PRDF model} of $G$ if and only if there exists an  extension $\iota'$ of $\iota$ to  $\mathcal{B}(G)$  such that  for every triple $\langle s,R,o \rangle \in G$, $\langle \iota'(s), \iota'(o)\rangle$ supports $R$ in $\iota'$.           
\end{definition} 

This definition extends the definition of RDF models, and they are equivalent when all regular expression patterns $R$ are reduced to atomic terms, i.e., urirefs or variables. PRDF entailment is defined as usual:

\begin{definition}[PRDF entailment]\label{def:prdfentail} 
Let $P$ and $G$ be two PRDF graphs, $G$ \emph{PRDF-entails} $P$ (noted $G\models_{PRDF} P$) if and only if all models of $G$ are models of $P$.           
\end{definition} 

It is possible to define the interpretation of a regular expression evaluation as those pairs of resources which support the expression in all models.

\begin{definition}[Regular expression interpretation]\label{def:prdfentail}
The interpretation $\[[R\]]_{G}$ of a regular expression $R$ in a PRDF graph $G$ is the set of nodes which satisfy the regular expression in all models of the graph:
$$\[[R\]]_{G}=\{\langle x, y\rangle |~ \forall I \text{ PRDF model of }G, \langle x, y\rangle\text{ supports }R\text{ in }I\}$$
\end{definition} 

It is thus possible to formulate the regular expression evaluation problem:

\noindent \textbf{Problem:} Regular expression evaluation\\
\textbf{Input:} An RDF graph $G$, a regular expression $R$, and a pair $\langle a, b \rangle$\\
\textbf{Question:} Does $\langle a, b \rangle \in \[[R\]]_{G}$?\\

We will use this same problem with different type of regular expressions.


\subsection{PSPARQL}\label{sec:psparqlc}

PSPARQL is an extension of SPARQL introducing the use of paths in SPARQL graph patterns. 
PSPARQL graph patterns are built on top of PRDF in the same way as SPARQL is built on top of GRDF. 

\begin{definition}[PSPARQL graph pattern] A \emph{PSPARQL graph pattern} is defined inductively by:
\begin{itemize*}
    \item every PRDF graph is a PSPARQL graph pattern;
    \item if $P_1$ and $P_2$ are two PSPARQL graph patterns and $K$ is a SPARQL constraint, then ($P_1$ {\tt AND} $P_2$), ($P_1$ {\tt UNION} $P_2$), ($P_1$ {\tt OPT} $P_2$), and ($P_1$ {\tt FILTER} $K$) are PSPARQL graph patterns.
\end{itemize*}
\end{definition}

A PSPARQL query for the select form is {\tt SELECT} $\vec{B}$ {\tt FROM} $u$ {\tt WHERE} $P$ such that $P$ is a PSPARQL graph pattern.


Analogously to SPARQL, the set of answers to a PSPARQL query is defined inductively from the set of maps of the PRDF graphs of the query into the RDF knowledge base. 
The definition of an answer to a PSPARQL query is thus identical to Definition~\ref{def:sparqlsemanswer}, but it uses PRDF entailment.

\begin{definition}[Answers to a PSPARQL query]\label{def:answerpsparql}
Let $\texttt{SELECT~} \vec{B}$ $\texttt{~FROM~} u$ $\texttt{~WHERE~} P$ be a PSPARQL query with $P$ a PRDF graph pattern, and $G$ be the RDF graph identified by the URI $u$, 
then the set of \emph{answers} to this query is:
$$\mathcal{A}^{\star}(\vec{B}, G, P)=\{\sigma|_{\vec{B}}^{\vec{B}}| G\models_{PRDF} \sigma(P) \}$$
\end{definition}

\subsection{SPARQL queries modulo RDFS with PSPARQL}\label{sec:cprdfs}

To overcome the limitations of previous approaches when querying RDF graphs modulo an RDF Schema, we provide a new approach which rewrites a SPARQL query into a PSPARQL 
query using a set of rules, and then evaluates the transformed query over the graph to be queried. 
In particular, we show that every SPARQL query $Q$ to evaluate over an RDFS graph $G$ can be transformed into a PSPARQL query $\tau(Q)$ such that evaluating $Q$ over $\hat{G}$, the closure graph of $G$, is equivalent to evaluating $\tau(Q)$ over $G$. 

The query rewriting approach is similar in spirit to the query rewriting methods using a set of views \cite{papakonstantinou1999,calvanese2000a,grahne2003}.
In contrast to these methods, our approach uses the data contained in the graph, i.e., the rules are inferred from RDFS entailment rules.
We define a rewriting function $\tau$  from RDF graph patterns to PRDF graph patterns through a set of rewriting rules over triples (which naturally extends to basic graph patterns and queries). $\tau(Q)$ is obtained from $Q$ by applying the possible rule(s) to each triple in $Q$.

\begin{definition}[Basic RDFS graph pattern expansion]
Given an RDF triple $t$, the \emph{RDFS expansion} of $t$ is a finite PSPARQL graph pattern $\tau(t)$ defined as:
\begin{align*}
\tau(\langle s,\smtt{sc},o\rangle) =& \{\langle s,\smtt{sc}^+,o\rangle\}\\
\tau(\langle s,\smtt{sp},o\rangle) =& \{\langle s,\smtt{sp}^+,o\rangle\}\\
\tau(\langle s,p,o\rangle) =& \{\langle s,?x,o\rangle, \langle ?x,\smtt{sp}^*,p\rangle\} (p\not\in\{\smtt{sc},\smtt{sp},\smtt{type}\})\\
\tau(\langle s,\smtt{type},o\rangle) =& \{\langle s,\smtt{type} \cdot \smtt{sc}^*,o\rangle\} \\
                        \texttt{UNION}~& \{\langle s,?p,?y\rangle, \langle ?p,\smtt{sp}^*\cdot\smtt{dom}\cdot \smtt{sc}^*,o\rangle\}\\
                        \texttt{UNION}~& \{\langle ?y,?p,s\rangle , \langle ?p,\smtt{sp}^*\cdot\smtt{range}\cdot \smtt{sc}^*,o\rangle\}       
\end{align*}
\end{definition}

The first rule handles the transitive semantics of the subclass relation. Finding the subclasses of a given class can be achieved by navigating all its direct subclasses.
The second rule handles similarly the transitive semantics of the subproperty relation. 
The third rule tells that the subject-object pairs occurring in the subproperties of a given property are inherited to it. 
Finally, the fourth rule expresses that the instance mapped to $s$ has for type the class mapped to $o$ (we use the word ``mapped'' since $s$ and/or $o$ can be variables) if one of the following conditions holds:
\begin{enumerate*}
\item the instance mapped to $s$ has for type one of the subclasses of the class mapped to $o$ by following the subclass relationship zero or several times. The zero times is used since $s$ can be directly of type $o$;
\item there exists a property of which $s$ is subject and such that the instances appearing as a subject must have for type one of the subclasses of the class mapped to $o$;
\item there exists a property of which $s$ is object and such that the instances appearing as an object must have for type one of the subclasses of the class mapped to $o$.
\end{enumerate*}

The latter rule takes advantage of a feature of PSPARQL:
the ability to have variables in predicates.

\begin{example}[PSPARQL Query]\label{ex:psparql}
The result of transforming the query of Example~\ref{ex:query} with $\tau$ is:
\begin{smalltt}
SELECT ?x, ?y, ?z
FROM ...
WHERE \{
  ?x ?r ?y. ?r sp* rn:inhibits.
  ?y ?t ?z.?t sp* rn:regulates.
  ?x ?s ?z. ?s sp* rn:promotes.
  (	?x rdf:type.sc* rn:gene. 
  UNION 
  	\{ ?x ?u ?v. ?u sp*.dom.sc* rn:gene.\}
  UNION
  	\{ ?v ?u ?x. ?u sp*.range.sc* rn:gene.\}
  )
  \}
\end{smalltt}
This query provides the correct set of answers for the RDF graph of Example~\ref{ex:rdf} modulo the RDF Schema of Example~\ref{ex:rdfs} (given in Example~\ref{ex:rdfseval}).
\end{example}

Any SPARQL query can be answered modulo an RDF Schema by rewriting the resulting query in PSPARQL and evaluating the PSPARQL query. 

\begin{proposition}[Answers to a SPARQL query modulo RDF Schema by PSPARQL]\label{thm:cpsparqlrdfsans}
$$\mathcal{A}^{\#}(\vec{B}, G, P)=\mathcal{A}^{\star}(\vec{B}, G, \tau(P))$$
\end{proposition}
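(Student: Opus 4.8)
The plan is to reduce the claim $\mathcal{A}^{\#}(\vec{B}, G, P)=\mathcal{A}^{\star}(\vec{B}, G, \tau(P))$ to two facts already available in the excerpt: first, Corollary~\ref{thm:answersparqlmodulo}, which says $\mathcal{A}^{\#}(\vec{B}, G, P)=\mathcal{A}(\vec{B}, \hat{G}\backslash\backslash P, P)$, so that evaluating $P$ modulo RDFS is the same as evaluating $P$ as an ordinary SPARQL query over the partial non-reflexive closure; and second, the semantic characterisations of answers (Proposition~\ref{prop:sparql} for SPARQL, Definition~\ref{def:answerpsparql} for PSPARQL). Since both $\mathcal{A}$ and $\mathcal{A}^{\star}$ are defined by the same restriction/completion $\sigma\mapsto\sigma|_{\vec B}^{\vec B}$ applied to a set of maps, and since the compound-pattern cases (AND, UNION, OPT, FILTER) are handled uniformly by Definition~\ref{def:sparqlsemanswer} (which, as the paper notes right after Proposition~\ref{prop:SPARQLcmpl}, makes subquery evaluation independent of the basic-pattern semantics), it suffices to prove the statement at the level of a single RDF triple pattern $t$: namely that for every map $\sigma$,
\begin{equation*}
(\hat{G}\backslash\backslash P)\models_{RDF}\sigma(t)\quad\Longleftrightarrow\quad G\models_{PRDF}\sigma(\tau(t)).
\end{equation*}
Then an induction on the structure of $P$, using Definition~\ref{def:sparqlsemanswer} on both sides, lifts this to arbitrary graph patterns, and applying $\{\sigma|_{\vec B}^{\vec B}\mid\cdot\}$ to both sides gives the proposition. (One should also note that $\tau$ does not introduce the distinguished variables $\vec B$ afresh nor capture them — the fresh variables $?x,?p,?y,\dots$ introduced by the rules are existential and disjoint from $\vec B$ — so the restriction/completion step commutes with $\tau$.)

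**The triple-pattern lemma, case by case.** The heart of the argument is the displayed equivalence for each of the four forms of $\tau(t)$. For $t=\langle s,\smtt{sc},o\rangle$: a ground triple $\langle a,\smtt{sc},b\rangle$ is in $\hat{G}\backslash\backslash P$ iff there is a chain $a=c_0,c_1,\dots,c_n=b$ with each $\langle c_{i-1},c_i,\rangle\in G$ labelled \smtt{sc} and $n\ge 1$ — this is exactly the transitive-closure rule RDFS12b plus the fact that \smtt{sc} edges in a genuine graph come only from $G$ — which by the support condition of Definition~\ref{def:support} is precisely $\langle a,b\rangle\in\[[\smtt{sc}^+\]]_G$, i.e. $G\models_{PRDF}\langle a,\smtt{sc}^+,b\rangle$. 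The case $t=\langle s,\smtt{sp},o\rangle$ is identical using RDFS8b. For $t=\langle s,p,o\rangle$ with $p\notin\{\smtt{sc},\smtt{sp},\smtt{type}\}$: $\langle a,p,b\rangle\in\hat{G}\backslash\backslash P$ iff (by repeated application of RDFS9, the subproperty-propagation rule, together with RDFS8b) there is some $q$ with $\langle a,q,b\rangle\in G$ and a \smtt{sp}-chain of length $\ge 0$ from $q$ to $p$; this matches the pattern $\{\langle s,?x,o\rangle,\langle ?x,\smtt{sp}^*,p\rangle\}$ with $?x\mapsto q$. The subtlety to check here is that no other rule can produce a triple with predicate $p$ — in a genuine graph $p$ is not in the $\rho$df vocabulary, so it can only arise via RDFS9, and the base case of the \smtt{sp}-chain ($?x=p$, using $\epsilon\in L^*(\smtt{sp}^*)$) covers $\langle a,p,b\rangle\in G$ directly. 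Finally, for $t=\langle s,\smtt{type},o\rangle$: $\langle a,\smtt{type},b\rangle\in\hat{G}\backslash\backslash P$ iff it follows by the rules, and by analysing which rules can derive a \smtt{type} triple — RDFS11 (subclass inheritance, giving a \smtt{type}$\cdot\smtt{sc}^*$ path from some asserted type), RDFS6 (domain: there is an edge $\langle a,q,v\rangle$ and a \smtt{sp}$^*\cdot\smtt{dom}\cdot\smtt{sc}^*$ path from $q$ to $b$, folding in RDFS9 for the subproperty part and RDFS12b for the subclass part), and RDFS7 (range, symmetric with $a$ as object) — one gets exactly the three-way UNION in the definition of $\tau(\langle s,\smtt{type},o\rangle)$. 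Each direction is a routine induction on the length of the RDFS derivation (for $\Rightarrow$) and on the length of the witnessing path (for $\Leftarrow$), using Proposition~\ref{def:rdfentail} to turn "$\langle a,p,b\rangle\in G'$" into "$G'\models_{RDF}\langle a,p,b\rangle$" and Definition~\ref{def:support} to turn path witnesses into PRDF entailments.

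**Assembling the induction and the main obstacle.** With the triple-pattern equivalence in hand, I would define $\tau$ on basic graph patterns by $\tau(\{t_1,\dots,t_k\})=\tau(t_1)~\texttt{AND}~\cdots~\texttt{AND}~\tau(t_k)$ (choosing the fresh variables in the $\tau(t_i)$ pairwise disjoint and disjoint from $\mathcal B(P)$), observe that $(\hat G\backslash\backslash P)\models_{RDF}\sigma(\{t_1,\dots,t_k\})$ iff $(\hat G\backslash\backslash P)\models_{RDF}\sigma(t_i)$ for each $i$ (Proposition~\ref{def:rdfentail}), combine the per-triple equivalences, and note that the existential fresh variables in $\tau$ are exactly matched by the fact that the right-hand side asks only for \emph{some} extension of $\sigma$ — so after restricting to $\mathcal B(P)$ the two map-sets coincide. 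The compound cases AND/UNION/OPT/FILTER then go through verbatim from Definition~\ref{def:sparqlsemanswer} since that definition treats the basic-pattern semantics as a black box. The step I expect to be the real work is the $t=\langle s,\smtt{type},o\rangle$ case of the triple lemma: one must verify that the three disjuncts of $\tau$ are jointly exhaustive and individually sound, which requires a careful enumeration of all closure rules that can output a \smtt{type}-triple and an argument that interleavings of RDFS9/RDFS12b/RDFS6/RDFS7/RDFS11 derivations can always be normalised into one of the three path shapes $\smtt{type}\cdot\smtt{sc}^*$, $\smtt{sp}^*\cdot\smtt{dom}\cdot\smtt{sc}^*$, $\smtt{sp}^*\cdot\smtt{range}\cdot\smtt{sc}^*$ — i.e. that the closure never needs to "cycle back" through a \smtt{type} edge to produce another one. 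A secondary point worth stating explicitly is that the passage from $\hat G\backslash\backslash P$ to $G$ on the PSPARQL side is legitimate because PRDF entailment over $G$ already captures, via the regular-expression support semantics, precisely the derivations that the closure materialises — this is what makes the lazy (query-rewriting) approach equivalent to the eager (closure) approach, and it is the conceptual crux that Corollary~\ref{thm:answersparqlmodulo} lets us take for granted at the level of ordinary SPARQL.
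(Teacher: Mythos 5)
Your proposal is correct and follows essentially the same route as the paper: the paper also reduces the statement to a per-triple equivalence between RDFS (non-reflexive) entailment and PRDF path entailment (Lemma~\ref{lem:psparqlbasecompleteness}, argued case by case over the four rules of $\tau$ via \smtt{sc}/\smtt{sp} chains, the closure graph, and PRDF homomorphisms, with the fresh variables discharged by restriction to $\vec{B}$), and then lifts it to compound patterns with the structure-preserving induction of Lemma~\ref{lem:induction}. The only cosmetic difference is that you invoke Corollary~\ref{thm:answersparqlmodulo} up front to phrase the triple lemma over $\hat{G}\backslash\backslash P$, whereas the paper folds that closure argument directly into the case analysis.
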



This transformation does not increase the complexity of PSPARQL which is the same as the one of SPARQL:

\begin{proposition}[Complexity of \textsc{$\mathcal{A}^{\star}$-Answer checking}]\label{prop:PSPARQLcmpl}
\textsc{$\mathcal{A}^{\star}$-Answer checking} is PSPACE-complete.
\end{proposition}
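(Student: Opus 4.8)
The plan is to reuse the machinery already set up for SPARQL, changing only the basic graph pattern language from GRDF (with RDF entailment) to PRDF (with PRDF entailment), and to establish the two bounds separately. For PSPACE-hardness, recall that every GRDF graph is a PRDF graph, so every SPARQL graph pattern is a PSPARQL graph pattern; moreover, as noted after Definition~\ref{def:modelprdf}, PRDF models coincide with RDF models when all predicate positions are atomic, hence $\models_{PRDF}$ agrees with $\models_{RDF}$ on SPARQL graph patterns. Consequently $\mathcal{A}^{\star}(\vec{B}, G, P)=\mathcal{A}(\vec{B}, G, P)$ whenever $P$ is a SPARQL graph pattern, so the identity map is a trivial reduction from \textsc{$\mathcal{A}$-Answer checking} to \textsc{$\mathcal{A}^{\star}$-Answer checking}, and PSPACE-hardness follows from Proposition~\ref{prop:SPARQLcmpl}.

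For membership in PSPACE, I would first show that PRDF entailment, i.e.\ deciding $G\models_{PRDF} Q$ for a ground RDF graph $G$ and a PRDF graph $Q$, is in NP. By the homomorphism characterization of PRDF entailment from \cite{alkhateeb2009a}, $G\models_{PRDF} Q$ holds iff there is a map $\mu$ from $\mathcal{B}(Q)$ to $\mathcal{T}(G)$ such that for every triple $\langle s,R,o\rangle\in Q$ one has $\langle\mu(s),\mu(o)\rangle\in\[[\mu(R)\]]_{G}$; for a ground graph $G$ this condition is exactly the existence of a path in $G$ from $\mu(s)$ to $\mu(o)$ labelled by a word of $L^*(\mu(R))$ (one direction uses $G$ itself as a canonical model), i.e.\ an instance of the regular expression evaluation problem. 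Such a $\mu$ has size polynomial in $Q$ and $G$ and serves as a certificate; each triple is then verified in polynomial time by building an NFA for $\mu(R)$ (treating a negated atom $!a$ as the finite disjunction of the urirefs occurring in $G$ other than $a$), forming its product with $G$ viewed as a finite automaton, and testing reachability. Hence PRDF entailment is in NP, NP-hardness being inherited from GRDF entailment \cite{gutierrez2004a}.

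Finally, Definition~\ref{def:answerpsparql} combines basic PRDF patterns through exactly the same \texttt{AND}/\texttt{UNION}/\texttt{OPT}/\texttt{FILTER} operators and the same inductive clauses as Definition~\ref{def:sparqlsemanswer}; only the base case differs. As observed after Proposition~\ref{prop:SPARQLcmpl}, this inductive definition makes subquery evaluation independent, so the PSPACE procedure for SPARQL answer checking applies verbatim with its base case replaced by the NP decision procedure above: recurse on the structure of $P$, enumerating one at a time, in polynomial space, the finitely many candidate extensions of the input map to $\mathcal{B}(P)$ and, in the \texttt{OPT} case, the finitely many $\sigma'$ over $\mathcal{T}(G)$. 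The recursion depth is bounded by $|P|$ and each level uses only polynomial space, giving membership in PSPACE and, with the hardness above, PSPACE-completeness.

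The main obstacle is the NP-membership of PRDF entailment: one must check carefully that the certificate $\mu$ is of polynomial size and that verifying a single triple — solving regular expression evaluation over $G$ — is genuinely polynomial even with negated atoms, predicate variables, and the $\epsilon$ case, and that over a ground graph ``supports $R$ in all models'' really does reduce to the existence of a suitably labelled path. Once this is in place, the compound case is routine, being structurally identical to the SPARQL argument.
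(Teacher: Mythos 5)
Your proof is correct, but it takes a more self-contained route than the paper. The paper's own proof is essentially a citation: it observes that the encoding $\tau$ is linear in the size of the graph pattern (a remark really aimed at the preceding proposition on answering modulo RDFS) and then invokes the PSPACE-completeness of PSPARQL query answering already established in \cite{alkhateeb2009a}. You instead reconstruct that result: hardness by noting that PSPARQL conservatively extends SPARQL (PRDF entailment coincides with RDF entailment on atomic predicates, so $\mathcal{A}^{\star}$ and $\mathcal{A}$ agree on SPARQL patterns and the identity map reduces from Proposition~\ref{prop:SPARQLcmpl}), and membership by showing NP-ness of basic PRDF pattern entailment (polynomial-size map as certificate, verified by product-automaton reachability, with the homomorphism characterization of \cite{alkhateeb2009a} bridging ``supports $R$ in all models'' and path existence in $G$) and then plugging this base case into the polynomial-space structural recursion for compound patterns. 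This is exactly the general schema the paper itself announces at the end of Section~\ref{sec:preliminaries} (``for every language in which entailment is NP-complete, used as a basic graph pattern language, the problem will have the same complexity''), but which it never instantiates in the proof. What your version buys is independence from the external reference except for the homomorphism lemma, plus an explicit account of the delicate points (negated atoms restricted to predicates of $G$, the $\epsilon$ case, enumeration of $\sigma'$ in the \texttt{OPT} clause in polynomial space); what the paper's version buys is brevity, at the price of resting entirely on the earlier PSPARQL complexity result.
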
 

\section{nSPARQL and NSPARQL}\label{sec:nsparql}

An alternative to PSPARQL was proposed with the nSPARQL language, a simple query language based on nested regular expressions for navigating RDF graphs \cite{perez2010a}.
We present it as well as NSPARQL, an extension more comparable to PSPARQL.

\subsection{nSPARQL syntax}

\begin{definition}[Regular expression]\label{def:re}
A \emph{regular expression} is an expression built from the following grammar:
\begin{center}
$re$ ::= axis $|$ axis\emph{::}a $|$ re $|$ re\emph{/}re $|$ re\emph{|}re $|$ re$^*$
\end{center}
\noindent with $a\in \mathcal{U}$ and $axis\in$\{\smtt{self}, \smtt{next}, \smtt{next$^{-1}$}, \smtt{edge}, \smtt{edge$^{-1}$}, \smtt{node}, \smtt{node$^{-1}$} $\}$.
\end{definition}
In the following, we use the positive closure of a path expression $R$ denoted by $R^+$ and defined as $R^+=R/R^*$. 

Regarding the precedence among the regular expression operators, it is as follows: *, /, then |. Parentheses may be used for breaking precedence rules.

The model underlying nSPARQL is that of XPath which navigates within XML structures. 
Hence, the axis denotes the type of node object which is selected at each step, respectively, the current node (\smtt{self} or \smtt{self}$^{-1}$), the nodes reachable through an outbound triple (\smtt{next}), the nodes that can reach the current node through an incident triple (\smtt{next}$^{-1}$), the properties of outbound triples (\smtt{edge}), the properties of incident triples (\smtt{edge}$^{-1}$), the object of a predicate (\smtt{node}) and the predicate of an object (\smtt{node}$^{-1}$). 
This is illustrated by Figure~\ref{fig:axis}.

\begin{figure}[htp]
 \begin{center}
\begin{tikzpicture}
\tikzstyle{every node}=[rounded corners,x=40pt,y=45pt]

\draw (1,0) node[draw] (t1) {\scriptsize{subject}};
\draw (5,0) node[draw] (t3) {\scriptsize{objet}};
\draw[->] (t1) -- node[above,draw] (t2) {\scriptsize{predicate}} (t3);

\draw[dotted,->] (t1.west) .. controls +(-1,1)  and + (-1,-1) .. node[below,near end] {\smtt{self}} (t1.west);
\draw[dotted,->] (t1) .. controls +(1,2)  and + (-1,2) .. node[below] {\smtt{next}} (t3);
\draw[dotted,->] (t3) .. controls +(-1,-2)  and + (1,-2) .. node[above] {\smtt{next}$^{-1}$} (t1);
\draw[dotted,->] (t1) .. controls +(1,.5)  and + (-1,.5) .. node[above] {\smtt{edge}} (t2);
\draw[dotted,->] (t2) .. controls +(-1,-.5)  and + (1,-.5) .. node[below] {\smtt{edge}$^{-1}$} (t1);
\draw[dotted,->] (t2) .. controls +(1,.5)  and + (-1,.5) .. node[above] {\smtt{node}} (t3);
\draw[dotted,->] (t3) .. controls +(-1,-.5)  and + (1,-.5) .. node[below] {\smtt{node}$^{-1}$} (t2);
\draw[dotted,->] (t3.east) .. controls +(1,1)  and + (1,-1) .. node[below,near end] {\smtt{self}$^{-1}$} (t3.east);

\end{tikzpicture}						
\end{center}
\caption{nSPARQL axes.}\label{fig:axis}
\end{figure}
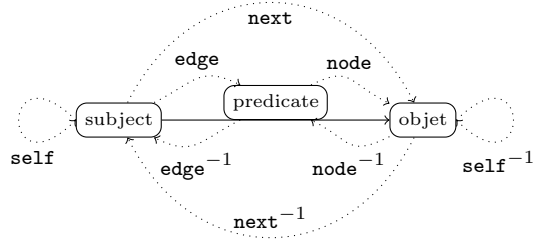

\begin{definition}[Nested regular expression]\label{def:nre}
A \emph{nested regular expression} is an expression built from the following grammar (with $a\in \mathcal{U}$):
\begin{center}
$nre$ ::= axis $|$ axis\emph{::}a $|$ axis\emph{::}$[$nre$]$ $|$ nre $|$ nre\emph{/}nre $|$ nre\emph{|}nre $|$ nre$^*$
\end{center}
\end{definition}

Contrary to simple regular expressions, nested regular expressions may constrain nodes to satisfy additional secondary paths.

Nested regular expressions are used in triple patterns in predicate position, to define nSPARQL triple patterns.   

\begin{definition}[nSPARQL triple pattern]\label{def:npathtriple}
An \emph{nSPARQL triple pattern} is a triple $\langle s, p, o\rangle$ such that $s\in\mathcal{T}$, $o\in\mathcal{T}$ and $p$ is a nested regular expression.
\end{definition} 

\begin{example}[nSPARQL triple pattern]\label{ex:nsparqlpattern}
Assume that one wants to retrieve the pairs of cities such that there is a way of traveling by any transportation mean. 
The following nSPARQL pattern expresses this query:
$$P= \langle ?city_1, (next::[(next::sp)^*/self::transport])^+, ?city_2 \rangle$$
This pattern expresses a sequence of properties such that each property (predicate) is a sub-property of the property "transport". 
\end{example}

\begin{example}[nSPARQL triple pattern]\label{ex:nsparql}
In the context of molecular biology, nSPARQL expressions may be very useful. For instance, 
part of the graph patterns used for the query of Example~\ref{ex:query} can be expressed by:
\begin{smalltt}
?x next::rn:inhibits / next::rn:regulates ?z
\end{smalltt}
which finds all pairs of nodes such that the first one inhibits a regulator of the second one.
It can be further enhanced by using transitive closure:
\begin{smalltt}
?x next::rn:inhibits / next::rn:regulates+ ?z
\end{smalltt}
expressing that we want a path between two nodes going through a first inhibition and then an arbitrary non null number of regulatory steps (\smtt{+} is the usual notation such that \smtt{a+} corresponds to \smtt{a/a*}).
Nested expressions allow for going further by constraining any step in the path. So,
\begin{smalltt}
?x next::rn:inhibits[ next::rdf:type/self::dm:gap ] / next::rn:regulates+ ?z
\end{smalltt}
requires, in addition, the second node in the path to be a gap gene.
\end{example}

From nSPARQL triple patterns, it is also possible to create a query language from nSPARQL triple patterns.
As in SPARQL, a set of nSPARQL triple patterns is called an nSPARQL basic graph pattern. 
nSPARQL graph patterns may be defined in the usual way, i.e., by replacing triple patterns by nSPARQL triple patterns.

\begin{definition}[nSPARQL graph pattern]\label{def:nsparqlgp} An \emph{nSPARQL graph pattern} is defined inductively by:
\begin{itemize*}
    \item every nSPARQL triple pattern is an nSPARQL graph pattern;
    \item if $P_1$ and $P_2$ are two nSPARQL graph patterns and $K$ is a SPARQL constraint, then ($P_1$ {\tt AND} $P_2$), ($P_1$ {\tt UNION} $P_2$), ($P_1$ {\tt OPT} $P_2$), and ($P_1$ {\tt FILTER} $K$) are nSPARQL graph patterns.
\end{itemize*}
\end{definition}

For time complexity reasons the designers of the nSPARQL language choose to define a more restricted language than SPARQL \cite{perez2010a}.
Contrary to SPARQL queries, nSPARQL queries are reduced to nSPARQL graph patterns, constructed from nSPARQL triple patterns, plus SPARQL operators AND, UNION, FILTER, and OPT.
They do not allow for the projection operator (SELECT). 
This prevents, when checking answers, that uncontrolled variables have to be evaluated.

\subsection{nSPARQL semantics}

In order to define the semantics of nSPARQL, we need to know the semantics of nested regular expressions \cite{perez2010a}.
Here we depart from the semantics given in \cite{perez2010a} by adding a third variable in the interpretation whose sole purpose is to compact the set of rules. 
Both definitions are equivalent.
\begin{definition}[Nested path interpretation]\label{def:nresemantics}
Given a nested path $p$ and an RDF graph $G$, the interpretation of $p$ in $G$ (denoted $\[[p\]]_G$) is defined by:
\begin{align*}
\[[\smtt{self}\]]_{G} &=~\{\langle x, x, x\rangle; x\in \mathcal{T}\}\\ 
\[[\smtt{next}\]]_{G} &=~\{\langle x, y, z\rangle; \exists z; \langle x, z, y\rangle\in G\}\\
\[[\smtt{edge}\]]_{G} &=~\{\langle x, z, z\rangle; \exists z; \langle x, z, y\rangle\in G\}\\ 
\[[\smtt{node}\]]_{G} &=~\{\langle z, y, z\rangle; \exists z; \langle x, z, y\rangle\in G\}\\
\[[nre\textup{\smtt{::}}\smtt{a}\]]_{G} &=~\{\langle x, y, a\rangle\in \[[nre\]]_{G}\}\\
\[[nre^{\textup{\smtt{-1}}}\]]_{G} &=~\{\langle y, x, z\rangle; \langle x, y, z\rangle\in \[[nre\]]_{G}\}\\
\[[nre_1\textup{\smtt{[}}nre_2\textup{\smtt{]}}\]]_{G} &=~\{\langle x, y, z\rangle\in \[[nre_1\]]_{G};  \exists w, k; \langle z, w, k\rangle\in \[[nre_2\]]_{G}\}\\
\[[nre_1\textup{\smtt{/}}nre_2\]]_{G} &=~\{\langle x, y, z\rangle; \langle x, w, k\rangle\in \[[nre_1\]]_{G}  \&~\langle w, y, z\rangle\in \[[nre_2\]]_{G}\}\\
\[[nre_1\textup{\smtt{|}}nre_2\]]_{G} &=~\[[nre_1\]]_{G}\cup\[[nre_2\]]_{G}\\
\[[nre\texttt{*}\]]_{G} &=~\[[\smtt{self}\]]_{G}\cup \[[nre\]]_{G}\cup \[[nre\textup{\smtt{/}}nre\]]_{G}  \cup\[[nre\textup{\smtt{/}}nre\textup{\smtt{/}}nre\]]_{G}\cup\dots
\end{align*}
\end{definition}

The evaluation of a nested regular expression $R$ over an RDF graph $G$ is defined as the sets of pairs $\langle a, b\rangle$ of nodes in $G$, such that $b$ is reachable from $a$ in $G$ by following a path that conforms to $R$.
We will write $\langle x, y\rangle\in\[[R\]]_G$ as a shortcut for $\exists z$ such that $\langle x, y, z\rangle\in\[[R\]]_G$.

\begin{definition}[Satisfaction of a nSPARQL triple pattern]\label{def:npatans}
Given a basic nested path graph pattern $\langle s, p, o\rangle$ and an RDF graph $G$, $\langle s, o\rangle$ satisfies $p$ in $G$ (denoted $G\models_{nSPARQL} \langle s, p, o\rangle$) if and only if $\exists \sigma;$ $\langle \sigma(s), \sigma(o)\rangle\in\[[p\]]_G$
\end{definition}

This nested regular expression evaluation problem is solved efficiently through an effective procedure provided in \cite{perez2010a}.

\begin{theorem}[Complexity of nested regular expression evaluation \cite{perez2010a}]
The evaluation problem for a nested regular expression $R$ over an RDF graph $G$ can be solved in time $O(|G|\times|R|)$.
\end{theorem}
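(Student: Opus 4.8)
The evaluation problem for a nested regular expression $R$ over an RDF graph $G$ can be solved in time $O(|G|\times|R|)$.

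The plan is to reduce nested regular expression evaluation to a product-automaton reachability computation, closely following the standard technique for evaluating ordinary regular path queries, but handling the nesting and the XPath-style axes carefully. First I would make precise the ``size'' parameters: $|G|$ is the number of triples of $G$ (equivalently the number of edges in the associated labelled graph, which also bounds the number of nodes up to a constant), and $|R|$ is the length of the nested expression as a string, so in particular the number of sub-expressions of $R$ is $O(|R|)$. I would then organize the argument by structural induction on $R$, maintaining the invariant that for every sub-expression $e$ of $R$ we compute the relation $\[[e\]]_G$ restricted to the relevant pairs $\langle x,y\rangle$ (ignoring for the reachability part the third ``label'' coordinate, which only matters for the $\textup{\smtt{::}}\smtt{a}$ and $[\cdot]$ constructs).

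The core step is a labelled-graph reachability argument. I would build a product graph whose vertices are pairs $\langle v, q\rangle$ with $v$ a node (or, where needed, a term/edge-label) of $G$ and $q$ a state of a small automaton $\mathcal{A}_R$ of size $O(|R|)$ obtained from the regular-expression structure of $R$ (Thompson-style, so linear size with $\epsilon$-transitions). A transition in $\mathcal{A}_R$ labelled by a basic step (\smtt{self}, \smtt{next}, \smtt{edge}, \smtt{node}, their inverses, and the atomic tests \smtt{::}\smtt{a}) induces, in the product graph, edges determined directly by the triples of $G$: e.g.\ \smtt{next} from $\langle x,q\rangle$ to $\langle y,q'\rangle$ whenever $\langle x,z,y\rangle\in G$ and $q\to q'$ in $\mathcal{A}_R$. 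Each basic step contributes $O(|G|)$ product edges per automaton transition, so the whole product graph has $O(|G|\times|R|)$ vertices and edges; a single linear-time graph search then yields, for a fixed source $a$, all $b$ with $\langle a,b\rangle\in\[[R\]]_G$, or for the decision version simply tests reachability of $\langle b, q_f\rangle$ from $\langle a, q_0\rangle$.

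The main obstacle — and the only genuinely non-routine part — is the treatment of nesting, i.e.\ sub-expressions of the form $axis\textup{\smtt{::}}[nre]$: such a step is allowed from a node $v$ only when $v$ (as the first coordinate of the evaluated triple of $nre$) has some outgoing $nre$-path at all. Here I would exploit the structural induction: process the sub-expressions of $R$ bottom-up, so that when we reach a nesting $[nre]$ the set $N_{nre}=\{v : \exists w,k,\ \langle v,w,k\rangle\in\[[nre\]]_G\}$ has already been computed (by one reachability computation on the product graph for $nre$, of cost $O(|G|\times|nre|)$). Inserting the nesting test into the outer automaton then costs only a membership check in $N_{nre}$ at each product vertex, i.e.\ $O(1)$ amortized, so no blow-up occurs. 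Summing the costs $O(|G|\times|e|)$ over the $O(|R|)$ sub-expressions $e$, and observing that $\sum_e |e| = O(|R|^2)$ in general but the nested sub-expressions partition the symbols of $R$ so that the actual total work telescopes to $O(|G|\times|R|)$, gives the claimed bound. I would also remark that the inverse axes (\smtt{::}$^{\smtt{-1}}$) are handled for free by adding the reversed edges to the product graph, and that the atomic test $nre\textup{\smtt{::}}\smtt{a}$ is a selection on the third coordinate which can be pushed into the construction without extra cost. The hard part is thus purely the bookkeeping that shows the nesting does not multiply the $|R|$ factor; everything else is the classical product-automaton reachability argument, and I would cite \cite{perez2010a} for the detailed procedure.
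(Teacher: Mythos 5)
Your proposal is essentially the proof of \cite{perez2010a} that this paper cites (the paper itself does not reprove the theorem, but reuses exactly this machinery for its cpSPARQL analogue, Theorem~\ref{th:cpsparqlrecomplexity}, via the \textbf{LABEL} and \textbf{Eval} algorithms): a linear-size automaton for $R$, a product construction with $G$ of size $O(|G|\times|R|)$, bottom-up precomputation of the node sets satisfying each nested subexpression so that a nesting test costs $O(1)$ per product vertex, and the observation that the nesting levels partition the symbols of $R$ so the total work does not multiply the $|R|$ factor. The approach and the cost accounting are correct and match the paper's, so nothing further is needed.
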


Answers to nSPARQL queries follow the same definition as for SPARQL (Definition~\ref{def:sparqlsemanswer}) but with maps satisfying nSPARQL triple patterns.

\begin{definition}[Evaluation of a nSPARQL triple pattern]\label{def:nsparql-triple-evaluation}
The evaluation of a nSPARQL triple pattern $\langle x,R,y\rangle$ over an RDF graph $G$ is: 
$$
\[[\langle x,R,y\rangle\]]_{G} = \{\sigma |  dom(\sigma) = \{x, y\} \cap {\mathcal B}
\text{ and } \langle \sigma(x), \sigma(y)\rangle \in \[[R\]]_{G}\}$$
\end{definition}

\cite{perez2010a} shows that avoiding the projection operator (SELECT), keeps the complexity of nSPARQL \emph{basic}, i.e., conjunctive, graph pattern evaluation to polynomial and mentions that adding projection would make it NP-complete.

Clearly, nSPARQL is a good navigational language, but there still are useful queries that could not be expressed. 
For example, it cannot be used to find nodes connected with transportation mean that is not a bus or transportation means belonging to Air France, i.e., containing the URI of the company. 

\subsection{NSPARQL}\label{sec:NSPARQL}

It is also possible to create a query language from nSPARQL triple patterns by simply replacing SPARQL triple patterns by nSPARQL triple patterns. 
We call such a language NSPARQL for differentiating it from the original nSPARQL \cite{perez2010a}. 
However, the merits of the approach are directly inherited from the original nSPARQL. 

A NSPARQL query for the select form is {\tt SELECT} $\vec{B}$ {\tt FROM} $u$ {\tt WHERE} $P$ such that $P$ is a nSPARQL graph pattern (see Definition~\ref{def:nsparqlgp}).
Hence, NSPARQL graph patterns are built on top of nSPARQL in the same way as SPARQL is built on top of GRDF and PSPARQL is built on top of PRDF. 


Answers to NSPARQL queries are based on the extension of $\models_{nSPARQL}$ nSPARQL graph patterns following Definition~\ref{def:sparqlsemanswer} using $\models_{nSPARQL}$ as the entailment relation.

\begin{definition}[Answers to an NSPARQL query]\label{def:nsparqlanswer}
Let $\texttt{SELECT~}\vec{B}$ $\texttt{FROM~}u$ $\texttt{WHERE~}P$ be a NSPARQL query with $P$ a nSPARQL graph pattern and $G$ be the (G)RDF graph identified by the URI $u$, 
then the set of \emph{answers} to this query is:
$$\mathcal{A}^{o}(\vec{B}, G, P)=\{\sigma|_{\vec{B}}^{\vec{B}}|~G\models_{nSPARQL} \sigma(P) \}.$$
\end{definition}

The complexity of NSPARQL query evaluation is likely to be PSPACE-complete as SPARQL (see \S\ref{sec:sparqlquerysemantics}).

\subsection{SPARQL queries modulo RDFS with nSPARQL}\label{sec:rdfs-nsparql}

\begin{definition}\label{def:rdfs-nsparql-triple-evaluation}
The evaluation of an nSPARQL triple pattern $\langle x,R,y \rangle$ over an RDF graph $G$ modulo RDFS is defined as
$$\[[\langle x,R,y \rangle\]]_{G}^{rdfs} = \[[\langle x,R,y \rangle\]]_{closure(G)}$$
\end{definition}

\begin{definition}[Answers to an nSPARQL basic graph pattern modulo RDFS]\label{def:rdfs-nsparqlanswer}
Let $P$ be a basic nSPARQL graph pattern and $G$ be  an RDF graph, then the set of \emph{answers} to $P$ over $G$ modulo RDFS is:
$$\mathcal{S}^{o}(P,G)= \bigcap_{t\in P} \[[t\]]^{rdfs}_{G}$$
\end{definition}

As presented in \cite{perez2010a}, nSPARQL can evaluate queries with respect to RDFS by transforming the queries with rules  \cite{munoz2009a}:
\begin{align*}
\phi(sc) & = (\smtt{next::sc})\smtt{+}\\ 
\phi(sp) & = (\smtt{next::sp})\smtt{+}\\ 
\phi(dom) & = \smtt{next::dom}\\ 
\phi(range) & = \smtt{next::range}\\ 
\phi(type) & = \smtt{next::type}/\smtt{next::sc}\smtt{*}\\
& ~~~ |\smtt{edge}/\smtt{next::sp}\smtt{*}/\smtt{next::dom}/\smtt{next::sc}\smtt{*}\\
& ~~~ |\smtt{node}^{-1}/\smtt{next::sp}\smtt{*}/\smtt{next::range}\\
& ~~~  /\smtt{next::sc}\smtt{*}\\
\phi(p) & = \smtt{next[(next::sp)*/self::p]} ~~(p\not\in\{\smtt{sp}, \smtt{sc}, \smtt{type}, \smtt{dom}, \smtt{range}\})
\end{align*}

\begin{example}[nSPARQL evaluation modulo RDFS]\label{ex:nsparqlquery}
The following nSPARQL graph pattern could be used as a query to retrieve the set of pairs of cities connected by a sequence of transportation means such that one city is from France and the other city is from Jordan: 
\begin{align*}
\{\langle ?city_1, (next::&transport)^+, ?city_2 \rangle, \\
\langle ?city_1, next::&cityIn, France \rangle,  \\
\langle ?city_2, next::&cityIn, Jordan \rangle\}
\end{align*}
When evaluating this graph pattern against the RDF graph of Figure~\ref{fig:rdfgraph}, it returns the empty set since there is no explicit ``transport'' property between the two queried cities. 
However, considering the RDFS semantics, it should return the following set of pairs:
$$\{\langle ?city_1 \leftarrow Paris, ?city_2 \leftarrow Amman\rangle, \langle ?city_1 \leftarrow Grenoble, ?city_2 \leftarrow Amman\rangle \}$$

To answer the above graph pattern considering RDFS semantics, it could be transformed to the following nSPARQL graph pattern:
\begin{align*}
\{\langle ?city_1, (next::&[(next::sp)^*/self::transport])^+, ?city_2 \rangle, \\
\langle ?city_1, next::&cityIn, France \rangle,  \\
\langle ?city_2, next::&cityIn, Jordan \rangle\}
\end{align*}
\end{example}

This encoding is correct and complete with respect to RDFS entailment.

\begin{theorem}[Completeness of $\phi$ on triples \cite{perez2010a} (Theorem~3)]\label{prop:nsparqlcompl}
Let $\langle x,p,y \rangle$ be a SPARQL triple pattern with $x, y\in ({\mathcal U} \cup {\mathcal B})$ and $p \in {\mathcal U}$, then for any RDF graph $G$:
 $$\[[\langle x,p,y \rangle\]]_{G}^{rdfs} = \[[\langle x,\phi(p),y \rangle\]]_{G}$$
\end{theorem}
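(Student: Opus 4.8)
The plan is to reduce the theorem to a combinatorial description of the closure graph $closure(G)$ and then read off the correspondence with $\phi$ branch by branch. By Definition~\ref{def:rdfs-nsparql-triple-evaluation} the left-hand side is the evaluation of the one-step pattern $\langle x,p,y\rangle$ over $closure(G)$; since $x,y\in\mathcal U\cup\mathcal B$ and $p\in\mathcal U$, both sides are completely determined by the binary relations they compute, so it suffices to show that for all terms $a,b$ one has $\langle a,p,b\rangle\in closure(G)$ iff $\langle a,b\rangle\in\[[\phi(p)\]]_{G}$.

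The core is a lemma characterizing, purely in terms of paths in $G$ itself, which triples belong to $closure(G)$ when the predicate is one of $\smtt{sc},\smtt{sp},\smtt{dom},\smtt{range},\smtt{type}$ or an arbitrary URI $q$ outside this set. Writing $a\xrightarrow{r}b$ for $\langle a,r,b\rangle\in G$, I claim: (a) $\langle a,\smtt{sc},b\rangle\in closure(G)$ iff there is a nonempty $\smtt{sc}$-chain from $a$ to $b$ in $G$, and symmetrically for $\smtt{sp}$; (b) $\langle a,\smtt{dom},b\rangle\in closure(G)$ iff $\langle a,\smtt{dom},b\rangle\in G$, and likewise for $\smtt{range}$, since no rule of Table~\ref{tab:rdfs-rules} has a $\smtt{dom}$ or $\smtt{range}$ conclusion; (c) for $q\notin\{\smtt{sc},\smtt{sp},\smtt{dom},\smtt{range},\smtt{type}\}$, $\langle a,q,b\rangle\in closure(G)$ iff $\langle a,q',b\rangle\in G$ for some $q'$ with a possibly empty $\smtt{sp}$-chain to $q$ in $G$; (d) $\langle a,\smtt{type},b\rangle\in closure(G)$ iff one of: (i) $\langle a,\smtt{type},c\rangle\in G$ for some $c$ with a possibly empty $\smtt{sc}$-chain to $b$; (ii) there are $q',q,A,z$ with $\langle a,q',z\rangle\in G$, a possibly empty $\smtt{sp}$-chain from $q'$ to $q$, $\langle q,\smtt{dom},A\rangle\in G$, and a possibly empty $\smtt{sc}$-chain from $A$ to $b$; (iii) the mirror of (ii) with $a$ in object position, $\langle w,q',a\rangle\in G$, and $\langle q,\smtt{range},A\rangle\in G$.

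Each equivalence I would prove in two directions. The ``if'' direction is a routine replay of rules in the right order: contract $\smtt{sp}$- and $\smtt{sc}$-chains with the two transitivity rules of Table~\ref{tab:rdfs-rules}, promote a predicate with $\frac{\langle p,\smtt{sp},q\rangle\;\langle x,p,y\rangle}{\langle x,q,y\rangle}$, create a $\smtt{type}$ triple with the $\smtt{dom}$ or $\smtt{range}$ rule, and climb the class hierarchy with $\frac{\langle A,\smtt{sc},B\rangle\;\langle x,\smtt{type},A\rangle}{\langle x,\smtt{type},B\rangle}$. The ``only if'' direction is by induction on the length of a derivation of the triple in $closure(G)$, with a case split on the last rule applied, using that $\smtt{sp}$ (resp.\ $\smtt{sc}$) triples arise only from $\smtt{sp}$- (resp.\ $\smtt{sc}$-) transitivity, that $\smtt{dom}$/$\smtt{range}$ triples are never derived, and that a $\smtt{type}$ triple is either already in $G$, derived by the $\smtt{sc}$ rule from a shorter $\smtt{type}$ derivation, or derived by a $\smtt{dom}$/$\smtt{range}$ rule. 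Since we query \emph{genuine} RDFS graphs, in which an element of the $\rho$df vocabulary never occurs as a subject, no triple such as $\langle\smtt{type},\smtt{dom},A\rangle$ can occur, so the predicate feeding a $\smtt{dom}$/$\smtt{range}$ rule is itself never $\smtt{sc}$, $\smtt{sp}$ or $\smtt{type}$; this forces every $\smtt{type}$ derivation into the linear shape ``a run of predicate-changing $\smtt{sp}$-steps, then at most one $\smtt{dom}$/$\smtt{range}$ step, then a run of $\smtt{sc}$-steps'', which is exactly what (i)--(iii) enumerate.

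Finally I would match the lemma with $\phi$ by unfolding Definition~\ref{def:nresemantics}: an $(\smtt{next::}r)^+$ path (recall $R^+=R/R^*$) is a nonempty $r$-chain and $(\smtt{next::}r)^*$ a possibly empty one; $\smtt{next::dom}$ and $\smtt{next::range}$ are single $\smtt{dom}$/$\smtt{range}$ edges; $\smtt{next}[(\smtt{next::sp})^*/\smtt{self::}p]$ is the set of pairs $\langle a,b\rangle$ with $\langle a,q',b\rangle\in G$ and $q'$ having an $\smtt{sp}$-chain to $p$; and, reading $\smtt{edge}$ as ``go to the predicate of an outgoing triple'' and $\smtt{node}^{-1}$ as ``go to the predicate of an incoming triple'', the three disjuncts $\smtt{next::type}/(\smtt{next::sc})^*$, $\smtt{edge}/(\smtt{next::sp})^*/\smtt{next::dom}/(\smtt{next::sc})^*$ and $\smtt{node}^{-1}/(\smtt{next::sp})^*/\smtt{next::range}/(\smtt{next::sc})^*$ of $\phi(\smtt{type})$ correspond precisely to (i), (ii), (iii); the union over the branches gives the equality. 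The main obstacle is the $\phi(\smtt{type})$ case of the ``only if'' direction: it is the only one needing a genuine induction, the interleaving of $\smtt{sp}$-steps (which rewrite the predicate before a $\smtt{dom}$/$\smtt{range}$ rule can fire) with the later $\smtt{sc}$-steps must be analyzed carefully, and it is exactly there that the genuineness hypothesis keeps the case analysis closed.
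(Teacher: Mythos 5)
The paper never proves this theorem itself: it is imported from \cite{perez2010a} (Theorem~3), and the appendix only combines it with Lemma~\ref{lem:induction} to obtain Proposition~\ref{prop:nsparqlanswer}. So there is no internal proof to compare yours against; what you give is an independent reconstruction, and it is sound and follows the natural route: reduce both sides (via Definition~\ref{def:rdfs-nsparql-triple-evaluation}) to the binary relations ``$\langle a,p,b\rangle\in closure(G)$'' versus ``$\langle a,b\rangle\in\[[\phi(p)\]]_{G}$'', characterize membership in $closure(G)$ for each predicate shape by induction on $\rho$df derivations with a case split on the last rule of Table~\ref{tab:rdfs-rules}, and then unfold Definition~\ref{def:nresemantics} branch by branch. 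Your readings of \smtt{edge}, \smtt{node}$^{-1}$ and of the nesting \smtt{next[(next::sp)*/self::p]} are faithful to the three-place semantics, and your normal form for \smtt{type} derivations (a run of \smtt{sp}-steps, at most one \smtt{dom}/\smtt{range} step, then \smtt{sc}-steps) is indeed the crux and matches the three disjuncts of $\phi(\smtt{type})$.

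Two precisions. Your claims that \smtt{dom}/\smtt{range} triples are never derived and that \smtt{sp}/\smtt{sc} triples arise only from the transitivity rules do not follow from Table~\ref{tab:rdfs-rules} alone: the inheritance rule $\frac{\langle p,\smtt{sp},q\rangle\;\langle x,p,y\rangle}{\langle x,q,y\rangle}$ could manufacture such triples whenever a $\rho$df term occurs as the \emph{object} of an \smtt{sp} triple (e.g.\ $\langle p,\smtt{sp},\smtt{sc}\rangle$), so you need genuineness in the paper's full sense ($\rho$df vocabulary only in predicate position), not merely ``never occurs as a subject'' as you state it. Relatedly, the hypothesis is not cosmetic: for $G=\{\langle a,q,b\rangle,\langle q,\smtt{sp},\smtt{sc}\rangle\}$ the closure contains $\langle a,\smtt{sc},b\rangle$ while $\[[(\smtt{next::sc})^{+}\]]_{G}$ is empty, so the phrase ``for any RDF graph $G$'' in the statement must be read under the paper's standing restriction to genuine graphs --- exactly the place where you invoke it. With that wording fixed, your argument goes through.
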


This results uses the natural extension of $\phi$ to nSPARQL graph patterns to answer SPARQL queries modulo RDFS.

\begin{proposition}[Completeness of $\phi$ on graph patterns \cite{perez2010a}]\label{prop:nsparqlquerycompl}
Given a basic graph pattern $P$ and an RDFS graph $G$,
$$G\models_{RDFS}^{\text{nrx}} P \text{ iff } G\models_{nSPARQL} \phi(P)$$
\end{proposition}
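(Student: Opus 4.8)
The plan is to lift the single-triple statement of Theorem~\ref{prop:nsparqlcompl} to conjunctions. Since a \emph{basic} graph pattern $P$ is just a finite set (conjunction) of triple patterns, no structural induction over \texttt{AND}/\texttt{UNION}/\texttt{OPT}/\texttt{FILTER} is needed; the point is simply that on both sides of the claimed equivalence the semantics of $P$ (resp. $\phi(P)$) is obtained by intersecting, under one shared map, the contributions of the individual triples. So the proof has two ingredients: a decomposition of $\models_{RDFS}^{\text{nrx}}$ over the triples of $P$, and the triplewise transfer provided by Theorem~\ref{prop:nsparqlcompl}.

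First I would fix a map $\sigma$ with $dom(\sigma)=\mathcal B(P)$ ranging over $\mathcal T(G)$ and show that $G\models_{RDFS}^{\text{nrx}}\sigma(P)$ holds iff, for every triple $t=\langle s,p,o\rangle\in P$, $\langle\sigma(s),\sigma(o)\rangle\in\[[p\]]^{rdfs}_{G}$. By Proposition~\ref{thm:rdfnrxsentailment} (using the standing assumption that $G$ is a satisfiable genuine RDFS graph), $G\models_{RDFS}^{\text{nrx}}\sigma(P)$ is equivalent to $(\hat G\backslash\backslash\sigma(P))\models_{RDF}\sigma(P)$; as $\sigma(P)$ is ground, the homomorphism characterisation of GRDF entailment reduces this to requiring each $\sigma(t)$ to lie in that closure, and one checks that for each individual query triple this membership is insensitive to the bookkeeping parameter of the partial closure, so it coincides with $\sigma(t)\in closure(G)$ as used in the nSPARQL modulo-RDFS semantics (Definition~\ref{def:rdfs-nsparql-triple-evaluation}), i.e. with $\langle\sigma(s),\sigma(o)\rangle\in\[[p\]]^{rdfs}_{G}$. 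Applying Theorem~\ref{prop:nsparqlcompl} to each triple pattern (so $p\in\mathcal U$; triples with a variable predicate fall outside the scope of $\phi$ and are excluded or treated separately) this is in turn equivalent to $\langle\sigma(s),\sigma(o)\rangle\in\[[\phi(p)\]]_{G}$, that is, to $\sigma$ satisfying $\langle s,\phi(p),o\rangle$ in $G$ under the plain nSPARQL semantics (Definition~\ref{def:nsparql-triple-evaluation}).

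Collecting these equivalences over all $t\in P$, and unfolding $\phi(P)=\{\langle s,\phi(p),o\rangle : \langle s,p,o\rangle\in P\}$ together with the definition of answers to a basic nSPARQL graph pattern as the intersection of the per-triple evaluations (Definitions~\ref{def:rdfs-nsparqlanswer} and~\ref{def:sparqlsemanswer}), one gets that $\sigma$ witnesses $G\models_{RDFS}^{\text{nrx}}\sigma(P)$ iff $\sigma$ witnesses $G\models_{nSPARQL}\sigma(\phi(P))$; existentially quantifying over $\sigma$ yields the claimed $G\models_{RDFS}^{\text{nrx}}P$ iff $G\models_{nSPARQL}\phi(P)$. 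The main obstacle is the decomposition step: one must be careful that ``entailment of the conjunction equals conjunction of entailments'' is legitimate only because $\sigma$ is fixed \emph{before} decomposing and is shared across all triples (hence the existential quantifier is pushed to the outside), and that the two closures in play — the $\rho$df saturation $closure(G)$ of Table~\ref{tab:rdfs-rules} used by the nSPARQL semantics and the ter~Horst partial non-reflexive closure $\hat G\backslash\backslash\sigma(P)$ invoked by Proposition~\ref{thm:rdfnrxsentailment} — agree precisely on the triples that matter here. Once this is in place, the triplewise transfer is immediate from Theorem~\ref{prop:nsparqlcompl}.
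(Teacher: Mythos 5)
Your proof is correct, and it fills in a step the paper itself does not spell out: Proposition~\ref{prop:nsparqlquerycompl} is imported from \cite{perez2010a} without a local proof, the only triple-level ingredient restated in the paper being Theorem~\ref{prop:nsparqlcompl}, and the lifting from single triples to basic graph patterns is left implicit (it is carried out explicitly only for the PSPARQL analogue, Lemma~\ref{lem:psparqlbasecompleteness}, by exactly the strategy you use: ``joining the answers to its triple patterns''). Your argument follows that same route: fix $\sigma$ first, reduce $G\models_{RDFS}^{\text{nrx}}\sigma(P)$ to per-triple membership in the non-reflexive closure via Proposition~\ref{thm:rdfnrxsentailment}, transfer each triple with Theorem~\ref{prop:nsparqlcompl}, and recombine through the intersection semantics of Definition~\ref{def:rdfs-nsparqlanswer}; your insistence that the shared map be fixed before decomposing, so that the existential quantifier stays outside, is precisely the point that makes the conjunction/entailment exchange legitimate. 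The two caveats you flag are real but benign and consistent with the proposition's intended scope: the agreement between the ter~Horst partial non-reflexive closure $\hat{G}\backslash\backslash\sigma(P)$ and the $\rho$df saturation $closure(G)$ of Table~\ref{tab:rdfs-rules} on the query triples is asserted rather than verified (it holds for genuine graphs and $\rho$df-fragment query triples, since axiomatic triples and \texttt{type}/\texttt{prop} triples generated by the extra ter~Horst rules cannot be instances of such triples), and triple patterns with variable predicates (or with subject/object outside $\mathcal{U}\cup\mathcal{B}$) must indeed be excluded, exactly as in the hypotheses of Theorem~\ref{prop:nsparqlcompl}, since $\phi$ is undefined on them. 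Making that closure-agreement check explicit would be the only improvement needed to turn your sketch into a fully self-contained proof.
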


\subsection{SPARQL queries modulo RDFS with NSPARQL}\label{sec:nsparqlrdfs}

These results about nSPARQL can be transferred to NSPARQL query answering:

\begin{proposition}[Answers to a SPARQL query modulo RDFS by NSPARQL]\label{prop:nsparqlanswer}
Let $\texttt{SELECT~}\vec{B}$ $\texttt{FROM~}u$ $\texttt{WHERE~}P$ be a SPARQL query with $P$ a SPARQL graph pattern and $G$ the RDFS graph identified by the URI $u$, 
then the set of answers to this query is: 
$$\mathcal{A}^{\#}(\vec{B}, G, P)=\{\sigma|_{\vec{B}}^{\vec{B}}~|~\sigma\in \mathcal{S}^{o}(\phi(P),G) \}$$
\end{proposition}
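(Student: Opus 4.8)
<br>

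The goal is to prove Proposition~\ref{prop:nsparqlanswer}, which lifts the triple-level (and basic-graph-pattern-level) completeness of the rewriting $\phi$ (Theorem~\ref{prop:nsparqlcompl} and Proposition~\ref{prop:nsparqlquerycompl}) to full SPARQL graph patterns with projection, showing $\mathcal{A}^{\#}(\vec{B}, G, P)=\{\sigma|_{\vec{B}}^{\vec{B}}~|~\sigma\in \mathcal{S}^{o}(\phi(P),G)\}$. The plan is to reduce everything to the basic graph pattern case and then invoke the compositional structure of both the RDFS-modulo answer semantics (Definition~\ref{def:answersparqlmodulo}, rewritten via Corollary~\ref{thm:answersparqlmodulo} or directly through $\models_{RDFS}^{\text{nrx}}$) and the nSPARQL answer semantics (Definition~\ref{def:nsparqlanswer}, built inductively following Definition~\ref{def:sparqlsemanswer}). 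The key observation is that both sides are defined by the \emph{same} inductive clauses over AND, UNION, OPT, FILTER — the independence of subquery evaluation noted after Proposition~\ref{prop:SPARQLcmpl} — so the only thing that needs checking is that the two basic-graph-pattern semantics agree after rewriting.

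\textbf{Step 1: base case.} First I would show that for a basic graph pattern $P$ (a set of SPARQL triple patterns) and an RDFS graph $G$, $\mathcal{S}(P,\hat G\backslash\backslash P) = \mathcal{S}^{o}(\phi(P),G)$, equivalently that a map $\sigma$ satisfies $G\models_{RDFS}^{\text{nrx}}\sigma(P)$ iff $\sigma\in\mathcal{S}^{o}(\phi(P),G)$. This is essentially Proposition~\ref{prop:nsparqlquerycompl} applied to $\sigma(P)$, combined with the fact that $\phi$ commutes with variable substitution on subjects and objects (since $\phi$ acts only on the predicate position, and SPARQL triple patterns in basic graph patterns have predicates in $\mathcal{U}$, substitution by $\sigma$ does not touch them). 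One must be slightly careful that the definition of $\mathcal{S}^{o}$ (Definition~\ref{def:rdfs-nsparqlanswer}) intersects the per-triple evaluations $\bigcap_{t\in P}\[[t\]]^{rdfs}_G$, whereas $\models_{RDFS}^{\text{nrx}}\sigma(P)$ requires every triple of $\sigma(P)$ to be entailed simultaneously; these coincide because a map is in the intersection exactly when it is compatible with a witness for each triple, and Definition~\ref{def:nsparql-triple-evaluation} fixes $dom(\sigma)=\{x,y\}\cap\mathcal B$ per triple so the join over triples is exactly the conjunctive semantics. I would spell out this matching of domains explicitly, as it is the only genuinely fiddly point.

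\textbf{Step 2: inductive lifting.} Having the base case, I would argue by structural induction on $P$ that $\mathcal{S}^{\#}(P,G) := \{\sigma|_{\mathcal B(P)}~|~G\models_{RDFS}^{\text{nrx}}\sigma(P)\}$ equals $\mathcal{S}^{o}(\phi(P),G)$, where $\phi$ is extended homomorphically over AND/UNION/OPT/FILTER (leaving FILTER constraints $K$ untouched, since they only involve variables and names, not predicates, and $\phi$ does not alter the variable set $\mathcal B(P)=\mathcal B(\phi(P))$). Each inductive clause is immediate because Definition~\ref{def:sparqlsemanswer} gives the same recursion for $G\models_{RDFS}^{\text{nrx}}\sigma(\cdot)$ and for $G\models_{nSPARQL}\sigma(\cdot)$ — the join, union, OPT-with-incompatibility, and filter operations are identical in both — so the induction hypothesis pushes through verbatim. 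Then projecting and completing to $\vec B$ via $\sigma|_{\vec B}^{\vec B}$ on both sides yields $\mathcal{A}^{\#}(\vec{B}, G, P) = \{\sigma|_{\vec{B}}^{\vec{B}}~|~\sigma\in \mathcal{S}^{o}(\phi(P),G)\}$, which is the claim.

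\textbf{Main obstacle.} The real work is entirely in Step~1 and is largely bookkeeping: reconciling the three slightly different formats in which the semantics has been presented — the model-theoretic $\models_{RDFS}^{\text{nrx}}$, the closure-based characterization of Corollary~\ref{thm:answersparqlmodulo}, and the triple-wise $\[[\cdot\]]^{rdfs}_G$ intersection — and in particular tracking exactly which variables a satisfying map is required to bind (the ``partial map'' phenomenon from Definition~\ref{def:sparqlsynanswer}: answers may bind only the variables needed to witness entailment). I expect that once Theorem~\ref{prop:nsparqlcompl} is invoked at the level of ground instances $\sigma(P)$ and the domain conventions of Definitions~\ref{def:nsparql-triple-evaluation} and~\ref{def:rdfs-nsparqlanswer} are matched against those of $\mathcal{S}(P,G)$, everything else is a routine unwinding of definitions with no new ideas. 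The inductive step (Step~2) is essentially free, given the paper's own remark that Definition~\ref{def:sparqlsemanswer} ``shows the independence of subquery evaluation.''
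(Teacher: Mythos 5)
Your proposal is correct and follows essentially the same route as the paper: the paper likewise reduces the claim to the basic-graph-pattern completeness of $\phi$ (Theorem~\ref{prop:nsparqlcompl}/Proposition~\ref{prop:nsparqlquerycompl}) and then lifts it through AND, UNION, OPT and FILTER via the query-structure-preserving induction of Lemma~\ref{lem:induction}, before restricting and completing to $\vec{B}$. Your Step~2 simply re-derives that induction lemma inline, and your Step~1 makes explicit the domain bookkeeping that the paper leaves implicit.
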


\begin{example}[NSPARQL Query]\label{ex:nsparql}
The result of transforming the query of Example~\ref{ex:query} with $\phi$ is:
\begin{smalltt}
SELECT ?x, ?y, ?z
FROM ...
WHERE \{
  ?x next[(next::sp)*/self::rn:inhibits] ?y.
  ?y next[(next::sp)*/self::rn:regulates] ?z.
  ?x next[(next::sp)*/self::rn:promotes] ?z.
  ?x next::rdf:type/next::sc*
    |edge/next::sp*/next::dom/next::sc*
    |node-1/next::sp*/next::range/next::sc* rn:gene. 
    \}
\end{smalltt}
This query provides the correct set of answers for the RDF graph of Example~\ref{ex:rdf} modulo the RDF Schema of Example~\ref{ex:rdfs} (given in Example~\ref{ex:rdfseval}).
\end{example}

The main problem with NSPARQL is that, because nested regular expressions do not contain variables, it does not preserve SPARQL queries. 
Indeed, it is impossible to express a query containing the simple triple $\langle ?x~?y~?z\rangle$. This may seem like a minor problem, but in fact NSPARQL graph patterns prohibits dependencies between two variables as freely as it is possible to express in SPARQL.
For instance, querying a gene regulation network for self-regulation cycles (such that a product which inhibits another product that indirectly activates itself or which activates a product which indirectly inhibit itself), can be achieved with the following SPARQL query:
\begin{smalltt}
SELECT ?a
WHERE \{
    ?a ?p ?b.
    ?b rn:promotes ?c.
    ?c ?q ?a.
    ?a rdf:type rn:gene.
    ?b rdf:type rn:gene.
    ?c rdf:type rn:gene.
    ?p sp rn:regulates.
    ?q sp rn:regulates.
    ?p owl:inverseOf ?q.
    \}
\end{smalltt}
Such queries are representative of queries in which two different predicates (\smtt{?p} and \smtt{?q}) are dependent of each others\footnote{Here through \smtt{owl:inverseOf}, but the use of OWL vocabulary is not at stake here, it could have been \smtt{rdfs:subPropertyOf} or any other property.}. They are not expressible by a language like NSPARQL.
An nSPARQL expression can express (abstracting for the type constraints):
\begin{smalltt}
?a next[(next::sp)*/self::rn:regulates] / next[(next::sp)*/self::rn:regulates] 
        / next[(next::sp)*/self::rn:regulates] ?a
\end{smalltt}
\noindent but it cannot express the \smtt{owl:inverseof} constraints because it defines a dependency between two places in the path (this is not anymore a regular path).

The same type of query can be used for querying a banking system for detecting money laundering in which a particular amount is moving and coming back through an account through buy/sell or debit/credit operations with an intermediary transfer.

\section{cpSPARQL and CPSPARQL} \label{sec:cpsparql}

CPSPARQL has been defined for addressing two main issues. 
The first one comes from the need to extend PSPARQL and thus to allow for expressing constraints on nodes of traversed paths;
while the second one comes from the need to answer PSPARQL queries modulo RDFS so that the transformation rules could be applied to PSPARQL queries \cite{alkhateeb2008b}.

In addition to CPSPARQL, we present cpSPARQL \cite{alkhateeb2014a}, a language using CPSPARQL graph patterns in the same way as nSPARQL does.

\subsection{CPSPARQL syntax}

The notation that we use for the syntax of CPSPARQL is slightly different from the one defined in the original proposal \cite{alkhateeb2008b}. 
The original one uses \texttt{edge} and \texttt{node} constraints to express constraints on predicates (or edges) and nodes of RDF graphs, respectively. 
In this paper, we adopt the \texttt{axes} borrowed from XPath, with which the reader may be more familiar, as done for nSPARQL. 
This also will allow us to better compare cpSPARQL and nSPARQL. 
Additionally, in the original proposal, \texttt{ALL} and \texttt{EXISTS} keywords are used to express constraints on all traversed nodes or to check the existence of a node in the traversed path that satisfies the given constraint. 
We do not use these keywords in the fragment presented below since they do not add expressiveness with respect to RDFS semantics, i.e., the fragment still captures RDFS semantics.


Constraints act as filters for paths that must be traversed by constrained regular expressions and select those whose nodes satisfy encountered constraint. 

\begin{definition}[Constrained regular expression]\label{def:cre}
A \emph{constrained regular expression} is an expression built from the following grammar:
\begin{center}
$cre$ ::= axis $|$ axis\emph{::}a $|$ axis\emph{::}$[?x:\psi]$ $|$ axis\emph{::}$]?x:\psi[$
 $|$ cre $|$ cre\emph{/}cre $|$ cre\emph{|}cre $|$ cre$^*$ 
\end{center}
with $\psi$ a set of triples belonging to $\mathcal{U}\cup \mathcal{B}\cup \{?x\} \times cre \times \mathcal{T}\cup\{?x\}$ and FILTER-expressions over $\mathcal{B}\cup\{?x\}$. $\psi$ is called a CPRDF-constraint and $?x$ its head variable.
\end{definition}
Constrained regular expressions allow for constraining the item in one axis to satisfy a particular constraint, i.e., to satisfy a particular graph pattern (here an RDF graph) or filter.
We introduce the closed square brackets and open square brackets notation for distinguishing between constraints which export their variable (it may be assigned by the map) and constraints which do not export it (the variable is only notational). 
This is equivalent to the initial CPSPARQL formulation, in which the variable was always exported, since CPSPARQL can ignore such variables through projection.

We use ${\mathcal B}(R)$ for the set of variables occurring as the head variable of an open bracket constraint in $R$.

Constraint nesting is allowed because constrained regular expressions may be used in the graph pattern of another constrained regular expression as in Example~\ref{ex:constre}.

\begin{example}[Constrained regular expression]\label{ex:constre}
The following constrained regular expression could be used to find nodes connected by transportation means that are not buses:
\begin{align*}
(next::[?p: \{ &\langle ?p, (next::sp)^*, transport \rangle FILTER (?p !=bus)\}])^+
\end{align*}
\end{example}

In contrast to nested regular expressions, constrained regular expressions can apply constrains (such as SPARQL constraints) in addition to simple nested path constraints.

Constrained regular expressions are used in triple patterns, precisely in predicate position, to define CPSPARQL.   

\begin{definition}[CPSPARQL triple pattern]\label{def:cpsparqlpathtriple}
A \emph{CPSPARQL triple pattern} is a triple $\langle s, p, o\rangle$ such that $s\in\mathcal{T}$, $o\in\mathcal{T}$ and $p$ is a constrained regular expression.
\end{definition}

\begin{definition}[CPSPARQL graph pattern]
A \emph{CPSPARQL graph pattern} is defined inductively by:
\begin{itemize*}
    \item every CPSPARQL triple pattern is a CPSPARQL graph pattern;
    \item if $P_1$ and $P_2$ are two CPSPARQL graph patterns and $K$ is a SPARQL constraint, then ($P_1$ {\tt AND} $P_2$), ($P_1$ {\tt UNION} $P_2$), ($P_1$ {\tt OPT} $P_2$), and ($P_1$ {\tt FILTER} $K$) are CPSPARQL graph patterns.
\end{itemize*}
\end{definition}

\begin{example}[CPSPARQL graph pattern]\label{ex:cpsparqlgp}
The following CPSPARQL graph pattern could be used to retrieve the set of pairs of cities connected by a sequence of transportation means (which are not buses) such that one city in France and the other one in Jordan:
\begin{align*}
\{\langle ?city_1, (next&::[?p: \{ \langle ?p, (next::sp)^*, transport \rangle FILTER (?p !=bus)\}])^+, ?city_2 \rangle \\
\langle ?city_1, next&::cityIn, France \rangle  \\
\langle ?city_2, next&::cityIn, Jordan \rangle\}
\end{align*}
If open square brackets were used, this graph pattern would, in addition, bind the $?p$ variable to a matching value, i.e., the transportation means used.
\end{example}

By restricting CPRDF constraints, it is possible to define a far less expressive language. 
cpSPARQL is such a language.

\begin{definition}[cpSPARQL regular expression  \cite{alkhateeb2014a}]\label{def:cpcre}
A \emph{cpSPARQL regular expression} is an expression built from the following grammar:
\begin{align*}
cpre ::= ~& axis ~|~ axis\emph{::}a ~|~ axis\emph{::}]?x: TRUE[\\
~|~ & axis\emph{::}[?x: \{ \langle ?x, cpre, v \rangle \}\{ FILTER(?x) \} ]\\
~|~ & cpre ~|~ cpre\emph{/}cpre ~|~ cpre\emph{|}cpre ~|~ cpre^* 
\end{align*}
such that $v$ is either a distinct variable $?y$ or a constant (an element of $U\cup L$) and $FILTER(?x)$ is the usual SPARQL filter condition containing at most the variable $?x$ and $v$ if $v$ is a variable.
\end{definition}

The first specific form, with open square brackets, has been preserved so that cpSPARQL triples cover SPARQL basic graph patterns, i.e., allow for variables in predicate position.
In the other specific forms, a cpSPARQL constraint is either a cpSPARQL regular expression containing $?x$ as the only variable and/or a SPARQL FILTER constraint.
Hence, such a regular expression may have several constraints, but each constraint can only expose one variable and it cannot refer to variables defined elsewhere.
It is clear that any cpSPARQL regular expression is a constrained regular expression.

Deciding if a CPSPARQL triple is a cpSPARQL triple can be decided in linear time in the size of the regular expression used.


\begin{example}[cpSPARQL triple patterns]\label{ex:cpsparqlpattern}
The query of Example~\ref{ex:nsparqlpattern} could be expressed by the following cpSPARQL pattern:
\begin{align*}
\langle &?city_1, (next::[?p: \{ \langle ?p, (next::sp)^*, transport \rangle\}])^+, ?city_2 \rangle
\end{align*}
The constraint $\psi \!= \!?p\!:\! \{ \langle ?p, (next\!::\!sp)^*, transport \rangle\}$ is used to restrict the properties (in this pattern the constraint is applied to properties since the axis \smtt{next} is used) to be only a transportation mean.

Example~\ref{ex:constre} provides another cpSPARQL regular expression. By contrast, CPSPARQL graph patterns allow for queries like:
\begin{align*}
next::[&?p; \{\langle?p,(next::sp)*,?z\rangle, \\
& \qquad \langle?q, (next::sp)*, ?z\rangle, \\
& \qquad \langle?p, owl:inverseOf, ?q\rangle,\\
& \qquad FILTER(regex(?z, iata.orgÃ¹)\}]
\end{align*}
 which is not a cpSPARQL regular expression since it uses more than two variables.
\end{example}

It is possible to develop languages based on cpSPARQL regular expressions following what is done with constrained regular expressions.

\subsection{CPSPARQL semantics}

Intuitively, a constrained regular expression $next$::$[\psi]$ (where $\psi =?p: \{\langle ?p,$ $sp^*,$ $transport \rangle\}$) is equivalent to $next$::$p$ if $p$ satisfies the constraint $\psi$, i.e., $p$ should be a sub-property of $transport$ (when $p$ is substituted to the variable $?p$).

\begin{definition}[Satisfied constraint in an RDF graph]\label{def:satconstarint}
Let $G$ be an RDF graph, $s$ and $o$ be two nodes of $G$ and $\psi=x\!:\!C$ be a constraint, then $s$ and $o$ satisfies $\psi$ in $G$ (denoted $\langle s, o \rangle \in \[[\psi\]]_{G}$) if and only if one of the following conditions is satisfied: 
\begin{enumerate*}
\item $C$ is a triple pattern $C=\langle x,R,y \rangle$, and  $\langle x_s^x, y_o^y \rangle \in \[[R_s^x\]]_{G}$, where $K_r^z$ means that $r$ is substituted to the variable $z$ if $K=z$ or $K$ contains the variable $z$. If $z$ is a constant then $z=r$.   
	
\item $C$ is a SPARQL filter constraint and $C_{s,o}^{x,y}=\top$, where $C_{s,o}^{x,y}=\top$ means that the constraint obtained by the substitution of $s$ to each occurrence of the variable $x$ and $o$ to each occurrence of the variable $y$ in $C$ is evaluated to true\footnote{Except for the case of bound (see Definition~\ref{def:sparqlsynanswer} and the discussion after it).}.  
	
\item $C = P~FILTER~K$, then 1 and 2 should be satisfied 
\end{enumerate*}
\end{definition}

As for nested regular expressions, the evaluation of a constrained regular expression $R$ over an RDF graph $G$ is defined as a binary relation $\[[R\]]_{G}$, by a pair of nodes $\langle a, b\rangle$ such that $a$ is reachable from $b$ in $G$ by following a path that conforms to $R$. 
The following definition extends Definition~\ref{def:nresemantics} to take into account the semantics of terms with constraints.

\begin{definition}[Constrained path interpretation]\label{def:cresemantics}
Given a constrained regular expression $P$ and an RDF graph $G$, if $P$ is unconstrained then the interpretation of $P$ in $G$ (denoted $\[[P\]]_G$) is as in Definition~\ref{def:nresemantics}, otherwise the interpretation of $P$ in $G$  is defined as:
\begin{align*}
\[[self::[\psi]\]]_{G} =& ~\{\langle x, x\rangle|~ \exists z; x \in voc(G) \wedge \langle x, z \rangle \in \[[\psi\]]_{G}\}\\
\[[next::[\psi]\]]_{G} =& ~\{\langle x, y\rangle|~ \exists z,w; \langle x, z, y\rangle\in G \wedge\langle z, w \rangle \in \[[\psi\]]_{G}\}\\
\[[edge::[\psi]\]]_{G} =& ~\{\langle x, y\rangle|~ \exists z, w; \langle x,  y, z\rangle\in G \wedge \langle z, w \rangle \in \[[\psi\]]_{G}\}\\
\[[node::[\psi]\]]_{G} =& ~\{\langle x, y\rangle|~ \exists z,w; \langle z, x, y\rangle\in G \wedge \langle z, w \rangle \in \[[\psi\]]_{G}\}\\
\[[axis^{-1}::[\psi]\]]_{G} =& ~\{\langle x, y\rangle|~ \langle y, x \rangle\in\[[axis::[\psi]\]]_{G}\}
\end{align*}
\end{definition}

\begin{definition}[Answer to a CPSPARQL triple pattern]\label{def:triple-evaluation}
The evaluation of a CPSPARQL triple pattern $\langle x,R,y \rangle$ over an RDF graph $G$ is defined as the following set of maps:
$$\[[\langle x,R,y \rangle\]]_{G} = \{\sigma \ | \ dom(\sigma) = \{x, y\} \cap {\mathcal B}\cup {\mathcal B}(R) \text{ and }\langle \sigma(x), \sigma(y) \rangle \in \[[\sigma(R)\]]_{G}\}$$
such that $\sigma(R)$ is the constrained regular expression obtained by substituting the variable $?x$ appearing in a constraint with open brackets in $R$ by $\sigma(?x)$. 
\end{definition}

This semantics also applies to cpSPARQL graph patterns. 


\subsection{Evaluating cpSPARQL regular expressions}\label{sec:complexityresult}

In order to establish the complexity of cpSPARQL we follow \cite{perez2010a} to store an RDF graph as an adjacency list: every $u \in voc(G)$ is associated with a list of pairs $\alpha(u)$. For instance, if $\langle s,p,o\rangle \in G$, then $\langle$\smtt{next}::$p,o \rangle \in \alpha(s)$ and $\langle$\smtt{edge}$^{-1}$::$o,s \rangle \in \alpha(p)$. Also, $\langle$\smtt{self}::$u,u\rangle \in \alpha(u)$, for $u \in voc(G)$.
The set of terms of a constrained regular expression $R$, denoted by ${\mathcal T}(R)$, is constructed as follows:
\vspace{-.25cm}
\begin{align*}
\mathcal{T}(R) =& \{R\} \text{if $R$ is either \smtt{axis}, \smtt{axis}::a, or \smtt{axis}::$\psi$}\\
\mathcal{T}(R_1 /R_2) =&  {\mathcal T}(R_1 | R_2) ={\mathcal T}(R_1) \cup {\mathcal T}(R_2)\\
\mathcal{T}(R_1^*) =& {\mathcal T}(R_1)
\end{align*}

Let $\mathcal{A}_R=(Q,{\mathcal T}(R),s_0,F,\delta)$ be the $\epsilon-NFA$ of $R$ constructed in the usual way using the terms ${\mathcal T}(R)$, where $\delta: Q \times ({\mathcal T}(R)\cup\{epsilon\}) \rightarrow 2^Q$ be its transition function. In the evaluation algorithm, we use the product automaton $G \times \mathcal{A}_R$ (in which $\delta': \langle voc(G)\times Q \rangle \times ({\mathcal T}(R)\cup\{epsilon\}) \rightarrow 2^{voc(G)\times Q}$ is its transition function). We construct $G \times \mathcal{A}_R$ as follows:
\begin{itemize}
	\item $\langle u,q \rangle \in voc(G)\times Q$, for every $u \in voc(G)$ and $q \in Q$;
	\item $\langle v,q \rangle \in \delta'(\langle u,p \rangle, s)$ iff $q \in \delta(p,s)$; and one of the following conditions satisfied:
	\begin{itemize*}
		\item $s=$ \smtt{axis} and there exists $a$ s.t. $\langle$\smtt{axis}::$a,v\rangle \in \alpha(u)$
		\item $s=$ \smtt{axis}::$a$ and $\langle$\smtt{axis}::$a,v\rangle \in \alpha(u)$
		\item $s=$ \smtt{axis}::$\psi$ and there exists $b$ s.t. $\langle$\smtt{axis}::$b,v\rangle \in \alpha(u)$ and $b \in \[[\psi\]]_{G}$
	\end{itemize*}
\end{itemize}

Algorithm~\ref{algo:eval} (Eval) solves the evaluation problem for a constrained regular expression $R$ over an RDF graph $G$. 
This algorithm is almost the same as the one in \cite{perez2010a} which solves the evaluation problem for nested regular expressions $R$ over an RDF graph $G$. 
The Eval algorithm calls the Algorithm~\ref{algo:label} (LABEL), which is an adaptation of the LABEL algorithm of \cite{perez2010a} in which we modify only the first two steps. 
These two steps are based on the transformation rules from nSPARQL expressions to cpSPARQL expressions (see \S\ref{sec:cpSPARQLexpr}).

\begin{algorithm}
\caption{{\bf LABEL}(G, exp):} \label{algo:label}
\begin{algorithmic}
\STATE 1. {\bf for each} $axis::[\psi] \in D_0(exp)$ {\bf do}
\STATE 2. 		\hspace{0.5cm} call Label(G, exp') //where $exp' = exp1/self::p$ if $\psi = ?x: <?x, exp1, p>$; $exp' = exp1/self::p$ if $\psi = ?x: <?x, exp1, ?y>$ 
\STATE 3. construct $A_{exp}$, and assume that $q_0$ is its initial state and $F$ is its set of final states
\STATE 4. construct $G \times A_{exp}$
\STATE 5. {\bf for each} state $(u, q_0)$ that is connected to a state $(v, q_f)$ in $G \times A_{exp}$, with $q_f \in F$ {\bf  do}
\STATE 6. 	\hspace{0.5cm} $label(u) := label(u) \cup {exp}$
\end{algorithmic}
\end{algorithm}

The algorithm as the same $O(|G|\times|R|$) time complexity as usual regular expressions \cite{G10,mendelzon1995a} and nested regular expressions \cite{perez2010a} evaluation.

\begin{algorithm}
\caption{{ \bf Eval}$(G,R,\langle a, b \rangle)$} \label{algo:eval}
\begin{algorithmic}
\STATE {\bf Data}: An RDF graph $G$, a constrained regular expression $R$, and a pair $\langle a, b \rangle$.
\STATE {\bf Result}: \textsc{yes} if $\langle a, b \rangle \in \[[R\]]_{G}$; otherwise \textsc{no}.
\vspace{0.4cm}

\STATE {\bf for each} $u \in voc(G)$  {\bf  do}
\STATE \hspace{0.5cm} $label(u) := \emptyset$
\STATE LABEL ($G,R$)
\STATE construct $\mathcal{A}_R$ (assume $q_0:$ initial state and $F:$ set of final states)
\STATE construct the product automaton $G \times \mathcal{A}_R$
\IF{ a state $\langle b,q_f \rangle$ with $q_f \in F$, is reachable from $\langle a,q_0 \rangle$ in $G \times \mathcal{A}_R$} 
\RETURN \textsc{yes};
\ELSE 
\RETURN \textsc{no};
\ENDIF
\end{algorithmic}
\end{algorithm}

\begin{theorem}[Complexity of cpSPARQL regular expression evaluation]\label{th:cpsparqlrecomplexity}
{\emph \textbf{Eval}} solves the evaluation problem for constrained regular expression in time $O(|G|\times|R|$).
\end{theorem}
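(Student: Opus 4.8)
The plan is to mirror the proof of the analogous result for nested regular expressions in \cite{perez2010a}, isolating the two places where constraints intervene: the \textbf{LABEL} procedure (Algorithm~\ref{algo:label}), which precomputes for each node of $G$ the constraints it satisfies, and the extra side condition governing the transitions of the product automaton $G\times\mathcal{A}_R$. I would argue correctness first and then the running time, both by induction on the nesting depth $d(R)$ of the constraints $axis::[\psi]$ occurring in $R$.

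\textbf{Correctness.} For the base case $d(R)=0$, $R$ is an unnested navigational regular expression; by Definition~\ref{def:cresemantics} its interpretation $\[[R\]]_G$ coincides with the nested-regular-expression interpretation, and \textbf{Eval} reduces to a reachability test in $G\times\mathcal{A}_R$, which is correct by the standard correspondence between accepting runs of $\mathcal{A}_R$ and paths of $G$ conforming to $R$, using the adjacency-list encoding $\alpha$. For the inductive step I would assume \textbf{LABEL} correct for all constraints of nesting depth $<d$. When $d(R)=d$, steps~1--2 of \textbf{LABEL} recurse on the inner expressions $exp'$ built from the top-level constraints $\psi$ of $R$; these have depth $<d$, so by the induction hypothesis their labelling is correct, and by Definition~\ref{def:satconstarint} together with Definition~\ref{def:cresemantics} a node $u$ ends up with the constraint's expression in $label(u)$ exactly when $\exists z,\ \langle u,z\rangle\in\[[\psi\]]_G$. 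Hence the transition of $G\times\mathcal{A}_R$ labelled $axis::\psi$ is taken precisely when $G$ provides the required adjacency and the reached predicate/node carries that label, which is exactly the clause for $axis::[\psi]$ in Definition~\ref{def:cresemantics}. It then follows, by induction on the length of the accepting run, that a state $\langle b,q_f\rangle$ with $q_f\in F$ is reachable from $\langle a,q_0\rangle$ iff there is a path from $a$ to $b$ in $G$ conforming to $R$, i.e.\ iff $\langle a,b\rangle\in\[[R\]]_G$; so \textbf{Eval} is sound and complete.

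\textbf{Complexity.} For an expression $e$, write $n(e)=|\mathcal{T}(e)|$ for its number of terms, a nested constraint $axis::[\psi]$ counting as one term of $e$. Then the $\epsilon$-NFA $\mathcal{A}_e$ has $O(n(e))$ states and transitions, and the product $G\times\mathcal{A}_e$ has $O(|voc(G)|\cdot n(e))$ vertices and $O(|G|\cdot n(e))$ edges, since at a fixed state each transition on a term $axis$, $axis::a$ or $axis::\psi$ contributes one product edge per matching entry of the adjacency lists (and $\sum_{u}|\alpha(u)|=O(|G|)$). Building $\alpha$ costs $O(|G|)$; constructing $G\times\mathcal{A}_e$ and running a BFS/DFS reachability on it cost $O(|G|\cdot n(e))$; and, the point that makes constraints essentially free, once \textbf{LABEL} has filled the $label$ table, testing the side condition $b\in\[[\psi\]]_G$ on a product edge is $O(1)$. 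Finally, \textbf{LABEL}$(G,R)$ recurses only on the inner expressions of the top-level constraints of $R$, then on theirs, and so on; each term of $R$ (an axis term, or a constraint taken atomically) is therefore processed inside exactly one of the expressions $e$ built over the whole recursion, whence $\sum_e n(e)=O(|R|)$, the $self::p$ suffixes added in step~2 adding only $O(1)$ terms per constraint. Summing $O(|G|\cdot n(e))$ over these $e$, and adding the final reachability test of \textbf{Eval}, gives the announced $O(|G|\times|R|)$.

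\textbf{Expected main obstacle.} The delicate step will be the amortised accounting of the recursive labelling: I must show that building and searching a product automaton for every nested constraint does not introduce a multiplicative blow-up. This rests on two features of cpSPARQL regular expressions (Definition~\ref{def:cpcre}) that I would make explicit: a constraint exposes at most one variable and cannot mention variables bound outside it, so it can be evaluated once and independently of the surrounding traversal (this fails for unrestricted CPSPARQL constraints, and there the polynomial bound would not hold); and the inner expressions of the constraints, together with the outer expression, partition the terms of $R$, so their sizes add up to $O(|R|)$ rather than multiplying. A secondary, mechanical check is that steps~1--2 of \textbf{LABEL}, as modified here, produce for each constraint shape $\psi=?x:\langle ?x, exp_1, p\rangle$ and $\psi=?x:\langle ?x, exp_1, ?y\rangle$ (optionally with a \texttt{FILTER}) exactly the set of nodes $u$ with $\exists z,\ \langle u,z\rangle\in\[[\psi\]]_G$ --- a direct unfolding of Definition~\ref{def:satconstarint} and of the product-automaton construction.
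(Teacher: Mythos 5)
Your proposal is correct and follows essentially the same route as the paper's proof: build the product automaton $G\times\mathcal{A}_R$ as in \cite{perez2010a}, let \textbf{LABEL} handle navigational constraints recursively, treat FILTER side conditions as $O(1)$ checks, and conclude by reachability in time $O(|G|\times|R|)$ — you merely spell out the correctness induction and the amortised size accounting that the paper delegates to the nSPARQL result. The only cosmetic difference is that the paper dispatches the form $axis::]?x:TRUE[$ as a separate $O(|G|)$ case, which in your framework is just the bare-axis term.
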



\subsection{SPARQL queries modulo RDFS with CPSPARQL}\label{sec:cpsparqlrdfs}

Like for nSPARQL, constraints allow for encoding RDF Schemas within queries.

\begin{definition}[RDFS triple pattern expansion \cite{alkhateeb2008b}]\label{def:rdfsexp}
Given an RDF triple $t$, the \emph{RDFS expansion} of $t$, denoted by $\tau(t)$, is defined as:\\
\vspace{-.25cm}
\begin{align*}
\tau(\langle s,\smtt{sc},o\rangle) =& \langle s,\smtt{next::sc}^+,o\rangle\\
\tau(\langle s,\smtt{sp},o\rangle) =& \langle s,\smtt{next::sp}^+,o\rangle\\
\tau(\langle s,\smtt{dom},o\rangle) =& \langle s,\smtt{next::dom},o\rangle\\
\tau(\langle s,\smtt{range},o\rangle) =& \langle s,\smtt{next::range},o\rangle\\
\tau(\langle s,\smtt{type},o\rangle) =& \langle s,\smtt{next::type/next::sc}^* \emph {|} \\
                              & \smtt{edge/}(\smtt{next::sp})^*\smtt{/next::dom/}(\smtt{next::sc})^* \emph{|} \\  
                              &  \smtt{node}^{-1}\emph{/}(\smtt{next::sp})^*\emph{/}\smtt{next::range}\emph{/}(\smtt{next::sc})^*, o\rangle\\
\tau(\langle s,p,o\rangle) =& \langle s,(\smtt{next::}[?x: \{ \langle ?x, (\smtt{next::sp})^*, p \rangle \}]),o\rangle \\ 
         & p\not\in\{\smtt{sp}, \smtt{sc}, \smtt{type}, \smtt{dom}, \smtt{range}\}
\end{align*}
\end{definition}

The RDFS expansion of an RDF triple is a cpSPARQL triple.

The extra variable $?x$ introduced in the last item of the transformation, is only used inside the constraint of the constrained regular expression and so it is not considered to be in $dom(\sigma)$, i.e., only variables occurring as a subject or an object in a CPSPARQL triple pattern are considered in maps (see Definition~\ref{def:triple-evaluation}).
Therefore, the projection operator (SELECT) is not needed to restrict the results of the transformed triple as in the case of PSPARQL \cite{alkhateeb2009a}, as illustrated in the following example.

\begin{example}[SPARQL query transformation]\label{ex:psparql}
Consider the following SPARQL query that searches pairs of nodes connected with a property $p$
\begin{smalltt}
SELECT ?X ?Y
WHERE {?X p ?Y .}
\end{smalltt}
It is possible to answer this query modulo RDFS by transforming this query into the following PSPARQL query: 
\begin{smalltt}
SELECT ?X ?Y
WHERE {?X ?P ?Y . ?P sp* p .}
\end{smalltt}
\end{example}

The evaluation of the above PSPARQL query is the map $\{?X \leftarrow a,$ $?P \leftarrow b,$ $?Y \leftarrow c\}$. So, to actually obtain the desired result, a projection (SELECT) operator must be performed since an extra variable $?P$ is used in the transformation. 
It is argued in \cite{perez2010a} that including the projection (SELECT) operator to the conjunctive fragment of PSPARQL makes the evaluation problem NP-hard. 

On the other hand, the query could be answered by transforming it, with the $\tau$ function of Definition~\ref{def:rdfsexp}, 
 to the following cpSPARQL query (in which there is no need for the projection operator):
\begin{smalltt}
?X next::[?z: { ?z (next::sp)* p }] ?Y
\end{smalltt}

Since the variable $?z$ is used inside the constraint, the answer to this query will be $\{?X \leftarrow a, ?Y \leftarrow b\}$ (see Definition~\ref{def:cresemantics}). 

This has the important consequence that any nSPARQL graph pattern can be translated in a cpSPARQL graph pattern with similar structure and no additional variable. 
Hence, no additional projection operation (SELECT) is required for answering nSPARQL queries in cpSPARQL. 

\begin{theorem}\label{th:taucorrect}
Let $\langle x,p,y \rangle$ be a SPARQL triple pattern with $x, y\in ({\mathcal U} \cup {\mathcal B})$ and $p \in {\mathcal U}$, then $\[[\langle x,p,y \rangle\]]_{G}^{rdfs}$ = $\[[\langle x,\tau(p),y \rangle\]]_{G}$ for any RDF graph $G$.
\end{theorem}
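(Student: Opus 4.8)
The plan is to reduce the statement to the already established completeness of the nSPARQL rewriting $\phi$ (Theorem~\ref{prop:nsparqlcompl}), which under the very same hypotheses on $x,y,p$ gives $\[[\langle x,p,y\rangle\]]_{G}^{rdfs}=\[[\langle x,\phi(p),y\rangle\]]_{G}$. It therefore suffices to show that, for every $p\in\mathcal{U}$, the CPSPARQL triple pattern $\langle x,\tau(p),y\rangle$ and the nSPARQL triple pattern $\langle x,\phi(p),y\rangle$ have the same evaluation over any RDF graph $G$. Since no constraint produced by $\tau$ in Definition~\ref{def:rdfsexp} uses open square brackets, $\mathcal{B}(\tau(p))=\emptyset$ and $\sigma(\tau(p))=\tau(p)$ for every map $\sigma$; hence, by Definitions~\ref{def:triple-evaluation} and~\ref{def:nsparql-triple-evaluation}, both $\[[\langle x,\tau(p),y\rangle\]]_{G}$ and $\[[\langle x,\phi(p),y\rangle\]]_{G}$ are the set of maps with domain $\{x,y\}\cap\mathcal{B}$ that pick out the pairs of the respective binary relations. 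So the whole task boils down to proving $\[[\tau(p)\]]_{G}=\[[\phi(p)\]]_{G}$ as binary relations.

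First I would dispose of the five cases $p\in\{\smtt{sc},\smtt{sp},\smtt{dom},\smtt{range},\smtt{type}\}$. Comparing Definition~\ref{def:rdfsexp} with the definition of $\phi$ in Section~\ref{sec:rdfs-nsparql}, the regular expression that $\tau$ attaches to the triple is in each of these cases syntactically the same (unconstrained, hence nested) regular expression that $\phi$ produces — e.g.\ $\smtt{next::sc}^{+}$ versus $(\smtt{next::sc})\smtt{+}$, and the identical three-branch union in the case of $\smtt{type}$. Being unconstrained, these expressions are interpreted identically by Definition~\ref{def:cresemantics} (which defers to Definition~\ref{def:nresemantics} on unconstrained expressions), so the equality is immediate.

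The remaining case, $p\notin\{\smtt{sc},\smtt{sp},\smtt{type},\smtt{dom},\smtt{range}\}$, is the core of the argument: here $\tau(p)=\smtt{next::}[?x:\{\langle ?x,(\smtt{next::sp})^{*},p\rangle\}]$ while $\phi(p)=\smtt{next}[(\smtt{next::sp})^{*}/\smtt{self::}p]$. I would compute both binary relations explicitly and check that they coincide. On the CPSPARQL side, unfolding Definition~\ref{def:satconstarint}(1) for the constraint triple $\langle ?x,(\smtt{next::sp})^{*},p\rangle$: its head variable $?x$ is substituted by the predicate under test while the object $p$ is a constant, and since $(\smtt{next::sp})^{*}$ is unconstrained its interpretation is the nSPARQL one; this makes a pair $\langle c,w\rangle$ satisfy the constraint exactly when $\langle c,p\rangle\in\[[(\smtt{next::sp})^{*}\]]_{G}$, irrespective of $w$. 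Feeding this into the clause for $\smtt{next::}[\cdot]$ of Definition~\ref{def:cresemantics} yields $\langle a,b\rangle\in\[[\tau(p)\]]_{G}$ iff there is $c$ with $\langle a,c,b\rangle\in G$ and $\langle c,p\rangle\in\[[(\smtt{next::sp})^{*}\]]_{G}$. On the nSPARQL side, a step-by-step application of Definition~\ref{def:nresemantics} — computing $\[[\smtt{self::}p\]]_{G}=\{\langle p,p,p\rangle\}$, then $\[[(\smtt{next::sp})^{*}/\smtt{self::}p\]]_{G}$, then the outer $\smtt{next}[\cdot]$ — gives $\langle a,b\rangle\in\[[\phi(p)\]]_{G}$ iff there is $c$ with $\langle a,c,b\rangle\in G$ and $\langle c,p\rangle\in\[[(\smtt{next::sp})^{*}\]]_{G}$. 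The two descriptions agree, completing the case and the proof.

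I expect the only delicate point to be the bookkeeping in the last paragraph: tracking the third (``constraint'') coordinate of the nested-path interpretation and verifying that the substitution conventions of Definition~\ref{def:satconstarint} indeed collapse the constraint to the single condition $\langle c,p\rangle\in\[[(\smtt{next::sp})^{*}\]]_{G}$ — in particular that $p$ in object position is handled as a constant, so the second coordinate of the tested pair remains free, and that passing from the constrained framework to the nested one for the subexpression $(\smtt{next::sp})^{*}$ introduces no mismatch. Everything else is a routine unfolding of the definitions.
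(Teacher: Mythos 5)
Your proof is correct, and it handles the crucial generic case $p\notin\{\smtt{sc},\smtt{sp},\smtt{dom},\smtt{range},\smtt{type}\}$ by a different decomposition than the paper. Like you, the paper dismisses the five RDFS-vocabulary cases by noting that there $\tau$ coincides with the nSPARQL rewriting and deferring to \cite{perez2010a}; but for the generic case it argues directly against the RDFS semantics: it unfolds $\[[\langle x,p,y\rangle\]]_{G}^{rdfs}$ into the existence of a predicate $p_1$ with a chain $p_1\,\smtt{sp}\,p_2\,\smtt{sp}\,\dots\,\smtt{sp}\,p_n=p$ together with a triple $\langle\sigma(x),p_1,\sigma(y)\rangle\in G$, and then checks both directions that $\smtt{next::}[?x:\{\langle ?x,(\smtt{next::sp})^*,p\rangle\}]$ selects exactly those pairs. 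You never touch the closure: you prove the purely semantic identity $\[[\tau(p)\]]_{G}=\[[\phi(p)\]]_{G}$ by unfolding Definitions~\ref{def:satconstarint}, \ref{def:cresemantics} and \ref{def:nresemantics}, and then invoke Theorem~\ref{prop:nsparqlcompl}, which already covers arbitrary $p\in\mathcal{U}$. Your remark that $\mathcal{B}(\tau(p))=\emptyset$ (closed brackets only), so Definitions~\ref{def:triple-evaluation} and~\ref{def:nsparql-triple-evaluation} produce maps over the same domain $\{x,y\}\cap\mathcal{B}$, is exactly what is needed to pass from the binary-relation equality to the equality of answer sets, and your collapse of the constraint to the single condition $\langle c,p\rangle\in\[[(\smtt{next::sp})^*\]]_{G}$ matches the computation implicit in the paper. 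What your route buys is modularity: all closure reasoning stays inside the cited theorem and the only new obligation is an expression-level equivalence between $\tau(p)$ and $\phi(p)$; what the paper's route buys is a self-contained argument for the new case that exhibits explicitly which subproperty chains in $G$ witness the match, without re-using $\phi$'s completeness for generic predicates. Both are sound.
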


\section{On the respective expressiveness of cpSPARQL and nSPARQL}\label{sec:comparison}\label{sec:compare}

In this section, we compare the expressiveness of cpSPARQL with that of nSPARQL.
We identify several assertions which together show that cpSPARQL is strictly more expressive than nSPARQL and that even if nSPARQL were added projection, it would remain strictly less expressive than CPSPARQL. 
These are the core results of \cite{alkhateeb2014a}.



\subsection{Nested regular expressions (nSPARQL) cannot express all (SPARQL) triple patterns}

Although it is explained in \cite{perez2010a} that SPARQL triple patterns can be encoded by nested regular expressions, triple patterns with three variables (subject, predicate, object) could not be expressed by nested regular expressions since variables are not allowed in nested regular expressions. The reader may wonder whether this is useful or not. The following query is a useful example:

\begin{smalltt}
SELECT * 
WHERE {?s foaf:name "Faisal". ?s ?p ?o .}
\end{smalltt}

That could be used to retrieve all RDF data about a person named "Faisal". However, cpSPARQL triple patterns are proper extension of SPARQL triple patterns and thus the above query could be expressed by the following query: 

\begin{smalltt}
SELECT * 
WHERE {?s next::foaf:name "Faisal". 
       ?s next::]?p:TRUE[ ?o .}
\end{smalltt}

\subsection{nSPARQL without SELECT cannot express all CPSPARQL}

We show in the following that some queries, which can be expressed by CPSPARQL, can only be expressed in nSPARQL with projection (SELECT):

Assume that one wants to retrieve pairs of distinct nodes having a common ancestor. Then the following nSPARQL pattern can express this query:
\vspace{-.25cm}
\begin{align*}
\{ \langle ?person1, &(\smtt{next::ascendant})^+\smtt{/} (\smtt{next}^{-1}\smtt{::ascendant})^+, ?person2 \rangle,\\
& FILTER (!(?person1 = ?person2))\}
\end{align*}
The same query with the restriction that the name of the common ancestor should contain a given family name, for instance "alkhateeb", requires the use of an extra variable to pose the constraint:
\vspace{-.25cm}
\begin{align*}
\{ &\langle ?person1, (\smtt{next::ascendant})^+, ?ancestor  \rangle,\\
   &\langle ?person2, (\smtt{next::ascendant})^+, ?ancestor  \rangle,\\
   & \quad FILTER (!(?person1 = ?person2) \&\& (regex(?ancestor, "\hat{} alkhateeb"))\}
\end{align*}
The evaluation of this graph pattern is the map $\{?person1 \leftarrow p1,$ $?ancestor \leftarrow p3,$ $?person2 \leftarrow p2\}$. Therefore, to obtain the desired result, projection must be performed:
\vspace{-.25cm}
\begin{align*}
\sigma&_{?person1, ?person2}(\\
\{ &\langle ?person1, (\smtt{next::ascendant})^+, ?ancestor  \rangle,\\
   &\langle ?person2, (\smtt{next::ascendant})^+, ?ancestor  \rangle,\\
   & \quad FILTER (!(?person1 = ?person2) \&\& (regex(?ancestor, "\hat{} alkhateeb"))\})
\end{align*}
%

So, the above query cannot be expressed in nSPARQL without the use of SELECT, which is not allowed in nSPARQL \cite{perez2010a}. 
Besides, any SPARQL query that uses SELECT over a set of variables such that there exists at least one existential variable, i.e., a variable not in the SELECT clause, used in a FILTER constraint cannot be expressed by nSPARQL graph patterns.

However, the following CPSPARQL graph pattern could be used to express the above query:
\vspace{-.25cm}
\begin{align*}
\{ &\langle ?person1, (\smtt{next::ascendant})^+\\
& \qquad\smtt{/self::}[?ancestor:  FILTER (regex(?ancestor, "\hat{}alkhateeb" ))]\\
 & \qquad\smtt{/}(\smtt{next}^{-1}\smtt{::ascendant})^+, ?person2  \rangle,  FILTER (!(?person1 = ?person2))\}
 \end{align*}


\subsection{nSPARQL cannot express all cpSPARQL, even with SELECT}

In the following discussion, we show that there exists a cpSPARQL regular expression that cannot be expressed in a nested regular expression as well as some natural and useful queries that can be expressed in CPSPARQL patterns cannot be expressed in nSPARQL patterns even with the SELECT operator.

If one wants to restrict the query of Example~\ref{ex:nsparqlpattern} such that every stop is a city in the same country (for example, France), then the following nested regular expression expresses this query:

$$\langle ?city_1, (next::[(next::sp)^*/self::transport]/self::[next::cityIn/self::France])^+, ?city_2 \rangle$$

This query could also be expressed in the following constrained regular expressions:

$$\langle ?city_1, (next::[\psi_1]/self::[\psi_2])^+, ?city_2 \rangle$$
where $\psi_1= ?x:$ $\{\langle ?x, (next::sp)^*, transport\rangle\}$\\
and $\psi_2= ?x:$ $\{\langle ?x, next::cityIn, France\rangle\}$

If one wants that each stop satisfies a specific constraint, e.g., cities with a population size larger than $20,000$ inhabitants, and each transportation mean belongs to Air France, i.e., its URI is in the airfrance domain name. Then this query is expressed by the following constrained regular expression:

$$P= \langle ?city_1, (next::[\psi_1]/self::[\psi_2])^+, ?city_2 \rangle$$
where $\psi_1= ?x:$ $\{\langle ?x, (next::sp)^*, transport\rangle .$ $FILTER$ $(regex(?x,"www.AirFrance.fr/"))\}$\\
and $\psi_2= ?x:$ $\{\langle ?x, next::size, ?size\rangle . $ $FILTER$ $(?size >20,000)\}$

However, this query cannot be expressed by a nested regular expression,
since it is not possible to apply constraints, such as SPARQL constraints, in the traversed nodes. 
Only navigational constraints can be expressed.


In this case, the variables $?x$ and $?size$  are not exported. 
Hence, the above query can be expressed by a cpSPARQL regular expression without requiring the SELECT operation. 
This cannot be expressed by a nested regular expression.

\begin{theorem}\label{theorem:counterexample}
Not all constrained regular expression $R$ can be expressed as a nested regular expression $R'$ such that 
$\[[R\]]_{G}=\[[R'\]]_{G}$, for every RDF graph $G$.
\end{theorem}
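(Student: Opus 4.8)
The plan is to exhibit an explicit constrained regular expression $R$ and then prove, by a semantic argument, that no nested regular expression $R'$ can satisfy $\[[R\]]_{G}=\[[R'\]]_{G}$ for every RDF graph $G$. A natural candidate is the one already built in the discussion preceding the statement, namely $R = (next::[\psi_1]/self::[\psi_2])^+$ where $\psi_2 = ?x: \{\langle ?x, next::size, ?size\rangle~FILTER~(?size > 20{,}000)\}$ (the key feature of $\psi_2$ is that it applies a genuine \texttt{FILTER} test — here a numeric comparison — to a node reached along a secondary path, which nested regular expressions provably cannot do). To keep the argument clean I would actually pick the simplest such $R$, e.g. $R_0 = self::[?x: \{\langle ?x, a, ?y\rangle~FILTER~(?y > 0)\}]$, so that $\langle u,u\rangle \in \[[R_0\]]_G$ iff there is a triple $\langle u, a, v\rangle \in G$ with $v$ a literal whose value is a positive number.

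The core of the proof is a distinguishing-graphs argument. First I would recall, from Definition~\ref{def:nresemantics}, that the interpretation $\[[R'\]]_G$ of any nested regular expression is obtained purely from the adjacency structure of $G$: every clause ($\smtt{self}$, $\smtt{next}$, $\smtt{edge}$, $\smtt{node}$, composition, union, star, the $[\cdot]$ nesting, and the $\textup{\smtt{::}}a$ filter which only tests equality with a \emph{fixed} uriref $a$) is invariant under any bijective renaming of the terms $\mathcal{T}$ that fixes the finitely many urirefs literally occurring in $R'$. Literal \emph{values} — e.g. the fact that "$50$" denotes a number bigger than $20{,}000$ while "$1$" does not — are never inspected by a nested regular expression beyond syntactic identity. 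Then I would take two RDF graphs $G_1 = \{\langle u, a, \ell_1\rangle\}$ and $G_2 = \{\langle u, a, \ell_2\rangle\}$ where $\ell_1, \ell_2$ are two literals, neither of them appearing in $R'$, with value of $\ell_1$ equal to $50$ and value of $\ell_2$ equal to $1$. By the renaming invariance, any nested regular expression $R'$ whose urirefs are among those of $G_1, G_2$ (we may assume this, since otherwise $R'$ cannot match anything in these graphs anyway) satisfies $\langle u,u\rangle \in \[[R'\]]_{G_1} \iff \langle u,u\rangle \in \[[R'\]]_{G_2}$, because the map swapping $\ell_1$ and $\ell_2$ and fixing everything else is an isomorphism $G_1 \to G_2$ that fixes $R'$. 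But $\langle u,u\rangle \in \[[R_0\]]_{G_1}$ (value $50 > 0$) while $\langle u,u\rangle \notin \[[R_0\]]_{G_2}$ (value $1$... — here I should instead use the threshold $>20{,}000$, taking value of $\ell_1$ to be $30{,}000$ and of $\ell_2$ to be $10{,}000$, so the asymmetry is genuine). Hence no such $R'$ exists, proving the theorem.

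A cleaner variant, avoiding even the mention of literal values, would use a FILTER of the form $?x \neq ?y$ inside the constraint (an inequality between two \emph{path-reached} nodes), and distinguish a graph where the two reached nodes coincide from one where they differ; the same renaming-invariance lemma applies. I would state the invariance lemma explicitly as the single technical ingredient: \emph{for every nested regular expression $R'$ and every bijection $\pi$ of $\mathcal{T}$ fixing each uriref occurring in $R'$, $\langle \pi(x),\pi(y)\rangle \in \[[R'\]]_{\pi(G)}$ iff $\langle x,y\rangle \in \[[R'\]]_{G}$}, proved by a routine structural induction on $R'$ using Definition~\ref{def:nresemantics}.

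The main obstacle is making the invariance lemma airtight: one must check every production of the nested-regular-expression grammar, and in particular be careful that the only place urirefs are tested is the $nre\textup{\smtt{::}}\smtt{a}$ clause (which is why $\pi$ must fix exactly the urirefs syntactically present in $R'$) and that the nesting construct $[nre_2]$ only asks for \emph{existence} of a continuation, so it too is renaming-stable. Everything else — choosing the two graphs, observing they are isomorphic via the transposition of the two distinguished terms, and checking that $R$ separates them — is short.
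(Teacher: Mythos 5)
Your proof is correct and rests on the same core idea as the paper's: exhibit a constrained regular expression whose FILTER tests the \emph{value} of a reached literal, and argue that nested regular expressions, whose semantics (Definition~\ref{def:nresemantics}) only sees graph structure and syntactic identity with the finitely many urirefs written in the expression, cannot separate two graphs that differ only in which literal sits at the end of the edge. Where you differ is in how this blindness is established: the paper works with one graph containing two structurally isomorphic triples $\langle u,s,2\rangle$ and $\langle v,s,4\rangle$, asserts informally that any nested expression retrieving $v$ also retrieves $u$, and then disposes of the objection that $R'$ could hard-code values by noting that infinitely many values above the threshold would have to be enumerated; you instead isolate an explicit renaming-invariance lemma (proved by structural induction on the grammar) and choose the two literals fresh, i.e.\ not occurring in $R'$, which handles the hard-coding objection uniformly and makes the argument airtight even under the liberal reading in which literals may appear in nested expressions --- under the strict definition ($a\in\mathcal{U}$ only) your freshness condition is automatic. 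This is a genuinely cleaner packaging of the same counterexample, and it buys a reusable lemma. One caveat: your closing aside that the variant with a FILTER $?x\neq ?y$ between two path-reached nodes "follows by the same renaming-invariance lemma" does not go through as stated, since the two graphs you would need to separate (one with a self-loop, one without) are not related by any bijective renaming, so that variant would require a different indistinguishability argument; as it is only an optional alternative, it does not affect the validity of your main proof.
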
 

The type of counter-examples exhibited by the proof of Theorem~\ref{theorem:counterexample} may seem caricatural.
However, it illustrates the capability to apply (non navigational) constraints to values which nSPARQL lacks.
Beside such a minimal example set forth for proving the theorem the same capability is used in more elaborate path queries seen in examples of previous sections (selecting path with intermediate nodes or intermediate predicates satisfying some constraints).

This capability to express constraints on values in path expressions, available in XPath as well, is invaluable for selecting exactly those paths that are useful instead of being constrained to resort to a posteriori selection.
This provides interesting computational properties discussed in Section~\ref{sec:impl}.

The following is another counter-example that could not be expressed as a nested regular expression.

\begin{example}

Consider the following RDF graph representing flights belonging to different airline companies and other transportation means between cities:

\begin{align*}
\{\langle city_1, airfrance:flight_1, city_2 \rangle \\
\{\langle city_2, airfrance:flight_2, city_3 \rangle \\
...\\
\{\langle city_i, anothercomapny:flight_1, city_j \rangle \\
\end{align*}

Assume that one wants to search pairs of cities connected by a sequence of flights belonging to the airfrance company. 
Since there is no way to select (constrain) the transportation means in nested regular expressions, the only way the user can express such a query is to list all flights belonging to airfrance as follows:

\begin{align*}
(airfrance:flight_1|...|airfrance:flight_n)^+\\
\end{align*}

However, this requires the user to know in advance these flights. 
Hence, independent of the RDF graph, the exact meaning of the above query cannot be expressed by nested regular expressions.

\end{example}

\subsection{cpSPARQL can express all nSPARQL}\label{sec:cpSPARQLexpr}

On the other hand, any nested regular expression $R$ could be translated to a constrained regular expression $R_1= trans(R)$ as follows:
\begin{enumerate}
	\item if $R$ is either \smtt{axis} or \smtt{axis}::a, then $trans(R)=R$;
	\item if $R = R_1/R_2$, then $trans(R) = trans(R_1)/trans(R_2)$;
	\item if $R = R_1|R_2$, then $trans(R) = trans(R_1)|trans(R_2)$;
	\item if $R = (R_1)^*$, then $trans(R) = (trans(R_1))^*$;
	\item if $R=exp_1::[exp_2]$, then $trans(R)=exp_1::[\psi]$, such that:	
		\begin{itemize*}
			\item $\psi = ?x:\{\langle ?x, trans(exp_3), p \rangle\}$, if $exp_2=exp_3/self::p$
			\item $\psi = ?x:\{\langle ?x, trans(exp_2), ?y \rangle\}$, otherwise.
		\end{itemize*}
\end{enumerate}


In the last clause of this transformation, when the nested regular expression $R = exp_1::[exp_2]$, it is required to check the existence of two pairs of nodes that satisfies the sub-expression $exp_2$ (see Definition~\ref{def:nresemantics}). Similarly, in cpSPARQL it is necessary to express this nested regular expression as a triple in which the constraint is satisfied by the existence of two pairs of nodes that replace the variables $?x$ and $?y$. 

This transformation process is illustrated by the following example.

\begin{example}[From nSPARQL to cpSPARQL]\label{ex:npsparql}
Consider the following nested regular expression:
$$R_1=(next::[(next::sp)^*/self::transport])^+$$
according to the transformation rules above, the constrained regular expression equivalent to this expression $R_2$
\vspace{-.25cm}
\begin{align*}
=~&trans(R_1)  &\\
=~&trans((next\! :: \![(next\! :: \!sp)^*/self\! :: \!transport])^+)\\
=~&(trans(next\! :: \![(next\! :: \!sp)^*/self\! :: \!transport]))^+\\
=~&next\! :: \![?x\!:\!\{\langle ?x, trans((next\! :: \!sp)^*), transport\rangle\}]\\
=~&next\! :: \![?x\!:\!\{\langle ?x, (trans(next\! :: \!sp))^*, transport\rangle\}]\\
=~&next\! :: \![?x\!:\!\{\langle ?x, (next\! :: \!sp)^*, transport\rangle\}]
\end{align*}
by successively using rules 4, 5, 4, 1, and 5.
\end{example}

\begin{theorem}\label{theorem:cpSPARQLcompletness}
 Any nested regular expression $R$ can be transformed into a constrained regular expression $trans(R)$ such that
 $\[[R\]]_{G}=\[[trans(R)\]]_{G}$, for every RDF graph $G$.
 \end{theorem}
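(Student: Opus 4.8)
The statement asserts that the transformation $trans$ defined on nested regular expressions preserves semantics, i.e., $\[[R\]]_G = \[[trans(R)\]]_G$ for every RDF graph $G$. The plan is to proceed by structural induction on the nested regular expression $R$, following exactly the five cases in the definition of $trans$. Since $\[[\cdot\]]_G$ for nested regular expressions (Definition~\ref{def:nresemantics}) uses the three-component triples $\langle x,y,z\rangle$ while $\[[\cdot\]]_G$ for constrained regular expressions (Definition~\ref{def:cresemantics}) is stated with pairs $\langle x,y\rangle$ through the ``shortcut'' convention $\langle x,y\rangle\in\[[R\]]_G$ meaning $\exists z;\langle x,y,z\rangle\in\[[R\]]_G$, I would first fix the precise reading of the equality: it is an equality of binary relations on $voc(G)$, obtained by projecting away the third component. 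This bookkeeping point needs to be made cleanly at the outset so the induction cases line up.

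\textbf{Key steps.} First, the base cases: if $R$ is $\mathtt{axis}$ or $\mathtt{axis}\mathtt{::}a$ then $trans(R)=R$ and the two interpretations coincide by definition (the clauses for $\mathtt{self}$, $\mathtt{next}$, $\mathtt{edge}$, $\mathtt{node}$, inverses, and $nre\mathtt{::}a$ are literally the same in both semantic definitions). Second, the compositional cases $R=R_1/R_2$, $R=R_1|R_2$, $R=(R_1)^*$: here $trans$ is a homomorphism, and since the semantic clauses for $/$, $|$, and ${}^*$ are identical in Definitions~\ref{def:nresemantics} and~\ref{def:cresemantics}, the result follows immediately from the induction hypotheses applied to $R_1$ and $R_2$ (for ${}^*$ one unfolds the countable union $\[[\mathtt{self}\]]_G\cup\[[R_1\]]_G\cup\[[R_1/R_1\]]_G\cup\cdots$ and uses the hypothesis termwise). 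Third, the only substantive case, $R=exp_1\mathtt{::}[exp_2]$: here one must show that $\[[exp_1\mathtt{::}[exp_2]\]]_G$ (via the $nre_1[nre_2]$ clause of Definition~\ref{def:nresemantics}, which says $\langle x,y,z\rangle\in\[[exp_1\]]_G$ and $\exists w,k;\langle z,w,k\rangle\in\[[exp_2\]]_G$) equals $\[[exp_1\mathtt{::}[\psi]\]]_G$ (via Definition~\ref{def:cresemantics}, which for each axis requires the matched triple plus $\langle z,w\rangle\in\[[\psi\]]_G$ for the intermediate element $z$, where $\[[\psi\]]_G$ is governed by Definition~\ref{def:satconstarint}). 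Unwinding the constraint $\psi=?x\!:\!\{\langle ?x, trans(exp_2), ?y\rangle\}$ through Definition~\ref{def:satconstarint} case~1, the condition $\langle z,w\rangle\in\[[\psi\]]_G$ becomes: there is some element reachable from $z$ via $trans(exp_2)$, i.e., $\exists w;\langle z,w\rangle\in\[[trans(exp_2)\]]_G$. By the induction hypothesis $\[[trans(exp_2)\]]_G=\[[exp_2\]]_G$ (as binary relations), and $\langle z,w\rangle\in\[[exp_2\]]_G$ abbreviates $\exists k;\langle z,w,k\rangle\in\[[exp_2\]]_G$ — which is exactly the existential witness condition in the nested semantics. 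The sub-case $\psi=?x\!:\!\{\langle ?x, trans(exp_3), p\rangle\}$ when $exp_2=exp_3/self::p$ needs the analogous unwinding, noting that $\[[exp_3/self::p\]]_G$ forces the endpoint to be $p$, matching the constant $p$ appearing as object of the constraint triple; here one also checks the variable $?x$ is a closed-bracket (non-exported) variable so $\mathcal B(R)$ stays empty and the evaluation of the triple pattern (Definition~\ref{def:triple-evaluation}) introduces no spurious bindings.

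\textbf{Main obstacle.} I expect the delicate point to be precisely the third-component bookkeeping in the nested case: the nested semantics remembers a third coordinate, and one must verify that collapsing to the ``$\langle x,y\rangle\in\[[R\]]_G$ iff $\exists z\ldots$'' reading is done consistently on both sides — in particular that the existential $\exists w,k$ in the $nre_1[nre_2]$ clause corresponds exactly to the existential hidden in ``$\langle z,w\rangle\in\[[trans(exp_2)\]]_G$'' after applying Definition~\ref{def:satconstarint}. The $exp_3/self::p$ sub-case is where a careless treatment could go wrong, because the constrained side pins the constraint's object to the constant $p$ whereas the nested side buries that information in the third coordinate of $\[[exp_2\]]_G$; showing these agree is the crux. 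Everything else is routine propagation of the induction hypothesis through identical semantic clauses.
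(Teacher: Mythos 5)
Your proposal is correct and follows essentially the same route as the paper's proof: structural induction on the nested regular expression, with trivial base cases, identical semantic clauses handling $/$, $|$ and ${}^*$, and an unwinding of the constraint semantics (Definitions~\ref{def:satconstarint} and~\ref{def:cresemantics}) in the $exp_1\texttt{::}[exp_2]$ case using the induction hypothesis on $trans(exp_2)$. If anything, you are more thorough than the paper, which only treats the $?y$-variable sub-case of the nested constraint and leaves the $exp_3/self\texttt{::}p$ sub-case and the third-component bookkeeping implicit, both of which you handle explicitly.
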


\section{Implementation}\label{sec:impl}

CPSPARQL has been implemented in order to evaluate its feasibility\footnote{The prototype is available at \url{http://exmo.inria.fr/software/psparql}.}. cpSPARQL does not exist as an independent language but is covered by CPSPARQL. This implementation has not been particularly optimised. It passes the W3C compliance tests for SPARQL 1.0 (but 5 tests involving the non implemented \smtt{DESCRIBE} clause).

Experiments have been carried out for evaluating the behaviour of the system and test its ability to correctly answer SPARQL, PSPARQL, and CPSPARQL queries in reasonable time  (against different RDF graph sizes from 5, 10, . . . , up to 100,000 triples in memory graphs). 
In particular, it showed the capability at stake here: answering SPARQL queries with the RDFS semantics.

The implementation has been also tested thoroughly in \cite{Arenas2012a} and the results show that PSPARQL had better performances than other implementations of SPARQL with paths\footnote{The queries and the RDF data that are used for the experimental results can be found in \\ \url{http://www.dcc.uchile.cl/~jperez/papers/www2012/}}. 

It has not been possible to us to compare the performance of our CPSPARQL implementation with other proposals.
Indeed, contrary to CPSPARQL, nSPARQL is not implemented at the moment, so we must leave the experimental comparison for future work. 

However, the experimentation has allowed to make interesting observations.
In particular, the CPSPARQL prototype shows that queries with constraints are answered faster than the same queries without constraints. 
Indeed, CPRDF constraints allow for selecting path expressions with nodes satisfying constraints while matching (on the fly instead of filtering them a posteriori). 
The implemented prototype follows this natural strategy, thus reducing the search space. 
This strategy promises to be always more efficient than a strategy which applies constraints a posteriori.
More details are available in \cite{alkhateeb2008b}.


\section{Related work}\label{sec:relwork}

The closest work to ours, nSPARQL, has been presented and compared in detail in Section~\ref{sec:nsparql} \cite{perez2010a}. However, there are other work which may be considered relevant.

RQL \cite{karvounarakis2002} attempts to combine the relational algebra with some special class hierarchies. It supports a form of transitive expressions over RDFS transitive properties, i.e., \smtt{rdfs:subPropertyOf} and \smtt{rdfs:subClassOf}, for navigating through class and property hierarchies. Versa \cite{versa}, RxPath \cite{rxpath} are all path-based query languages for RDF that are well suited for graph traversal. SPARQLeR \cite{kochut2007a} extends SPARQL by allowing query graph patterns involving path variables. Each path variable is used to capture simple, i.e., acyclic, paths in RDF graphs, and is matched against any arbitrary composition of RDF triples between two given nodes. This extension offers functionalities like testing the length of paths and testing if a given node is in the found paths. SPARQ2L \cite{sparq2L} also allows using path variables in graph patterns. However, these languages 
have not been shown to evaluate queries with respect to RDF Schema and their evaluation procedure has not been proved complete to our knowledge.
Moreover, answering path queries to capture acyclic (simple) paths is NP-complete \cite{mendelzon1995a} (see also \cite{Arenas2012a}).

Path queries (queries with regular expressions) can be translated into recursive Datalog programs over a ternary relation triple $\langle$node, predicate, node$\rangle$, which encodes the graph \cite{abiteboul1997a}. 
This could provide a way to evaluate path queries with Datalog. However, such translations may yield to a Datalog program whose evaluation does not terminate. On the other hand, several techniques can be used to optimize path queries and provide good results in comparison with optimized Datalog programs as shown in \cite{fernandez1998a}.
Recently \cite{cali2009a} extended Datalog in order to cope with querying modulo ontologies. Ontologies are in DL-Lite and, in particular DL-Lite$_{\mathcal{R}}$ which contains the fragment of RDFS considered here. However, this work only considers conjunctive queries which is not sufficient for evaluating SPARQL queries which contains constructs such as UNION, OPT and constraints (FILTER) which are not found in Datalog. \cite{artale2009a} studied from a computational complexity the same fragments with queries containing UNION in addition.
However, given that this fragment is larger than the simple path queries considered in nSPARQL and cpSPARQL,
the complexity is far higher (coNP).

Standardization efforts by the W3C SPARQL working group have defined the notion of inference regime for SPARQL \cite{glimm2010a,glimm2010b}. This notion is relevant to query evaluation modulo RDFS that is exhibited by CPSPARQL and is obviously less relevant to cpSPARQL and nSPARQL. One main difference is that we have departed from the strict definition of ``matching graph patterns'' with the use of path for exploring the graph, and specifically the graph entailed by RDFS. This avoids the use of RDF graph closure on which strict matching is applied. CPSPARQL and nSPARQL use query rewriting for answering queries modulo RDFS, but, unlike DL-Lite rewriting strategies, queries are rewritten by preserving their structure instead of producing unions of conjunctive queries. 

\cite{wudagechekol2011b} studied the static analysis of PSPARQL query containment: determining whether, for any graph, the answers to a query are contained in those of another query. This is achieved by encoding RDF graphs as transition systems and PSPARQL queries as $\mu$-calculus formulas and then reducing the containment problem to testing satisfiability in the logic. 

The language RPL extends nested regular expressions \cite{zauner2010a} to allow boolean node tests. However, using variables in nested regular expressions of nSPARQL requires extending its syntax and semantics. Hence, comparison between variables and values as well as triple patterns with variables in subject, predicate and object are not allowed (see examples in Section~\ref{sec:compare}).


\section{Conclusion}\label{sec:conclusion}

The SPARQL query language has proved to be very successful in offering access to triple stores over SPARQL endpoints all over the web.
It is a critical element of the semantic web infrastructure. However, by limiting it to querying RDF graphs, little consideration has been made of the semantic aspect of RDF. 
In particular, querying RDF graphs modulo RDF Schemas or OWL ontologies is a most needed feature \cite{glimm2010a}.

One possible approach for querying an RDFS graph $G$ in a sound and complete way is by computing the closure graph of $G$, i.e., the graph obtained by saturating $G$ with all informations that can be deduced using a set of predefined rules called RDFS rules, then evaluating the query $Q$ over the closure graph. However, this approach takes time proportional to $|Q| \times |G|^2$ in the worst case \cite{munoz2009a}.

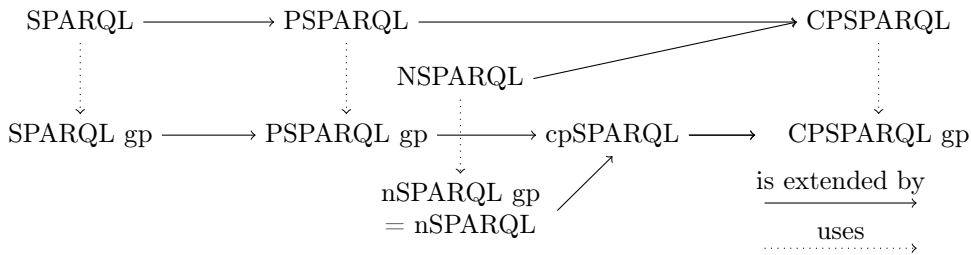
\begin{figure*}[!htp]
 \begin{center}
\begin{tikzpicture}

\draw (0,1.5) node (sparql) {SPARQL};
\draw (3.5,1.5) node (psparql) {PSPARQL};
\draw (10.5,1.5) node (cpsparql) {CPSPARQL};
\draw (5,.75) node (nsparql) {NSPARQL};

\draw[->] (sparql.east) -- (psparql.west);
\draw[->] (psparql.east) -- (cpsparql.west);
\draw[->] (nsparql.east) -- (cpsparql.west);

\draw (0,0) node (sgp) {SPARQL gp};
\draw (3.5,0) node (pgp) {PSPARQL gp};
\draw (10.5,0) node[text width=3cm,text badly centered] (cpgp) {CPSPARQL gp};
\draw (7,0) node (cpsparqlgp) {cpSPARQL};
\draw (5,-1) node[text width=2.3cm,text badly centered] (ngp) {nSPARQL gp = nSPARQL};

\draw[->] (sgp.east) -- (pgp.west);
\draw[->] (pgp.east) -- (cpsparqlgp.west);
\draw[->] (cpsparqlgp.east) -- (cpgp.west);
\draw[->] (ngp.east) -- (cpsparqlgp.south);
\draw[->] (cpsparqlgp.east) -- (cpgp.west);

\draw[dotted,->] (sparql.south) -- (sgp.north);
\draw[dotted,->] (psparql.south) -- (pgp.north);
\draw[dotted,->] (cpsparql.south) -- (cpgp.north);
\draw[dotted,->] (nsparql.south) -- (ngp.north);

\draw[->] (9,-.9) -- node[above] {is extended by} (11,-.9);
\draw[dotted,->] (9,-1.5) -- node[above] {uses} (11,-1.5);

\end{tikzpicture}						
\end{center}
\caption{Query languages and their graph patterns.}\label{fig:qlgp}
\end{figure*}

Over the years, several other languages, i.e., PSPARQL \cite{alkhateeb2009a}, nSPARQL \cite{perez2010a} and CPSPARQL \cite{alkhateeb2008b}, have been shown able to deal with RDFS graphs without computing the closure.
They all use different variations of path regular expressions as triple predicates and adopt a semantics based on checking the existence of paths (without counting them) in the RDF graph. 
In order to ease their comparison, we defined cpSPARQL as very close to nSPARQL and NSPARQL as very close to CPSPARQL.

Figure~\ref{fig:qlgp} shows the position of the various languages. 
nSPARQL and cpSPARQL are good navigational languages for RDF(S). 
However, cpSPARQL is an extension of SPARQL graph patterns, while nSPARQL does not contain all SPARQL graph patterns. Moreover, using such a path language within the SPARQL structure allows for properly extending SPARQL. 
Some features (such as filtering nodes inside expressions) are very simple to add to the syntax and semantics of nested regular expressions. 

More precisely, we showed that cpSPARQL, the fragment of CPSPARQL which is sufficient for capturing RDFS semantics, admits an efficient evaluation algorithm while the whole CPSPARQL language is in theory as efficient as SPARQL is. 
Moreover, we compared cpSPARQL with nSPARQL and showed that cpSPARQL is strictly more expressive than nSPARQL. 

It is likely that more expressive fragments of CPSPARQL graph patterns keeping the same complexity may be found.
In particular, we did not kept the capability to express the constraints existentially or universally.
This may be useful, for instance, to filter families all children of which are over 18 or families one children of which is over 18.

This work can also be extended in the schema or ontology language on which it is applied.
OWL 2 opens the door to many OWL fragments for which it should be possible to design query evaluation procedures. 

\clearpage

\appendix\label{sec:appendix}

\section{Proofs}\label{app:proofs}





%

\subsection{Induction lemma}

All completeness proofs below follow the same style: because most of the results that we use are based on entailment of graphs, i.e., basic graph patterns, we need to promote these results to all graph patterns. This is done easily by defining query-structure preserving transformations and using Lemma~\ref{lem:induction}.

\begin{definition}[Query-structure preserving transformation]
A transformation $\psi$ on RDFS graphs and SPARQL graph patterns is said to be \emph{query structure preserving} if and only if:
\begin{align}
\psi(G\models P) &= \psi(G)\models \psi(P)\\
\psi(P\texttt{ AND }P') &= \psi(P)\texttt{ AND }\psi(P')\\
\psi(P\texttt{ UNION }P') &= \psi(P)\texttt{ UNION }\psi(P')\\
\psi(P\texttt{ OPT }P') &= \psi(P)\texttt{ OPT }\psi(P')\\
\psi(P\texttt{ FILTER }K) &= \psi(P)\texttt{ FILTER }K
\end{align}
\end{definition}

As its name indicates, such a transformation preserves the structure of graph patterns. This is the case of most transformations proposed here since they are defined on basic graph patterns and extended to queries by applying them to basic graph patterns. The structure of $P$ must be preserved, but not be isomorphic to that of $\psi(P)$. For instance, the transformation $\tau$ may introduce extra \texttt{UNION} in the resulting pattern.

The induction lemma shows that if an entailment relation is complete for basic graph patterns, then it is complete for all graph patterns.

\begin{lemma}[Induction lemma]\label{lem:induction}
Let $\psi$ be a query-structure preserving transformation,
if for all RDFS graph $G$, basic graph pattern $B$ and map $\sigma$, $\psi(G)\models\sigma(\psi(B)) \text{ iff } G\models \sigma(B)$,
then for all graph pattern $P$, $\psi(G)\models\sigma(\psi(P)) \text{ iff } G\models \sigma(P)$.
\end{lemma}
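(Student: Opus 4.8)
The plan is to proceed by structural induction on the graph pattern $P$, mirroring exactly the inductive definition of SPARQL graph patterns and of compound graph pattern entailment (Definition~\ref{def:sparqlsemanswer}). The base case, where $P$ is a basic graph pattern, is precisely the hypothesis of the lemma, so there is nothing to prove there. For the inductive step, $P$ has one of the four forms $(P_1\texttt{ AND }P_2)$, $(P_1\texttt{ UNION }P_2)$, $(P_1\texttt{ OPT }P_2)$, $(P_1\texttt{ FILTER }K)$, and we may assume the statement for the immediate sub-patterns $P_1$ and $P_2$ (for every map, in particular).

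In each case the argument is the same three-step rewriting. First, unfold the left-hand side $\psi(G)\models\sigma(\psi(P))$ using the relevant query-structure-preservation equation, e.g. $\psi(P_1\texttt{ AND }P_2)=\psi(P_1)\texttt{ AND }\psi(P_2)$, so that the entailed pattern is built with the same top connective from $\psi(P_1)$ and $\psi(P_2)$. Second, apply the corresponding clause of Definition~\ref{def:sparqlsemanswer} to split this into conditions on $\psi(G)\models\sigma(\psi(P_i))$ (plus, for \texttt{FILTER}, the side condition $\sigma(K)=\top$, which $\psi$ leaves untouched). Third, invoke the induction hypothesis on each $P_i$ to replace every $\psi(G)\models\sigma(\psi(P_i))$ by $G\models\sigma(P_i)$, and read the resulting formula backwards through Definition~\ref{def:sparqlsemanswer} to obtain $G\models\sigma(P)$. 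Since each step is a biconditional, this yields the desired equivalence in both directions simultaneously. The \texttt{AND}, \texttt{UNION} and \texttt{FILTER} cases are completely routine in this scheme.

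The only case deserving care is \texttt{OPT}, because its clause in Definition~\ref{def:sparqlsemanswer} involves a universal quantification over maps: $G\models\sigma(P_1\texttt{ OPT }P_2)$ iff $G\models\sigma(P_1)$ and either $G\models\sigma(P_2)$ or, for all $\sigma'$ with $G\models\sigma'(P_2)$, one has $\sigma\bot\sigma'$. Two observations settle it. The incompatibility relation $\sigma\bot\sigma'$ is a purely syntactic property of the two maps, mentioning neither $G$ nor $\psi$, so it is preserved verbatim. And the induction hypothesis applied to $P_2$ holds for \emph{every} map, so the sets $\{\sigma'\mid G\models\sigma'(P_2)\}$ and $\{\sigma'\mid \psi(G)\models\sigma'(\psi(P_2))\}$ coincide; hence the quantified sub-formula translates faithfully. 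Combining this with the induction hypothesis applied to $P_1$, and to $P_2$ with the map $\sigma$ itself, closes the \texttt{OPT} case. The main (and only minor) obstacle is thus being precise about this quantifier: one must instantiate the induction hypothesis not only at $\sigma$ but at the arbitrary $\sigma'$ ranging in the quantifier, which is legitimate exactly because the hypothesis is stated ``for all $\ldots$ map $\sigma$''.
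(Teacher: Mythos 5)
Your proof is correct and follows essentially the same route as the paper: structural induction over the compound pattern forms, chaining the query-structure-preservation equations with the clauses of Definition~\ref{def:sparqlsemanswer} and the induction hypothesis, with the only delicate point being the universally quantified maps in the \texttt{OPT} clause. Your treatment of that point—observing that $\sigma\bot\sigma'$ is purely syntactic and that the induction hypothesis, holding for all maps, makes the sets of maps satisfying $P_2$ in $G$ and $\psi(P_2)$ in $\psi(G)$ coincide—is exactly the argument the paper gives, only spelled out in slightly more detail.
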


\begin{proof}[Proof of Lemma~\ref{lem:induction}]
The lemma itself is proved by induction:

\textbf{Base step} For $B$ a basic graph pattern and $G$ an RDFS graph, $\psi(G)\models\sigma(\psi(B)) \text{ iff } G\models \sigma(B)$, by hypothesis,

\textbf{Induction step} If for $P$ and $P'$, graph patterns, and $G$ RDFS graph and $\sigma$ map, $\psi(G)\models\sigma(\psi(P)) \text{ iff } G\models \sigma(P)$ and  $\psi(G)\models\sigma(\psi(P')) \text{ iff } G\models \sigma(P')$, then:
\begin{align*}
G\models \sigma(P~&\texttt{AND}~P') \text{ iff } G\models \sigma(P)\text{ and }G\models \sigma(P')\\
	 &\text{ iff } \psi(G)\models \psi(\sigma(P))\text{ and }\psi(G)\models \psi(\sigma(P'))\\
	 &\text{ iff } \psi(G)\models \sigma(\psi(P~\texttt{AND}~P'))\\
G\models \sigma(P~&\texttt{UNION}~P') \text{ iff } G\models \sigma(P)\text{ or }G\models \sigma(P')\\
	 &\text{ iff } \psi(G)\models \sigma(\psi(P))\text{ or }G\models \sigma(\psi(P'))\\
	&\text{ iff } \psi(G)\models \sigma(\psi(P~\texttt{UNION}~P'))\\
\begin{split}
G\models \sigma(P~&\texttt{OPT}~P') \text{ iff } G\models \sigma(P)\text{ and } [G\models \sigma(P')\\ 
& \text{ or } \forall \sigma'; G\models \sigma'(P'), \sigma\bot \sigma']
\end{split}\\
\begin{split}
	 &\text{ iff } \psi(G)\models \sigma(\psi(P))\text{ and } [\psi(G)\models \sigma(\psi(P'))\\ 
& \text{ or } \forall \sigma'; \psi(G)\models \sigma'(\psi(P')), \sigma\bot \sigma']
\end{split}\\
 &\text{ iff } \psi(G)\models \sigma(\psi(P~\texttt{OPT}~P'))\\
G\models \sigma(P~&\texttt{FILTER}~K) \text{ iff } G\models \sigma(P)\text{ and }\sigma(K)=\top\\
	&\text{ iff } \psi(G)\models \sigma(\psi(P))\text{ and }\sigma(K)=\top\\
	&\text{ iff } \psi(G)\models \sigma(\psi(P~\texttt{FILTER}~K)) 
\end{align*}
The only difficult point is in the \texttt{OPT} part, but since the induction step strictly preserves the set of maps satisfying a graph pattern, the universal quantification holds.
\end{proof}

\subsection{Completeness of partial non reflexive RDFS closure}

We first have to extend the completeness proof of the partial closure (Proposition~\ref{thm:rdfsentailment}) to the non reflexive case.

\begin{proof}[Proof of Proposition~\ref{thm:rdfnrxsentailment}]
The proof can be derived from Proposition~\ref{thm:rdfsentailment}.
Indeed, it should be shown that suppressing rules [RDFS8a] and [RDFS12a] does suppress all and only consequences
of the reflexivity of \smtt{sc} and \smtt{sp}.

[RDFS12a] generates $\langle c~\smtt{sc}~c\rangle$ which can be further used by [RDFS12b] and [RDFS11]. However, these two rules would only generate triples that are already in their premises. Hence the only new generated triple is the reflexivity triple.

[RDFS8a] generates $\langle p~\smtt{sp}~p\rangle$ which can be further used by [RDFS8b], [RDF2] and [RDFS9]. Similarly, 
these three rules would only generate triples that are already in their premises or in axiomatic triples. Hence the only new generated triple is the reflexivity triple.

Finally, both triples could be consumed by rules [RDFS6] and [RDFS7]. However, these rules would require constraining the \smtt{sp} or \smtt{sc} relations through \smtt{dom} or \smtt{range} statements respectively. This is not possible in genuine RDFS graphs.

In the other direction, the only way a model can contain $\langle p, p\rangle\in I_{EXT}(\iota'(\smtt{sp}))$ or $\langle c, c\rangle\in I_{EXT}(\iota'(\smtt{sc}))$ is through Definition~\ref{def:rdfsmodel} constraints:
\begin{itemize*} 
\item (1a) it is a triple of $G$. Then it is still in $\hat{G}\backslash\backslash H$;
\item (4a) and (5a) by reflexivity;
\item (4b), (6b) and (6c) which only apply to generate reflexive \smtt{sc} or \smtt{sp} statements when constraints are added to \smtt{sc} or \smtt{sp}.
\end{itemize*}
\end{proof}


\begin{proof}[Proof of Corrolary~\ref{thm:answersparqlmodulo}]
The proof is a simple consequence of Proposition~\ref{thm:rdfnrxsentailment}. Proving that:
$$\mathcal{A}^{\#}(\vec{B}, G, P)=\mathcal{A}(\vec{B}, \hat{G}\backslash\backslash P, P)$$
by Definition~\ref{def:answersparqlmodulo} and Proposition~\ref{prop:sparql}, is equivalent to proving that:
$$\{\sigma|_{\vec{B}}^{\vec{B}}| G \models_{RDFS}^{\text{nrx}} \sigma(P) \}=\{\sigma|_{\vec{B}}^{\vec{B}}| \hat{G}\backslash\backslash P \models_{RDF} \sigma(P) \}$$
and Proposition~\ref{thm:rdfnrxsentailment} together with Lemma~\ref{lem:induction} means that this is true if $G$ is satisfiable and genuine.
\end{proof}

\subsection{Completeness and complexity of the PSPARQL RDFS query encoding}

\begin{proof}[Proof of Proposition~\ref{thm:cpsparqlrdfsans}]
We use the same pattern as before, using Lemma~\ref{lem:psparqlbasecompleteness} to be proved below. 
Proving that:
$$\mathcal{A}^{\#}(\vec{B}, G, P)=\mathcal{A}^{\star}(\vec{B}, G, \tau(P))$$
by Definition~\ref{def:answersparqlmodulo} and Definition~\ref{def:answerpsparql}, is equivalent to proving that:
$$\{\sigma|_{\vec{B}}^{\vec{B}}| G \models_{RDFS}^{\text{nrx}} \sigma(P) \}=\{\sigma|_{\vec{B}}^{\vec{B}}| G \models_{PSPARQL} \sigma(\tau(P)) \}$$
and Lemma~\ref{lem:psparqlbasecompleteness} together with Lemma~\ref{lem:induction} means that this is true.
\end{proof}

The proof is longer than the previous ones because it does not rely on externally proved propositions. Instead, we need to prove the following lemma

\begin{lemma}\label{lem:psparqlbasecompleteness}
Given a basic SPARQL graph pattern $P$ and an RDFS graph $G$,
$$ G\models_{RDFS}^{\text{nrx}} P \text{ iff } G\models_{PSPARQL} \tau(P)$$
\end{lemma}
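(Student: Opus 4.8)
The plan is to reduce the left-hand side to a syntactic reachability condition and then recognise that $\tau$ internalises the closure rules of Table~\ref{tab:rdfs-rules} as path predicates. Since we restrict to satisfiable genuine RDFS graphs (as agreed after Definition~\ref{def:rdfsmodel}), Proposition~\ref{thm:rdfnrxsentailment}, together with the fact that for genuine graphs RDFS entailment restricted to the relevant vocabulary is captured by the rules of Table~\ref{tab:rdfs-rules} (\cite{munoz2009a}), gives that $G\models_{RDFS}^{\text{nrx}} P$ iff $closure(G)\models_{RDF} P$, i.e.\ (Proposition~\ref{def:rdfentail}) iff there is a map $\sigma$ on $\mathcal{T}(P)$ with $\sigma(t)\in closure(G)$ for every triple $t\in P$. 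Symmetrically, unfolding \texttt{UNION} and the definition of PRDF entailment (Definition~\ref{def:modelprdf}; recall that for ground endpoints $G\models_{PRDF}\langle a,R,b\rangle$ iff $G$ contains a path from $a$ to $b$ whose label sequence lies in $L^{*}(R)$, \cite{alkhateeb2009a}), $G\models_{PSPARQL}\tau(P)$ amounts to: for every $t\in P$ one may select one disjunct of $\tau(t)$ and extend $\sigma$ to its fresh variables so that the resulting ground PRDF triples hold in $G$. So the whole statement reduces to a per-triple claim: for every RDF triple $t$ and every assignment $\sigma$ of its subject and object, $\sigma(t)\in closure(G)$ iff $G$ satisfies some disjunct of $\sigma(\tau(t))$ under an extension of $\sigma$ by the fresh variables $?x,?p,?y$ of $\tau$. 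Since $\tau$ acts triple-wise and substitutes only in subject and object positions, $\tau(\sigma(B))$ and $\sigma(\tau(B))$ coincide up to renaming of fresh variables, which is exactly the form required when this lemma is later combined with Lemma~\ref{lem:induction} in Proposition~\ref{thm:cpsparqlrdfsans}.

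\textbf{The per-triple claim.} I would establish it shape by shape, reading off from Table~\ref{tab:rdfs-rules} which rules can produce a triple of each shape in a genuine graph and normalising derivations. For $t=\langle s,\smtt{sc},o\rangle$ only the \smtt{sc}-transitivity rule applies (reflexivity being excluded under the non reflexive semantics), so $\langle s,\smtt{sc},o\rangle\in closure(G)$ iff $\langle s,o\rangle\in\[[\smtt{sc}^{+}\]]_{G}$, matching $\tau(t)=\{\langle s,\smtt{sc}^{+},o\rangle\}$; the case $t=\langle s,\smtt{sp},o\rangle$ is symmetric. For $p\notin\{\smtt{sc},\smtt{sp},\smtt{type}\}$ the only rule yielding $\langle s,p,o\rangle$ is the \smtt{sp}-propagation rule, and unfolding it (using that in a genuine graph every subproperty chain ending in a $\rho$df term is trivial) shows $\langle s,p,o\rangle\in closure(G)$ iff there is $q$ with $\langle s,q,o\rangle\in G$ and $\langle q,p\rangle\in\[[\smtt{sp}^{*}\]]_{G}$, matching $\tau(t)=\{\langle s,?x,o\rangle,\langle ?x,\smtt{sp}^{*},p\rangle\}$ with $?x\mapsto q$. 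For $t=\langle s,\smtt{type},o\rangle$ the derivations use the \smtt{sc}-on-type rule, the \smtt{dom} rule and the \smtt{range} rule; since each \smtt{sc}-on-type step only appends an \smtt{sc} edge at the end, \smtt{dom} and \smtt{range} triples are never derived, and the ordinary-triple premises of the \smtt{dom}/\smtt{range} rules fall under the previous case, every derivation normalises and one gets: $\langle s,\smtt{type},o\rangle\in closure(G)$ iff (a) $\langle s,\smtt{type},A_{0}\rangle\in G$ and $\langle A_{0},o\rangle\in\[[\smtt{sc}^{*}\]]_{G}$ for some $A_{0}$, or (b) $\langle s,q,y\rangle\in G$ for some $y$ and $\langle q,o\rangle\in\[[\smtt{sp}^{*}\cdot\smtt{dom}\cdot\smtt{sc}^{*}\]]_{G}$ for some $q$, or (c) the symmetric variant $\langle y,q,s\rangle\in G$ and $\langle q,o\rangle\in\[[\smtt{sp}^{*}\cdot\smtt{range}\cdot\smtt{sc}^{*}\]]_{G}$ --- precisely the three \texttt{UNION} branches of $\tau(\langle s,\smtt{type},o\rangle)$, taking $?p\mapsto q$ and $?y\mapsto y$. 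The converse half of the claim (a matching path yields a derivation) is routine soundness, each traversed edge being replayed by one application of the corresponding rule of Table~\ref{tab:rdfs-rules}.

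\textbf{Assembling, and the main obstacle.} Combining the per-triple equivalences: $G\models_{RDFS}^{\text{nrx}} P$ iff some single map $\sigma$ on $\mathcal{T}(P)$ satisfies $\sigma(t)\in closure(G)$ for all $t\in P$; each such membership is, by the claim, equivalent to $G$ PSPARQL-satisfying $\sigma(\tau(t))$ after extending $\sigma$ by the fresh variables of a chosen disjunct; as the fresh variables introduced for distinct triples are pairwise distinct and occur nowhere else, these extensions merge into one map witnessing $G\models_{PSPARQL}\sigma(\tau(P))$, and conversely restricting any such witness to $\mathcal{T}(P)$ yields a homomorphism of $P$ into $closure(G)$. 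I expect the only genuine difficulty to be the \smtt{type} case of the per-triple claim: the confluence argument showing that every closure derivation of a \smtt{type} triple can be permuted into the stated ``\smtt{sp}-chain, then \smtt{dom}/\smtt{range}, then \smtt{sc}-chain'' normal form, so that the three \texttt{UNION} branches of $\tau$ are simultaneously complete and individually sound. The remaining points --- the reduction from the partial non reflexive closure to the Table~\ref{tab:rdfs-rules} saturation on genuine graphs, and the fresh-variable bookkeeping --- are routine.
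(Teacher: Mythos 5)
Your proposal is correct and takes essentially the same route as the paper's own proof: a triple-wise case analysis of the four cases of $\tau$, equating RDFS entailment of $\sigma(t)$ (via closure membership, i.e.\ chains of \smtt{sc}/\smtt{sp}/\smtt{dom}/\smtt{range} edges in $G$) with the existence of paths in $G$ witnessing the corresponding disjunct of $\sigma(\tau(t))$, then assembling the per-triple equivalences over the basic graph pattern. The only difference is one of presentation: you spell out the rule-level completeness/normalization argument (notably for the \smtt{type} case) that the paper asserts without detail, whereas the paper phrases the path direction in terms of PRDF homomorphisms.
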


The proof relies on the notion of PRDF homomorphism, a particular sort of map \cite{alkhateeb2009a}:

\begin{definition}[PRDF homomorphism] \label{def:prdfhomomorphism}
Let $G$ be a GRDF graph, and $H$ be a PRDF graph. A PRDF homomorphism from $H$ into $G$ is a map $\pi$ from $term(H)$ into $term(G)$ such that $\forall \langle s,R,o \rangle \in H$, either
\begin{enumerate}[(i)]
\item [($i$)] the empty word $\epsilon \in L^*(R)$ and $\pi(s)=\pi(o)$; or
\item [($ii$)] $\exists \langle n_0,p_1,n_1 \rangle ,\ldots, \langle n_{k-1},p_k,n_k\rangle$ in $G$ such that $n_0=\pi(s)$, $n_k=\pi(o)$, and $p_1 \cdot \ldots \cdot p_k \in L^*(\pi(R))$.
\end{enumerate} 
\end{definition}

\begin{proof}[Proof of Lemma~\ref{lem:psparqlbasecompleteness}]
Since the answers to a graph pattern can be made by joining the answers to its triple patterns \cite{alkhateeb2009a}, it is sufficient to show that answering a triple $t$ is equivalent to answering its transformed triple $t'=\tau(t)$. Hence we consider each case of $\tau$:

\begin{itemize}
	\item Let $t = \langle \mathcal{X}, sc, \mathcal{Y} \rangle$, then $t' = \langle \mathcal{X}, sc^+, \mathcal{Y} \rangle$
	
	($\Rightarrow$) If $\sigma \in \mathcal{A}^{\#}(\vec{B}, G, t)$, then either $\langle \sigma(\mathcal{X}), sc, \sigma(\mathcal{Y}) \rangle \in G$ or $\langle \sigma(\mathcal{X})=\mathcal{X}_0, sc, \mathcal{Y}_0 \rangle$, $\ldots$, $\langle \mathcal{X}_n, sc, \sigma(\mathcal{Y})=\mathcal{Y}_n \rangle \in G$.
 In both cases, $\sigma \in \mathcal{A}^{\star}(\vec{B}, G, t')$.

($\Leftarrow$) If $\sigma \in \mathcal{A}^{\star}(\vec{B}, G, t')$, then $\langle \sigma(\mathcal{X})=\mathcal{X}_0, sc, \mathcal{Y}_0 \rangle$, $\ldots$, $\langle \mathcal{X}_n, sc, \sigma(\mathcal{Y})=\mathcal{Y}_n \rangle \in G$. In what follows, $\langle \sigma(\mathcal{X}), sc, \sigma(\mathcal{Y}) \rangle \in \hat{G}$. So, $\sigma \in \mathcal{A}^{\#}(\vec{B}, G, t)$.

\item Let $t = \langle \mathcal{X}, sp, \mathcal{Y} \rangle$, then $t' = \langle \mathcal{X}, sp^+, \mathcal{Y} \rangle$
	
	($\Rightarrow$) If $\sigma \in \mathcal{A}^{\#}(\vec{B}, G, t)$, then either $\langle \sigma(\mathcal{X}), sp, \sigma(\mathcal{Y}) \rangle \in G$ or $\langle \sigma(\mathcal{X})=\mathcal{X}_0, sp, \mathcal{Y}_0 \rangle$, $\ldots$, $\langle \mathcal{X}_n, sp, \sigma(\mathcal{Y})=\mathcal{Y}_n \rangle \in G$.
 In both cases, $\sigma \in \mathcal{A}^{\star}(\vec{B}, G, t')$.

($\Leftarrow$) If $\sigma \in \mathcal{A}^{\star}(\vec{B}, G, t')$, then $\langle \sigma(\mathcal{X})=\mathcal{X}_0, sp, \mathcal{Y}_0 \rangle$, $\ldots$, $\langle \mathcal{X}_n, sp, \sigma(\mathcal{Y})=\mathcal{Y}_n \rangle \in G$. In what follows, $\langle \sigma(\mathcal{X}), sp, \sigma(\mathcal{Y}) \rangle \in \hat{G}$. So, $\sigma \in \mathcal{A}^{\#}(\vec{B}, G, t)$.

\item Let $t = \langle \mathcal{X}, p, \mathcal{Y} \rangle$, then $t'= \{\langle \mathcal{X},?p,\mathcal{Y} \rangle, \langle ?p,\smtt{sp}^*,p\rangle\}$.

	($\Rightarrow$) If $\sigma \in \mathcal{A}^{\#}(\vec{B}, G, t)$, then either $\langle \sigma(\mathcal{X}), p, \sigma(\mathcal{Y}) \rangle \in G$ or $\langle \sigma_1(\mathcal{X}), \sigma_1(?p)=p_0, \sigma_1(\mathcal{Y}) \rangle$, $\langle p_0, sp, p_1 \rangle$, $\ldots$, $\langle p_{n-1}, sp, p_n=p \rangle \in G$. In the first case, the map $\sigma$ is a PRDF homomorphism from $t'$ into $G$. In the second case, the map $\sigma_1$ is a PRDF homomorphism from $t'$ into $G$. The restriction of $\sigma_1$ to the variables of the query $\vec{B}$ is $\sigma$. Hence in both cases, $\sigma \in \mathcal{A}^{\star}(\vec{B}, G, t')$.

($\Leftarrow$) If $\sigma|_{\vec{B}}^{\vec{B}} \in \mathcal{A}^{\star}(\vec{B}, G, t')$, then $\langle \sigma|_{\vec{B}}^{\vec{B}}(\mathcal{X}), \sigma|_{\vec{B}}^{\vec{B}}(?p)=p_0, \sigma|_{\vec{B}}^{\vec{B}}(\mathcal{Y}) \rangle$, $\langle p_0, sp, p_1 \rangle$, $\ldots$, $\langle p_{n-1}, sp, p_n=p \rangle \in G$. In what follows, $\langle \sigma|_{\vec{B}}^{\vec{B}}(\mathcal{X}), p, \sigma|_{\vec{B}}^{\vec{B}}\mathcal{Y}) \rangle \in \hat{G}$. So, $\sigma|_{\vec{B}}^{\vec{B}} \in \mathcal{A}^{\#}(\vec{B}, G, t)$.

\item Let $t = \langle \mathcal{X}, type, \mathcal{Y} \rangle$, 

($\Rightarrow$) If $\sigma \in \mathcal{A}^{\#}(\vec{B}, G, t)$, then at least one of the following cases is satisfied:
\begin{itemize}
	\item The triples $\langle \sigma(\mathcal{X}), type, o_1 \rangle$, $\langle o_1, sc, o_2 \rangle$, $\ldots$, $\langle o_{n-1}, sc, o_n = \sigma(\mathcal{Y}) \rangle$ belong to $G$. So that $\sigma$ is a PRDF homomorphism from the first part of $t'$ into $G$ and $\sigma \in \mathcal{A}^{\star}(\vec{B}, G, t')$.
	
	\item The triples $\langle \sigma_1(\mathcal{X}), \sigma_1(?p_1)=p_1, \sigma_1(?y)=y \rangle$, $\langle p_1, sp, p_2 \rangle$, $\ldots$, $\langle p_{n-1}, sp, p_n \rangle$, $\langle p_n, dom, o_1 \rangle$, $\langle o_1, sc, o_2 \rangle$, $\ldots$, $\langle o_{n-1}, sc, o_n = \sigma_1(\mathcal{Y}) \rangle$ belong to $G$. So that $\sigma_1$ is a PRDF homomorphism from the second part of $t'$ into $G$. The restriction of $\sigma_1$ to the variables of the query $\vec{B}$ is $\sigma$. Hence $\sigma \in  \mathcal{A}^{\star}(\vec{B}, G, t')$.

	\item The triples $\langle \sigma_1(?y)=y , \sigma_1(?p_1)=p_1, \sigma_1(\mathcal{X}) \rangle$, $\langle p_1, sp, p_2 \rangle$, $\ldots$, $\langle p_{n-1}, sp, p_n \rangle$, $\langle p_n, range, o_1 \rangle$, $\langle o_1, sc, o_2 \rangle$, $\ldots$, $\langle o_{n-1}, sc, o_n = \sigma_1(\mathcal{Y}) \rangle$ belong to $G$. So that $\sigma_1$ is a PRDF homomorphism from the second part of $t'$ into $G$. The restriction of $\sigma_1$ to the variables of the query $\vec{B}$ is $\sigma$. Hence $\sigma \in  \mathcal{A}^{\star}(\vec{B}, G, t')$.
	
\end{itemize}

($\Leftarrow$) If $\sigma|_{\vec{B}}^{\vec{B}} \in \mathcal{A}^{\star}(\vec{B}, G, t')$, then there exist three cases. The first case is that the triples
$\langle \sigma|_{\vec{B}}^{\vec{B}}(\mathcal{X}), type, o_1 \rangle$, $\langle o_1, sc, o_2 \rangle$, $\ldots$, $\langle o_{n-1}, sc, o_n = \sigma|_{\vec{B}}^{\vec{B}}(\mathcal{Y}) \rangle$ belong to $G$. Using the sub-class RDFS rules, $\langle \sigma(\mathcal{X}), type, \sigma(\mathcal{Y}) \rangle \in \hat{G}$. In the second case, the triples $\langle \sigma(\mathcal{X}), \sigma(?p_1)=p_1, y \rangle$, $\langle p_1, sp, p_2 \rangle$, $\ldots$, $\langle p_{n-1}, sp, p_n \rangle$, $\langle p_n, dom, o_1 \rangle$, $\langle o_1, sc, o_2 \rangle$, $\ldots$, $\langle o_{n-1}, sc, o_n = \sigma(\mathcal{Y}) \rangle$ belong to $G$. Using the sub-property RDFS rules, the triples $\langle \sigma(\mathcal{X}), p_2, y \rangle$, $\ldots$, $\langle \sigma(\mathcal{X}), p_n, y \rangle \in \hat{G}$. So that $\langle \sigma(\mathcal{X}), type, o_1 \rangle$, $\ldots$, $\langle \sigma(\mathcal{X}), type, o_n=\sigma(\mathcal{Y}) \rangle \in \hat{G}$ (it is easy to prove the same thing for the third case with \smtt{range} instead of \smtt{dom}). Hence $\sigma|_{\vec{B}}^{\vec{B}} \in \mathcal{A}^{\#}(\vec{B}, G, t)$.

\end{itemize}

\end{proof}

\begin{proof}[Proof of Proposition~\ref{prop:PSPARQLcmpl}]
The complexity of answering queries modulo RDF Schema through the PSPARQL RDFS encoding is the sequential combination of that of the two components:
encoding ($\tau$) and evaluating the resulting query. The complexity of the encoding is linear in terms of triples in the query graph pattern $P$, moreover its size is not more than seven times that of $P$ (if all triples are type statements).
Since the complexity of PSPARQL query answering has been proved to be PSPACE-complete in \cite{alkhateeb2009a}, the complexity of \textsc{$\mathcal{A}^{\star}$-Answer checking} is PSPACE-complete.
\end{proof}

\subsection{Completeness of NSPARQL query answering}

\begin{proof}[Proof of Proposition~\ref{prop:nsparqlanswer}]
This is similar to the previous proof.
The proof is a simple consequence of Proposition~\ref{prop:nsparqlcompl}. 
Proving that:
$$\mathcal{A}^{\#}(\vec{B}, G, P)=\mathcal{A}^{o}(\vec{B}, G, \phi(P))$$
by Definition~\ref{def:answersparqlmodulo} and Definition~\ref{def:nsparqlanswer}, is equivalent to proving that:
$$\{\sigma|_{\vec{B}}^{\vec{B}}| G \models_{RDFS}^{\text{nrx}} \sigma(P) \}=\{\sigma|_{\vec{B}}^{\vec{B}}| G \models_{nSPARQL} \sigma(\phi(P)) \}$$
and Proposition~\ref{prop:nsparqlcompl} together with Lemma~\ref{lem:induction} means that this is true.
\end{proof}





\subsection{Complexity of cpSPARQL evaluation}


\begin{proof}[Proof of Theorem~\ref{th:cpsparqlrecomplexity}]
Let $R$ be the a constrained regular expression, $G$ be an RDF graph,  $\langle a, b \rangle$ be a pair of nodes, and $A_R$ be the automaton recognizing the language of $R$. 

The automaton of $R$ can be constructed as described in \S\ref{sec:complexityresult} in \textsc{nlogspace} (as for the usual automata \cite{G10,mendelzon1995a}). 
For simplicity and without loss of generality, we use the \smtt{next} axis to illustrate the construction of the product automaton. 
This is because the axis determines the node to be checked (subject, predicate or object) and thus does not affect the construction. 
The construction of the product automaton is done as follows:  
\begin{itemize}
\item If $R = axis::]?x: TRUE[$ then checking whether the pair $\langle a,b \rangle$ is in $\[[R\]]_G$ can be done in $O(|G|)$ since it is sufficient to substitute each node $n$ to $?x$ and check whether $\langle a,n,b \rangle$ is in $G$ (according to the axis).

\item Otherwise, call the Eval algorithm (where $D_0$ is defined as done in \cite{perez2010a}). If $\langle s_i, next::[FILTER (?x)], s_j \rangle \in A_R$ and $\langle n_i, next::p, n_j \rangle \in G$, then add $\langle s_j, n_j \rangle$ to the product automaton if $p$ satisfies the SPARQL filter constraint by substituting only the node $p$ to the variable $?x$. Checking if a node $n$ satisfies a SPARQL filter constraint can be done in $O(1)$. 
 
Additionally, if $\langle ?x, next\!::\!p, ?y \rangle. FILTER (?x,?y)$ $\in A_R$ and $\langle n_i, next::p, n_j \rangle \in G$, then add $\langle s_j, n_j \rangle$ to the product automaton if $\langle n_i, next::p, n_j \rangle \in G$ and the SPARQL filter constraint is satisfied by substituting the node $n_i$ to the variable $?x$ and $n_j$ to the variable $?y$.
	
	
\end{itemize}

So, the product automaton ($G \times \mathcal{A}_R$) can be obtained in time $O(|G|\times|R|)$. 
Hence, checking if the pair $\langle a, b \rangle \in \[[R\]]_{G}$ is equivalent to checking if the language accepted by ($G \times \mathcal{A}_R$) is not empty, which can be done in $O(|G|\times|R|)$ (as in the case of usual regular expressions \cite{G10,mendelzon1995a}).
\end{proof}

\subsection{Correctness and completeness of RDFS translation to PSPARQL}


The proof of Theorem~\ref{th:taucorrect} follows from the results in \cite{perez2010a} except the last step.

\begin{proof}[Proof of Theorem~\ref{th:taucorrect}]
We need to prove only the last step since all other transformation steps are the same as the ones in \cite{perez2010a}. 
That is $\langle \sigma(x),\sigma(y)\rangle \in$ $\[[\langle x,p,y \rangle\]]_{G}^{rdfs}$ iff $\langle \sigma(x),\sigma(y)\rangle \in$ $\[[\langle x,\tau(p),y \rangle\]]_{G}$.
 
\begin{itemize}
	\item ($\Rightarrow$) Assume that $\langle \sigma(x),\sigma(y) \rangle \in$ $\[[\langle x,p,y \rangle\]]_{G}^{rdfs}$. In this case, there exists $p_1$ such that ($p_1$ $sp$ $p_2$ $sp$ $\ldots$ $sp$ $p_n=p$) and $\langle \sigma(x), p_1,\sigma(y) \rangle \in G$ as well as $\langle \sigma(x),$ \smtt{next::}$p_1,\sigma(y) \rangle \in G$. Let us consider now the transformed triple $\tau(t) =\langle x,($\smtt{next::}$\psi),y\rangle$ (where $\psi=[?p: \{ \langle ?p,$ (\smtt{next::sp}$)^*, p \rangle \}]$). The maps for the variable $?p$ will be $\{ \langle ?p,p_i \rangle \ | \ i=1, \ldots, n\}$ (since $\[[\psi\]]_{G}= \{ \langle p_i,p \rangle \ | \ i=1, \ldots, n\}$). According to Definitions~\ref{def:cresemantics} and \ref{def:triple-evaluation}, 
	 $\langle \sigma(x),\sigma(y)\rangle \in$ $\[[\langle x,($\smtt{next::}$\psi),y\rangle\]]_{G}$ iff $\langle \sigma(x),\sigma(y)\rangle \in G$ and $p_1 \in \[[\psi\]]_{G}$, and this condition holds.  
	
	\item ($\Leftarrow$) We have to prove that, if $\langle \sigma(x),\sigma(y)\rangle \! \in \! \[[\langle x,$ $($\smtt{next::}[$\psi]),y\rangle\]]_{G}$ (with $\psi=?p: \{ \langle ?p,$ (\smtt{next\!::\!sp}$)^*, p \rangle \}$), then $\langle \sigma(x),\sigma(y)\rangle \!\in\!\[[\langle x,p,y$ \\ $ \rangle\]]_{G}^{rdfs}$. Assume that $\langle \sigma(x),\sigma(y)\rangle \in$ $\[[\langle x,($\smtt{next::}$\psi), y\rangle\]]_{G}$. In this case,  there exists $p_1$ such that $\langle \sigma(x),$ \smtt{next::}$p_1,\sigma(y)\rangle \in G$ and $p_1 \in \[[\psi\]]_{G}$, that is, $\langle p_1,$\smtt{next::sp}$,p_2 \rangle$, $\ldots$, $\langle p_{n-1},$\smtt{next::sp}$,p_n=p \rangle \in G$. Therefore, $\langle \sigma(x),\sigma(y) \rangle \in$ $\[[\langle x,p,y \rangle\]]_{G}^{rdfs}$ since $\langle p_1,($\smtt{next::sp}$)^*,p \rangle$ and $\langle \sigma(x),$ \smtt{next::}$p_1,$ $\sigma(y)\rangle \in G$.      		
\end{itemize}
\end{proof}

\subsection{Expressiveness of nSPARQL and cpSPARQL}

\begin{proof}[Proof of Theorem~\ref{theorem:counterexample}]
Consider, without loss of generality, RDF graphs containing a predicate $s$ whose range is the set of integers.
If one wants to select nodes which have a $s$-transition whose value is over $3$, this could be expressed by the following constrained regular expression:
$$R \!=\! self\!::\![?s\!:\!\{\langle ?n, next\!::\!s, ?s\rangle . FILTER (?s\! >3)\}]$$
Consider a graph $G$ with two triples $\langle u, s, 2\rangle$ and $\langle v, s, 4\rangle$.
The evaluation of $R$ will return $\[[R\]]_{G} = \{\langle v, v\rangle\}$.

A nSPARQL nested regular expression $R'$ corresponding to $R$, should be able to select the pair $\langle v, v\rangle$ as an answer.
However, the two subgraphs made of the triples in $G$ are isomorphic with respect to their structure.
Hence, any nSPARQL nested regular expression retrieving one of them (a node which is the source of a $s$-edge) will retrieve both of them.

Even assuming that literals are followed and may be constrained by value, which is not the case in the current definition of nSPARQL, it would be necessary to enumerate the $s$-values larger than $3$ (say $4, 5\dots$) to design an expression such as:
$$R'=self::[next::s/self::(4|5|\ldots)]$$
However, there is an infinite number of such values and for the queries to be strictly equivalent, i.e., to provide the same answers for any graph, it is necessary to cover them all.
Indeed, if one value is missing, then it is possible to create a graph $G$ for which the answers to $R$ and $R'$ do not coincide.

It is thus not possible to express a query equivalent to $R$ in nSPARQL.
\end{proof}

\begin{proof}[Proof of Theorem~\ref{theorem:cpSPARQLcompletness}]
The equivalence of the cpSPARQL encoding of nested regular expressions (trans) is given by induction on the structure of nested regular expressions. 
\begin{itemize}
	\item if $R$ is either \smtt{axis} or \smtt{axis}::a, then $trans(R)=R$ and thus $\[[R\]]_G = \[[trans(R)\]]_G$.
	
	\item Now assume that $\[[R1\]]_G = \[[trans(R1)\]]_G$ and $\[[R2\]]_G = \[[trans(R2)\]]_G$,
 then $\[[R1|R2\]]_G = $ $\[[trans(R1)\]]_G$$ \cup $ $\[[trans(R2)\]]_G = $ $\[[trans(R1)|$ $trans(R2)\]]_G$ $ = $ $\[[trans(R1|R2)\]]_G$ (based the definition of regular languages). The same applies for the concatenation $\[[R1/R2\]]_G$ and the closure $(R_1)^*$.
 
 \item If $R=R_1::[R_2]$, then $trans(R)=R_1::[\psi]$, where $\psi = ?x:\{\langle ?x, trans(R_2), ?y \rangle\}$. Based on Definition~\ref{def:nresemantics},  $\[[R_1::[R_2]\]]_{G}$ = $~\{\langle x, y\rangle$ $|  \exists z,w$ $\wedge \langle z,w\rangle\in \[[R_2\]]_{G}\}$. If $\langle z,w\rangle\in \[[R_2\]]_{G}$, then $\langle z,w\rangle\in \[[trans(R2)\]]_{G}$ by substituting $z$ and $w$ to the variables $?x$ and $?y$, respectively (Definitions~\ref{def:satconstarint} and \ref{def:cresemantics}). 

\end{itemize}
\end{proof}

%

\bibliographystyle{abbrv}
\bibliography{semweb,cpsparql} 

\clearpage
\setcounter{tocdepth}{2}

\tableofcontents

\end{document}